\def\IsExtended{1}
\if 1\IsExtended
\documentclass[envcountsame, a4paper]{article}
\else
\documentclass[orivec,envcountsame, a4paper]{llncs}
\usepackage[paperheight=235mm, paperwidth=155mm,textwidth=12.2cm,textheight=19.3cm]{geometry}
\fi

\newcommand{\ourtitle}{A Sorted Datalog Hammer for Supervisor Verification Conditions Modulo Simple Linear Arithmetic}

\usepackage{amsmath}

\usepackage{amsfonts}
\usepackage{amssymb}
\usepackage{amsthm}

\usepackage[utf8]{inputenc}
\usepackage{newtxtext,newtxmath} 
\DeclareMathAlphabet{\mathcal}{OMS}{cmsy}{m}{n} 
\frenchspacing

\usepackage{paralist} 
\usepackage{bbding}
\usepackage{graphicx}

 \usepackage{comment}

\usepackage{url}
\usepackage[bookmarks=false,pdfpagelabels,bookmarksnumbered,bookmarksopen,bookmarksopenlevel=1,colorlinks=true,
citecolor=black, filecolor=black, linkcolor=black, menucolor=black,
urlcolor=black, pdftitle={\ourtitle}, pdfauthor={Martin Bromberger, Irina Dragoste, Rasha Faqeh, Christof Fetzer, Larry Gonzalez, Markus Kroetzsch, Maximilian Marx, Harish K Murali, Christoph Weidenbach},
pdfkeywords={Datalog, arithmetic constraints, Bernays-Schoenfinkel first-order logic, SUPERLOG, SMT}]{hyperref}
\urlstyle{rm}

\usepackage{listings} 
\lstset{
	numbers=left,
    basicstyle=\small\ttfamily,
    columns=fullflexible,
	numberstyle=\tiny,
	numbersep=5pt,           
    xleftmargin=10pt,
	numberblanklines=true,
    frame=tb,                
	keepspaces=true,
	showlines=true,
	morekeywords={-> . , || ( )},
	captionpos=b,  
}
\lstdefinestyle{embedded}{
	numbers=none,
	frame=none,
	xleftmargin=0cm,
	backgroundcolor=\color{Lavender},
	framesep=1pt,
	aboveskip=3pt,
	belowskip=3pt,
    captionpos=b,                    
}
\lstdefinestyle{small}{
	basicstyle=\linespread{0.9}\footnotesize,
}

\mathchardef\hyphenmathcode=\mathcode`\-

\let\origlstlisting=\lstlisting
\let\endoriglstlisting=\endlstlisting

\usepackage{xcolor} 

\usepackage{wrapfig} 

\usepackage[english]{babel}




\theoremstyle{definition}

\if 1\IsExtended

\newenvironment{extendedonly}{}{}
\excludecomment{notinextended}

\theoremstyle{definition}
\newtheorem{theorem}{Theorem}
\newtheorem{lemma}[theorem]{Lemma}

\newtheorem{example}[theorem]{Example}
\newtheorem{definition}[theorem]{Definition}

\else

\excludecomment{prooftoggle}
\excludecomment{extendedonly}
\newenvironment{notinextended}{}{}

\usepackage[subtle]{savetrees}

\fi




\newcommand{\RealS}{\mathcal{R}}
\newcommand{\IntS}{\mathcal{Z}}
\newcommand{\FolS}{\mathcal{F}}
\newcommand{\Fol}{\mathbb{F}}

\newcommand{\dval}{\ensuremath{\operatorname{dvals}}}
\newcommand{\aval}{\ensuremath{\operatorname{avals}}}
\newcommand{\dfacts}{\ensuremath{\operatorname{dfacts}}}

\newcommand{\depend}{\ensuremath{\operatorname{depend}}}
\newcommand{\connectedArgs}{\ensuremath{\operatorname{conArgs}}}
\newcommand{\connectedIneqs}{\ensuremath{\operatorname{conIneqs}}}


\newcommand{\sort}{\ensuremath{\operatorname{sort}}}

\newcommand{\tpfunction}{\ensuremath{\operatorname{tp-function}}}
\newcommand{\tpfunctions}{\ensuremath{\operatorname{tp-functions}}}
\newcommand{\wti}{\ensuremath{\operatorname{wti}}}
\newcommand{\tg}{\ensuremath{\operatorname{wtis}}}
\newcommand{\epfunction}{\ensuremath{\operatorname{ep-function}}}
\newcommand{\epfunctions}{\ensuremath{\operatorname{ep-functions}}}

\newcommand{\iPart}{\ensuremath{\operatorname{iPart}}}
\newcommand{\testpoints}{\ensuremath{\operatorname{tps}}}
\newcommand{\iBorders}{\ensuremath{\operatorname{iEP}}}


\newcommand{\aGnd}{\ensuremath{\operatorname{agnd}}}

\newcommand{\AG}{\ensuremath{\operatorname{A}}}

\newcommand{\PAG}{\ensuremath{\operatorname{PA}}}

\newcommand{\Real}{\mathbb{R}}

\newcommand{\Int}{\mathbb{Z}}
\newcommand{\Nat}{\mathbb{N}}


\newcommand{\varset}{\mathcal{X}}

\newcommand{\inta}{\mathcal{A}}
\newcommand{\sigval}{\mathcal{A}}
\newcommand{\vars}{\ensuremath{\operatorname{vars}}}
\newcommand{\atoms}{\ensuremath{\operatorname{atoms}}}

\newcommand{\dom}{\operatorname{dom}}
\newcommand{\cdom}{\operatorname{codom}}
\newcommand{\comp}{\operatorname{comp}} 
\newcommand{\mGnd}{\operatorname{gnd}}
\newcommand{\mMGU}{\operatorname{mgu}} 



\newcommand{\dblquotes}{}


\newcommand{\tren}{\operatorname{tren}}
\newcommand{\tfacts}{\operatorname{tfacts}}
\newcommand{\sfacts}{\operatorname{sfacts}}


\newcommand{\LA}{\ensuremath{\operatorname{LA}}}

\newcommand{\SB}{\ensuremath{\operatorname{SLA}}}
\newcommand{\BS}{\ensuremath{\operatorname{BS}}}

\newcommand{\SP}{\ensuremath{\operatorname{P}}}
\newcommand{\LAOP}{\ensuremath{\operatorname{\triangleleft}}}

\newcommand{\HBS}{\ensuremath{\operatorname{HBS}}}



\newcommand{\SpeedTable}{\ensuremath{\operatorname{SpeedTable}}}

\newcommand{\Speed}{\ensuremath{\operatorname{Speed}}}

\newcommand{\IgnDeg}{\ensuremath{\operatorname{IgnDeg}}}

\newcommand{\ResDegArgs}{\ensuremath{\operatorname{ResArgs}}}
\newcommand{\Conjecture}{\ensuremath{\operatorname{Conj}}}
\newcommand{\Rpm}{\operatorname{Rpm}}

\newcommand{\myparagraph}[1]{\smallskip \noindent{\textbf{#1}}}

\newcommand{\figref}[1]{Fig.~\ref{fig:#1}}

	
\newcommand{\tspace}[1]{ 
\newcount\foo
\foo=#1
\loop
{\,}
\advance \foo -1
\ifnum \foo>0
\repeat
}

\definecolor{commentcolor}{gray}{0.5}



\pagestyle{plain}

\title{\ourtitle}
\if 1\IsExtended
\author{Martin Bromberger\\ Max Planck Institute for Informatics\\ Saarland Informatics Campus,  Saarbr\"ucken, Germany\\  \and
  Irina Dragoste\\ TU Dresden, Dresden, Germany\\ \and
  Rasha Faqeh\\ TU Dresden, Dresden, Germany\\ \and
  Christof Fetzer\\ TU Dresden, Dresden, Germany\\ \and
  Larry Gonz\'alez\\ TU Dresden, Dresden, Germany\\ \and
  Markus Kr\"otzsch\\ TU Dresden, Dresden, Germany\\ \and
  Maximilian Marx\\ TU Dresden, Dresden, Germany\\ \and
  Harish K Murali\\ Max Planck Institute for Informatics\\ Saarland Informatics Campus, Saarbr\"ucken, Germany\\ IIITDM Kancheepuram, Chennai, India\\ \and
  Christoph Weidenbach \\ Max Planck Institute for Informatics\\ Saarland Informatics Campus, Saarbr\"ucken, Germany\\}
\else
\author{Martin~Bromberger\inst{1,\Envelope}\and Irina~Dragoste\inst{2} \and Rasha~Faqeh\inst{2} \and Christof~Fetzer\inst{2} \and Larry Gonz\'alez\inst{2} \and Markus~Kr\"otzsch\inst{2} \and Maximilian~Marx\inst{2} \and Harish K Murali\inst{1,3} \and Christoph~Weidenbach\inst{1}}

\institute{Max Planck Institute for Informatics, Saarland Informatics Campus, Saarbr\"ucken, Germany\and TU Dresden, Dresden, Germany\and IIITDM Kancheepuram, Chennai, India}
\titlerunning{A Sorted Datalog Hammer}
\fi


\begin{document}

\maketitle

\begin{abstract}
  In a previous paper, we have shown that clause sets belonging to the Horn Bernays-Sch\"onfinkel fragment over simple linear real arithmetic (HBS(SLR))
  can be translated into HBS clause sets over a finite set of first-order constants.
  The translation preserves validity and satisfiability and 
  it is still applicable if we extend our input with positive universally or existentially quantified verification conditions (conjectures).
  We call this translation a Datalog hammer. The combination of its implementation in SPASS-SPL with the Datalog reasoner VLog
  establishes an effective way of deciding verification conditions in the Horn fragment.
  We verify supervisor code for two examples: a lane change assistant in a car and an electronic control unit of a supercharged combustion engine.
  
  In this paper, we improve our Datalog hammer in several ways: 
  we generalize it to mixed real-integer arithmetic and finite first-order sorts;
  we extend the class of acceptable inequalities beyond variable bounds and positively grounded inequalities; and 
  we significantly reduce the size of the hammer output by a soft typing discipline.
  We call the result the sorted Datalog hammer. It not only allows us to handle more complex supervisor code and
  to model already considered supervisor code more concisely, 
  but it also improves our performance on real world benchmark examples. Finally, we replace the before
  file-based interface between SPASS-SPL and VLog by a close coupling resulting in a single executable binary.
\end{abstract}

\section{Introduction} \label{sec:intro}

Modern dynamic dependable systems (e.g., autonomous driving) continuously update software components to fix bugs and to introduce new features.
However, the safety requirement of such systems demands software to be safety certified before it can be used, which is typically a lengthy process
that hinders the dynamic update of software.
We adapt the \emph{continuous certification} approach \cite{FaqehFH0KKSW20} for variants of safety critical software components
using a \emph{supervisor} that guarantees important aspects through \emph{challenging}, see \figref{supervisor-arch}.
Specifically, multiple processing units run in parallel -- \emph{certified} and \emph{updated not-certified} variants that produce output as \emph{suggestions} and \emph{explications}.
The supervisor compares the behavior of variants and analyses their explications. The supervisor itself consists of a rather small set
of rules that can be automatically verified and run by a reasoner such as SPASS-SPL. In this paper we concentrate on the further development
of our verification approach through the sorted Datalog hammer.

\begin{extendedonly}
\begin{figure}[t]
	\begin{center}
		\includegraphics[scale=0.6]{./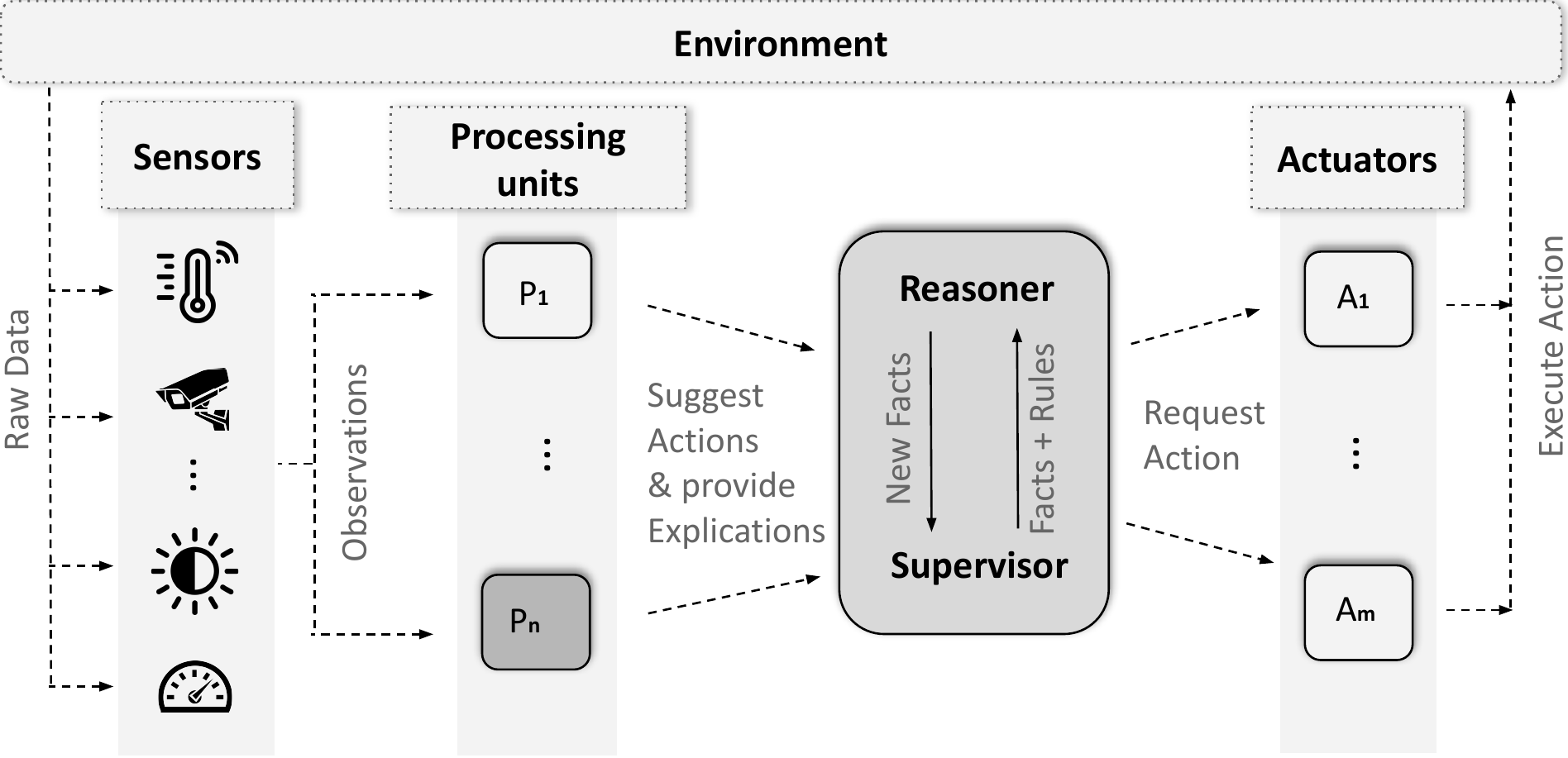}
        \caption{The supervisor architecture.}
        \label{fig:supervisor-arch}
    \end{center}
\end{figure}
\end{extendedonly}
\begin{notinextended}
\begin{figure}[h]
	\begin{center}
	     \vspace{-20pt}
		\includegraphics[scale=0.5]{./include/supervisor-arch.pdf}
        \caption{The supervisor architecture.}
        \label{fig:supervisor-arch}
    \end{center}
\end{figure}
\end{notinextended}

While supervisor safety
conditions formalized as existentially quantified properties can often
already be automatically verified,
conjectures about invariants requiring universally quantified properties are a further challenge.
Analogous to the Sledgehammer project~\cite{BoehmeNipkow10} of Isabelle~\cite{NipkowEtAl02} that translates higher-order logic
conjectures to first-order logic (modulo theories) conjectures, our sorted Datalog hammer translates first-order
Horn logic modulo arithmetic conjectures into pure Datalog programs, which is equivalent to the Horn Bernays-Sch\"onfinkel clause fragment, called $\HBS$.

More concretely, the underlying logic for both formalizing supervisor behavior and formulating
conjectures is the hierarchic combination
of the Horn Bernays-Sch\"onfinkel fragment with linear arithmetic, $\HBS(\LA)$,
also called \emph{Superlog} for Supervisor Effective Reasoning Logics~\cite{FaqehFH0KKSW20}.
Satisfiability of $\BS(\LA)$ clause sets is undecidable~\cite{Downey1972,HorbachEtAl17ARXIV}, in general, however, the restriction
to simple linear arithmetic $\BS(\SB)$ yields a decidable fragment~\cite{GeMoura09,HorbachEtAl17CADE}.

Inspired by the test point method for quantifier elimination in arithmetic~\cite{LoosWeispfenning93} we
show that instantiation with a finite number of values is sufficient to decide
whether a universal or existential conjecture is a consequence of a $\BS(\SB)$ clause set.

In this paper, we improve our Datalog hammer~\cite{BrombergerEtAl21FROCOS} for $\HBS(\SB)$ in three directions. 
First, we modify our Datalog hammer so it also accepts other sorts for variables besides reals:
the integers and arbitrarily many finite first-order sorts $\FolS_i$. Each non-arithmetic sort has a predefined finite domain corresponding
to a set of constants $\Fol_i$ for $\FolS_i$ in our signature.
Second,
we modify our Datalog hammer so it also accepts more general inequalities than simple linear arithmetic allows (but only under certain conditions).
In~\cite{BrombergerEtAl21FROCOS}, we have already started in this direction by extending the input logic from pure $\HBS(\SB)$ to pure positively grounded $\HBS(\SB)$.
Here we establish a soft typing discipline by
efficiently approximating potential values occurring at predicate argument positions of all derivable facts.
Third, 
we modify the test-point scheme that is the basis of our Datalog hammer so it can exploit the fact that not all all inequalities are connected to all predicate argument positions.

Our modifications have three major advantages:
first of all, they allow us to express supervisor code for our previous use cases more elegantly and without any additional preprocessing.
Second of all, they allow us to formalize supervisor code that would have been out of scope of the logic before.
Finally, they reduce the number of required test points, which leads 
to smaller transformed formulas that can be solved in much less time.

For our experiments of the test point approach we consider again two case studies. First, verification
conditions for a supervisor taking care of multiple software variants of a lane change assistant. Second, verification conditions for a supervisor 
of a supercharged combustion engine, also called an ECU for Electronical Control Unit. The supervisors in both cases
are formulated by $\BS(\SB)$ Horn clauses.
Via our test point technique they are
translated together with the verification conditions to Datalog~\cite{Alice} ($\HBS$).
The translation is implemented in our Superlog reasoner SPASS-SPL. The resulting Datalog
clause set is eventually explored by the Datalog engine VLog~\cite{Rulewerk2019}.
This
hammer constitutes a decision procedure for both universal and existential conjectures.
The results of our experiments show that we can verify non-trivial existential
and universal conjectures in the range of seconds while state-of-the-art solvers
cannot solve all problems in reasonable time, see Section~\ref{sec:experiments}.

\myparagraph{Related Work:} Reasoning about $\BS(\LA)$ clause sets is supported by SMT (Satisfiability Modulo Theories)~\cite{NieuwenhuisEtAl06,MouraBjorner11}.
In general, SMT comprises the combination of a number of theories beyond $\LA$ such as arrays, lists, strings, or bit vectors.
While SMT is a decision procedure for the $\BS(\LA)$ ground case, universally quantified variables can be considered by instantiation~\cite{ReynoldsEtAl18}.
Reasoning by instantiation does result in a refutationally complete procedure for $\BS(\SB)$, but not in a decision procedure. The Horn
fragment $\HBS(\LA)$ out of $\BS(\LA)$ is receiving additional attention~\cite{GrebenshchikovEtAl12,BjornerEtAl15}, because it is well-suited for 
software analysis and verification. Research in this direction also goes beyond the theory of $\LA$ and considers minimal model semantics in addition, but is restricted
to existential conjectures.
Other research focuses on universal conjectures, but over non-arithmetic theories, e.g., invariant checking for array-based systems~\cite{CimattiGR21}
or considers abstract decidability criteria incomparable with the $\HBS(\LA)$ class~\cite{Ranise12}.
Hierarchic superposition~\cite{BachmairGanzingerEtAl94} and Simple Clause Learning over Theories (SCL(T))~\cite{BrombergerFW21} are both
refutationally complete for $\BS(\LA)$. While SCL(T) can be immediately turned into a decision procedure for even larger fragments than $\BS(\SB)$~\cite{BrombergerFW21}, hierarchic
superposition needs to be refined to become a decision procedure already because of the Bernays-Sch\"onfinkel part~\cite{HillenbrandWeidenbach13}.
Our Datalog hammer translates $\HBS(\SB)$ clause sets with both existential and universal conjectures into $\HBS$ clause sets which
are also subject to first-order theorem proving. Instance generating approaches such as  iProver~\cite{Korovin08} are a decision procedure
for this fragment, whereas superposition-based~\cite{BachmairGanzingerEtAl94} first-order provers
such as E~\cite{SchulzEtAl19}, SPASS~\cite{WeidenbachEtAlSpass2009}, Vampire~\cite{RiazanovVoronkov02},
have additional mechanisms implemented to decide $\HBS$.
In our experiments, Section~\ref{sec:experiments}, we will discuss the differences between all these approaches on a number of benchmark examples in more detail.

The paper is organized as follows: after a section on preliminaries, Section~\ref{sec:prelims}, we present the theory
of our sorted Datalog hammer in Section~\ref{sec:typing}, followed by
experiments on real world supervisor verification conditions, Section~\ref{sec:experiments}. The paper ends with a discussion of
the obtained results and directions for future work, Section~\ref{sec:conclusion}.
\begin{notinextended}
    The artifact (including binaries of our tools and all benchmark problems)
    is available at~\cite{GITEvalSortedHammer}.
    An extended version is available at~\cite{TODO} including proofs and pseudo-code algorithms for the presented results.
\end{notinextended}
\begin{extendedonly}
  This paper is an extended version of~\cite{BrombergerEtAl22Tacas}. 
  The appendix contains proofs and pseudo-code algorithms for the presented results. 
  The artifact (including binaries of our tools and all benchmark problems) is available at~\cite{GITEvalSortedHammer}.
  
\end{extendedonly}


\section{Preliminaries}\label{sec:prelims}

We briefly recall the basic logical formalisms and notations we build upon~\cite{BrombergerEtAl21FROCOS}.
Starting point is a standard many-sorted first-order language for \BS{} with \emph{constants} (denoted $a, b, c$), without non-constant function symbols,
\emph{variables} (denoted $w,x,y,z$), and \emph{predicates} (denoted $P, Q, R$)
of some fixed \emph{arity}.
\emph{Terms} (denoted $t,s$) are variables or constants.
We write $\bar{x}$ for a vector of variables, $\bar{a}$ for a vector of constants, and so on.
An \emph{atom} (denoted $A, B$) is an expression $P(\bar{t})$ for a predicate $P$ of arity $n$ and a term list $\bar{t}$
of length $n$.
A \emph{positive literal} is an atom $A$ and a \emph{negative literal} is a negated atom $\neg A$.
We define $\comp(A)=\neg A$, $\comp(\neg A)=A$, $|A|=A$ and $|\neg A|=A$.
Literals are usually denoted $L, K, H$.

A \emph{clause} is a disjunction of literals, where all variables are assumed to be universally
quantified. $C, D$ denote clauses, and $N$ denotes a clause set.
We write $\atoms(X)$ for the set of atoms in a clause or clause set $X$.
A clause is \emph{Horn} if it contains at most one positive literal, and
a \emph{unit clause} if it has exactly one literal.
A clause $A_1\vee \ldots\vee A_n\vee \neg B_1\vee\ldots\vee \neg B_m$ can be written as an implication
$B_1\wedge \ldots\wedge B_m\to A_1\vee\ldots\vee A_n$, still omitting universal quantifiers.
If $Y$ is a term, formula, or a set thereof, $\vars(Y)$ denotes the set of all variables in $Y$,
and $Y$ is \emph{ground} if $\vars(Y)=\emptyset$.
A \emph{fact} is a ground unit clause with a positive literal.

\myparagraph{Datalog and the Horn Bernays-Sch\"onfinkel Fragment:}
The \emph{Horn case of the Bernays-Schönfinkel fragment} (\HBS) comprises all sets of clauses 
with at most one positive literal.
The more general Bernays-Sch\"onfinkel fragment (\BS{}) in first-order logic allows arbitrary \emph{formulas} over atoms,
i.e., arbitrary Boolean connectives and leading existential quantifiers. 
\BS{} formulas can be polynomially transformed into clause sets with common 
syntactic transformations while preserving satisfiability and all entailments that do not refer to auxiliary
constants and predicates introduced in the transformation~\cite{NonnengartW01}.
\BS{} theories in our sense are also known as \emph{disjunctive Datalog programs} \cite{DisjunctiveDatalog97},
specifically when written as implications.
A \HBS{} clause set is also called a \emph{Datalog program}.
Datalog is sometimes viewed as
a second-order language. We are only interested in query answering, which can equivalently be viewed as first-order entailment
or second-order model checking \cite{Alice}.
Again, it is common to write clauses as implications in this case.

Two types of \emph{conjectures}, i.e., formulas we want
to prove as consequences of a clause set, are of particular interest:
\emph{universal} conjectures $\forall \bar{x}. \phi$ and \emph{existential} conjectures $\exists \bar{x}. \phi$,
where $\phi$ is a $\BS$ formula that only uses variables in $\bar{x}$.
We call such a conjecture positive if the formula only uses conjunctions and disjunctions to connect atoms.
Positive conjectures are the focus of our Datalog hammer and they have the useful property that they can be transformed
to one atom over a fresh predicate symbol by adding some suitable Horn clause definitions to our clause set $N$~\cite{NonnengartW01,BrombergerEtAl21FROCOS}.
This is also the reason why we assume for the rest of the paper that all
relevant universal conjectures have the form $\forall \bar{x}. P(\bar{x})$ and existential conjectures the form  $\exists \bar{x}. P(\bar{x})$.

A \emph{substitution} $\sigma$ is a function from variables to terms with a finite domain 
$\dom(\sigma) = \{ x \mid x\sigma \neq x\}$ and codomain $\cdom(\sigma) = \{x\sigma\mid x\in\dom(\sigma)\}$.
We denote substitutions by $\sigma, \delta, \rho$.
The application of substitutions is often written postfix, as in $x\sigma$, and is homomorphically extended to
terms, atoms, literals, clauses, and quantifier-free formulas.
A substitution $\sigma$ is \emph{ground} if $\cdom(\sigma)$ is ground.
Let $Y$ denote some term, literal, clause, or clause set.
$\sigma$ is a \emph{grounding} for $Y$ if $Y\sigma$ is ground, and $Y\sigma$ is a
\emph{ground instance} of $Y$ in this case.
We denote by $\mGnd(Y)$ the set of all ground instances of $Y$, 
and by $\mGnd_B(Y)$ the set of all ground instances over a given set of constants $B$.
The \emph{most general unifier} $\mMGU(Z_1,Z_2)$ of two terms/atoms/literals $Z_1$ and $Z_2$
is defined as usual, and we assume that it does not introduce fresh variables and is idempotent.

We assume a standard many-sorted first-order logic model theory, and write
$\sigval \models \phi$ if an interpretation $\sigval$ satisfies a first-order formula $\phi$.
A formula $\psi$ is a logical consequence of $\phi$, written $\phi\models\psi$, if
$\sigval\models\psi$ for all $\sigval$ such that $\sigval\models\phi$.
Sets of clauses are semantically treated as conjunctions of clauses with all variables quantified
universally.

\myparagraph{$\BS$ with Linear Arithmetic:}
The extension of $\BS$ with linear arithmetic both over real and integer variables, $\BS(\LA)$,
is the basis for the formalisms studied in this paper.
We extend the standard \emph{many-sorted} first-order logic with finitely many first-order sorts $\FolS_i$ and with two arithmetic sorts 
$\RealS$ for the real numbers and $\IntS$ for the integer numbers. The sort $\IntS$ is a \emph{subsort} of $\RealS$.   
Given a clause set $N$, the interpretations $\inta$ of our sorts are fixed: 
$\RealS^{\inta} = \Real$, 
$\IntS^{\inta} = \Int$, 
and $\FolS_i^{\inta} = \Fol_i$, i.e., a first-order sort interpretation $\Fol_i$ consists of the set of constants in $N$ belonging to that sort, or a single constant out of the signature if no
such constant occurs. Note that this is not a deviation from standard semantics in our context as for the arithmetic part the canonical domain is considered and for the first-order sorts BS has
the finite model property over the occurring constants which is sufficent for refutation-based reasoning.
This way first-order constants are distinct values.

Constant symbols, arithmetic function symbols, variables, and predicates are uniquely declared together with \emph{sort} expressions.
The unique sort of a constant symbol, variable, predicate, or term is denoted by the function $\sort(Y)$  and
we assume all terms, atoms, and formulas to be well-sorted. The sort of predicate $P$'s argument position $i$ is denoted by $\sort(P,i)$.
For arithmetic function symbols we consider the minimal sort
with respect to the subsort relation between $\RealS$ and $\IntS$.
Eventually, we don't consider arithmetic functions here, so the subsort relationship boils down
to substitute an integer sort variable or number for a real sort variable.

We assume \emph{pure} input clause sets, which means the only constants of sort $\RealS$ or $\IntS$ are numbers.
This means the only constants that we do allow are integer numbers $c \in \mathbb{Z}$ and the constants defining our finite first-order sorts $\FolS_i$.
Satisfiability of pure $\BS(\LA)$ clause sets is semi-decidable, e.g., using \emph{hierarchic superposition} \cite{BachmairGanzingerEtAl94}
or \emph{SCL(T)}~\cite{BrombergerFW21}.
Impure $\BS(\LA)$ is no longer compact and satisfiability becomes undecidable, 
but it can be made decidable when restricting to ground clause sets~\cite{BrombergerEtAl2020arxiv}.

All arithmetic predicates and functions are interpreted in the usual way.
An interpretation of $\BS(\LA)$ coincides with $\sigval^{\LA}$ on arithmetic predicates and functions, and freely interprets
free predicates. For pure clause sets this is well-defined~\cite{BachmairGanzingerEtAl94}.
Logical satisfaction and entailment is defined as usual, and uses similar notation as for \BS.

\begin{example}\label{ex_bs_lra_ecu}
The following $\BS(\LA)$ clause from our ECU case study compares the
values of engine speed ($\text{Rpm}$) and pressure ($\text{KPa}$) with entries in an ignition
table ($\text{IgnTable}$) to derive the basis of the current ignition value ($\text{IgnDeg1}$):
\begin{align}\begin{split}
  &x_1 < 0 \;\lor\; x_1\geq 13 \;\lor\; x_2 < 880  \;\lor\; x_2\geq 1100 \;\lor\; \neg \text{KPa}(x_3, x_1) \;\lor{} \\
  &  \neg \text{Rpm}(x_4, x_2) \;\lor\; \neg \text{IgnTable}(0,13,880,1100,z)  \;\lor\; \text{IgnDeg1}(x_3, x_4, x_1, x_2, z)
\end{split}\label{eq_bs_lra_ecu}
\end{align}
\end{example}

Terms of the two arithmetic sorts are constructed from a set $\varset$ of \emph{variables}, 
the set of integer constants $c\in\Int$, and binary function symbols $+$ and $-$ (written infix).
Atoms in $\BS(\LA)$ are either \emph{first-order atoms} (e.g., $\text{IgnTable}(0,13,880,1100,z)$) or
\emph{(linear) arithmetic atoms} (e.g., $x_2 < 880$).
Arithmetic atoms may use the predicates $\leq, <, \neq, =, >, \geq$, which are written infix and have the expected
fixed interpretation. 
Predicates used in first-order atoms are called \emph{free}.
\emph{First-order literals} and related notation is defined as before.
\emph{Arithmetic literals} coincide with arithmetic atoms, since
the arithmetic predicates are closed under negation, e.g., $\neg(x_2\geq 1100)\equiv x_2<1100$.

$\BS(\LA)$ clauses and conjectures are defined as for \BS{} but using $\BS(\LA)$ atoms.
We often write Horn clauses in the form $\Lambda \parallel \Delta \rightarrow H$ where $\Delta$ is a multiset of free first-order atoms, $H$ is either a first-order atom or $\bot$, and $\Lambda$ is a multiset of $\LA$ atoms. 
The semantics of a clause in the form $\Lambda \parallel \Delta \rightarrow H$ is $\bigvee_{\lambda \in \Lambda} \neg \lambda \vee \bigvee_{A \in \Delta} \neg A \vee H$, e.g.,
the clause $x >1 \lor y \neq 5 \lor \neg Q(x) \lor R(x, y)$ is also written $x\leq 1, y = 5 || Q(x) \rightarrow R(x,y)$.

A clause or clause set is \emph{abstracted} if its first-order literals contain only variables or first-order constants.
Every clause $C$ is equivalent to an abstracted clause that is obtained by replacing each non-variable arithmetic term $t$
that occurs in a first-order atom by a fresh variable $x$ while adding an arithmetic atom $x\neq t$ to $C$.
We asssume abstracted clauses for theory development, but we prefer non-abstracted
clauses in examples for readability,e.g., a fact $P(3,5)$ is considered in the development
of the theory as the clause $x=3, x=5 \parallel \rightarrow P(x,y)$, this is important when collecting the
necessary test points.
Moreover, we assume that all variables in the theory part of a clause also appear in the first order part, i.e., $\vars(\Lambda) \subseteq \vars(\Delta \rightarrow H)$ for every clause $\Lambda \parallel \Delta \rightarrow H$.
If this is not the case for $x$ in $\Lambda \parallel \Delta \rightarrow H$, 
then we can easily fix this by first introducing a fresh unary predicate $Q$ over the $\sort(x)$,
then adding the literal $Q(x)$ to $\Delta$, and finally adding a clause $\parallel \rightarrow Q(x)$ to our clause set.
Alternatively, $x$ could be eliminated by $\LA$ variable elimintation in our context, however this results in a worst case
exponential blow up in size.
This restriction is necessary because we base all our computations for the test-point scheme on predicate argument positions
and would not get any test points for variables that are not connected to any predicate argument positions.

\myparagraph{Simpler Forms of Linear Arithmetic:} \label{sec:bsslr}
The main logic studied in this paper is obtained by restricting $\HBS(\LA)$ to 
a simpler form of linear arithmetic. We first introduce a simpler logic $\HBS(\SB)$
as a well-known fragment of $\HBS(\LA)$ for which satisfiability is decidable~\cite{GeMoura09,HorbachEtAl17CADE},
and later present the generalization $\HBS(\LA)\PAG$ of this formalism that we will use.

\begin{definition}
The \emph{Horn Bernays-Schönfinkel fragment over simple linear arithmetic}, $\HBS(\SB)$, is a subset of
$\HBS(\LA)$ where all arithmetic atoms are of
the form $x \LAOP c$ or $d \LAOP c$, such that $c\in\Int$, $d$ is a (possibly free) constant, $x\in \varset$, and $\LAOP \in \{\leq, <, \neq, =, >, \geq\}$.
\end{definition}

Please note that $\HBS(\SB)$ clause sets may be unpure due to free first-order constants of an arithmetic sort. Studying
unpure fragments is beyond the scope of this paper but they show up in applications as well.

\begin{example}\label{ex_bs_lra_ecu_nonground}
The ECU use case leads to $\HBS(\LA)$ clauses such as
\begin{align}\begin{split}
  &x_1 < y_1 \;\lor\; x_1\geq y_2 \;\lor\; x_2 < y_3  \;\lor\; x_2\geq y_4 \;\lor\; \neg \text{KPa}(x_3, x_1) \;\lor{} \\
  &  \neg \text{Rpm}(x_4, x_2) \;\lor\; \neg \text{IgnTable}(y_1,y_2,y_3,y_4,z)  \;\lor\; \text{IgnDeg1}(x_3, x_4, x_1, x_2, z).
\end{split}\label{eq_bs_lra_ecu_nonground}
\end{align}
This clause is not in $\HBS(\SB)$, e.g., since $x_1 > x_5$ is not allowed in $\BS(\SB)$.
However, clause \eqref{eq_bs_lra_ecu} of Example~\ref{ex_bs_lra_ecu} is a $\BS(\SB)$ clause that is an instance of \eqref{eq_bs_lra_ecu_nonground},
obtained by the substitution $\{y_1\mapsto 0,y_2\mapsto 13,y_3\mapsto 880,y_4\mapsto 1100\}$. This grounding will eventually be obtained
by resolution on the $\text{IgnTable}$ predicate, because it occurs only positively in ground unit facts.
\end{example}

Example~\ref{ex_bs_lra_ecu_nonground} shows that $\HBS(\SB)$ clauses can sometimes be obtained by instantiation.
In fact, for the satisfiability of an $\HBS(\LA)$ clause set $N$ only those instances of clauses $(\Lambda \parallel \Delta \rightarrow H) \sigma$ are \emph{relevant}, 
for which we can actually derive all ground facts $A \in \Delta \sigma$ by resolution from $N$.
If $A$ cannot be derived from $N$ and $N$ is satisfiable, 
then there always exists a satisfying interpretation $\inta$ that interprets $A$ as false
(and thus $(\Lambda \parallel \Delta \rightarrow H) \sigma$ as true).
Moreover, if those relevant instances can be simplified to $\HBS(\SB)$ clauses, 
then it is possible to extend almost all $\HBS(\SB)$ techniques (including our Datalog hammer) to those $\HBS(\LA)$ clause sets.

In our case resolution means \emph{hierarchic unit resolution}:
given a clause $\Lambda_1 \parallel L, \Delta \rightarrow H$ and a unit clause $\Lambda_2 \parallel \rightarrow K$ with $\sigma = \mMGU(L,K)$,
their \emph{hierarchic resolvent} is $(\Lambda_1,\Lambda_2 \parallel \Delta \rightarrow H)\sigma$.
A fact $P(\bar{a})$ is \emph{derivable} from a pure set of $\HBS(\LA)$ clauses $N$ if there exists a
clause $\Lambda \parallel \rightarrow P(\bar{t})$ that (i)~is the result of a sequence of unit resolution steps from the clauses in $N$ and
(ii)~has a grounding $\sigma$ such that $P(\bar{t}) \sigma = P(\bar{a})$ and $\Lambda \sigma$ evaluates to true.
If $N$ is satisfiable, then this means that any fact $P(\bar{a})$ derivable from $N$ is true in all satisfiable interpretations of $N$, i.e., $N \models P(\bar{a})$.
We denote the \emph{set of derivable facts} for a predicate $P$ from $N$ by $\dfacts(P,N)$.
A \emph{refutation} is the sequence of resolution steps that produces a
clause $\Lambda \parallel \rightarrow \bot$ with $\sigval^{\LA} \models \Lambda\delta$ for some grounding $\delta$.
\emph{Hierarchic unit resolution} is sound and refutationally complete for pure $\HBS(\LA)$, 
since every set $N$ of pure $\HBS(\LA)$ clauses $N$ is \emph{sufficiently complete}~\cite{BachmairGanzingerEtAl94}, and hence
 \emph{hierarchic superposition} is sound and refutationally complete for $N$~\cite{BachmairGanzingerEtAl94,BaumgartnerWaldmann19}.
 
So naturally if all derivable facts of a predicate $P$ already appear in $N$, 
then only those instances of clauses can be relevant whose occurrences of $P$ match those facts (i.e., can be resolved with them).
We call predicates with this property positively grounded:

\begin{definition}[Positively Grounded Predicate~\cite{BrombergerEtAl21FROCOS}]
  Let $N$ be a set of $\HBS(\LA)$ clauses. A free first-order predicate $P$ is
  a \emph{positively grounded predicate} in $N$ if all positive occurrences of
  $P$ in $N$ are in ground unit clauses (also called facts).
\end{definition}

\begin{definition}[Positively Grounded $\HBS(\SB)$: $\HBS(\SB)\SP$~\cite{BrombergerEtAl21FROCOS}]
  An $\HBS(LA)$ clause set $N$ is out of the fragment \emph{positively grounded $\HBS(\SB)$ ($\HBS(\SB)\SP$)}\ if we can transform $N$ into an $\HBS(\SB)$ clause set $N'$ 
by first resolving away all negative occurrences of positively grounded predicates $P$ in $N$, 
simplifying the thus instantiated $\LA$ atoms,
and finally eliminating all clauses where those predicates occur negatively. 
\end{definition}

As mentioned before, if all relevant instances of an $\HBS(\LA)$ clause set can be simplified to $\HBS(\SB)$ clauses, 
then it is possible to extend almost all $\HBS(\SB)$ techniques (including our Datalog hammer) to those clause sets.
$\HBS(\SB)\SP$ clause sets have this property and this is the reason,
why we managed to extend our Datalog hammer to pure $\HBS(\SB)\SP$ clause sets in~\cite{BrombergerEtAl21FROCOS}.
For instance, the set $N = \{P(1), \; P(2), \; Q(0),\; (x \leq y + z \parallel P(y), Q(z) \rightarrow R(x,y))\}$ is an $\HBS(\LA)$ clause set,
but not an $\HBS(\SB)$ clause set due to the inequality $x \leq y + z$.
Note, however, that the predicates $P$ and $Q$ are positively grounded, the only positive occurrences of $P$ and $Q$ are the facts $P(1)$, $P(2)$, and $Q(0)$.
If we resolve with the facts for $P$ and $Q$ and simplify, 
then we get the clause set $N' = \{P(1), \; P(2), \; Q(0),\; (x \leq 1 \parallel \rightarrow R(x,1)), \; (x \leq 2 \parallel \rightarrow R(x,2))\}$, which does now belong to $\HBS(\SB)$.
This means $N$ is a positively grounded $\HBS(\SB)$ clause set and our Datalog hammer can still handle it.

Positively grounded predicates are only one way to filter out irrelevant clause instances.
As part of our improvements, we define in Section~\ref{sec:typing} a new logic called approximately
grounded $\HBS(\SB)$ ($\HBS(\SB)\PAG$) that is an extension of $\HBS(\SB)\SP$ and serves as the new input logic of our sorted Datalog hammer.
It is based on over-approximating the set of \emph{derivable values} $\dval(P,i,N) = \{a_i \mid P(\bar{a}) \in \dfacts(P,N)\}$ for each argument position $i$ of each predicate $P$ in $N$ 
with only finitely many derivable values, i.e., $|\dval(P,i,N)| \in \Nat$.
These argument positions are also called \emph{finite}. 
With regard to clause relevance, only those clause instances are relevant, where a finite argument position is instantiated by one of the derivable values.
We call a set of clauses $N$ an approximately grounded $\HBS(\SB)$ clause set if all 
relevant instances based on this criterion can be simplified to $\HBS(\SB)$ clauses.
For instance, the set $N = \{(x \leq 1 \parallel \rightarrow P(x,1)), \; (x > 2 \parallel \rightarrow P(x,3)), \; (x \geq 0 \parallel \rightarrow Q(x,0)), \; (u \leq y + z \parallel P(x,y), Q(x,z) \rightarrow R(x,y,z,u))\}$ is an $\HBS(\LA)$ clause set, but not a (positively grounded) $\HBS(\SB)$ clause set due to the inequality $z \leq y + u$ and the lack of positively grounded predicates.
However, the argument positions $(P,2)$, $(Q,2)$, $(R,2)$ and $(R,3)$ only have finitely many derivable values $\dval(P,2,N) = \dval(R,2,N) =\{1,3\}$ and $\dval(Q,2,N) = \dval(R,3,N) =\{0\}$.
If we instantiate all occurrences of $P$ and $Q$ over those values, then we get the set $N' = \{(x \leq 1 \parallel \rightarrow P(x,1)), \; (x > 2 \parallel \rightarrow P(x,3)), \; (x \geq 0 \parallel \rightarrow Q(x,0)), \; (u \leq 1 \parallel P(x,1), Q(x,0) \rightarrow R(x,1,0,u)), \; (u \leq 3 \parallel P(x,3), Q(x,0) \rightarrow R(x,3,0,u))\}$ that is an $\HBS(\SB)$ clause set.
This means $N$ is an approximately grounded $\HBS(\SB)$ clause set and our extended Datalog hammer can handle it.

\subsubsection{Test-Point Schemes and Functions}
The Datalog hammer in~\cite{BrombergerEtAl21FROCOS} is based on the following idea: 
For any pure $\HBS(\SB)$ clause set $N$ that is unsatisfiable, we only need to look at the instances $\mGnd_B(N)$ of $N$ over finitely many test points $B$ to construct a refutation.
Symmetrically, if $N$ is satisfiable, 
then we can extrapolate a satisfying interpretation for $N$ from a satisfying interpretation for $\mGnd_B(N)$.
If we can compute such a set of test points $B$ for a clause set $N$, 
then we can transform the clause set into an equisatisfiable Datalog program.
There exist similar properties for universal/existential conjectures.
A \emph{test-point scheme} is an algorithm that can compute such a set of test points $B$ for any $\HBS(\SB)$ clause set $N$
and any conjecture $N \models \mathcal{Q} \bar{x}. P(\bar{x})$ with $\mathcal{Q} \in \{{\exists},{\forall}\}$.

The test-point scheme used by our original Datalog hammer computes the same set of test points for all variables and predicate argument positions. 
This has several disadvantages:
(i)~it cannot handle variables with different sorts and
(ii)~it often selects too many test points (per argument position) because it cannot recognize which inequalities and which argument positions are connected.
The goal of this paper is to resolve these issues.
However, this also means that we have to assign different test-point sets to different predicate argument positions.
We do this with so-called test-point functions.

A \emph{test-point function} ($\tpfunction$) $\beta$ is a function that assigns to some argument positions $i$ of some predicates $P$ a set of test points $\beta(P,i)$.
An argument position $(P,i)$ is assigned a set of test points if $\beta(P,i) \subseteq \sort(P,i)^\inta$ and otherwise $\beta(P,i) = \bot$.
A test-point function $\beta$ is \emph{total} if all argument positions $(P,i)$ are assigned, i.e., $\beta(P,i) \neq \bot$.

A variable $x$ of a clause $\Lambda \parallel \Delta \rightarrow H$ occurs in an argument position $(P,i)$ if $(P,i) \in \depend(x,\Lambda \parallel \Delta \rightarrow H)$, 
where $\depend(x,Y) = \{(P,i) \mid P(\bar{t}) \in \atoms(Y) \text{ and } t_i = x\}$.
Similarly, a variable $x$ of an atom $Q(\bar{t})$ occurs in an argument position $(Q,i)$ if $(Q,i) \in \depend(x,Q(\bar{t}))$.
A substitution $\sigma$ for a clause $Y$ or atom $Y$ is a \emph{well-typed instance} over a $\tpfunction$ $\beta$ if it guarantees for each variable $x$ that $x \sigma$ is an element of $\sort(x)^{\inta}$ and part of every test-point set (i.e., $x \sigma \in \beta(P,i)$) of every argument position $(P,i)$ it occurs in (i.e., $(P,i)\in\depend(x,Y)$) and that is assigned a test-point set by $\beta$ (i.e., $\beta(P,i) \neq \bot$). 
To abbreviate this, we define a set $\wti(x,Y,\beta)$ that contains all values with which a variable can fulfill the above condition, i.e.,
$\wti(x,Y,\beta) = \sort(x)^{\inta} \cap (\bigcap_{(P,i) \in \depend(x,Y) \text{ and } \beta(P,i) \neq \bot} \beta(P,i))$.
Following this definition, we denote by $\tg_{\beta}(Y)$ the \emph{set of all well-typed instances} for a clause/atom $Y$ over the $\tpfunction$ $\beta$, or formally:
$\tg_{\beta}(Y) = \{ \sigma \mid \forall x \in \vars(Y). (x \sigma) \in \wti(x,Y,\beta)\}$.
With the function $\mGnd_{\beta}$, we denote the \emph{set of all well-typed ground instances} of a clause/atom $Y$ over the $\tpfunction$ $\beta$, i.e., 
$\mGnd_{\beta}(Y) = \{Y \sigma \mid \sigma \in \tg_{\beta}(Y)\}$, 
or a set of clauses $N$, i.e., $\mGnd_{\beta}(N) = \{Y \sigma \mid Y \in N \text{ and } \sigma \in \tg_{\beta}(Y)\}$.

The most general $\tpfunction$, denoted by $\beta^*$, assigns each argument position to the interpretation of its sort, i.e., $\beta^*(P,i)= \sort(P,i)^\inta$. 
So depending on the sort of $(P,i)$, either to $\Real$, $\Int$, or one of the $\Fol_i$.
A set of clauses $N$ is satisfiable if and only if $\mGnd_{\beta^*}(N)$, the set of all ground instances of $N$ over the base sorts, is satisfiable.
Since $\beta^*$ is the most general $\tpfunction$, we also write $\mGnd(Y)$ for $\mGnd_{\beta^*}(Y)$ and $\tg(Y)$ for $\tg_{\beta^*}(Y)$.

If we restrict ourselves to test points,
then we also only get interpretations over test points and not for the full base sorts.
In order to extrapolate an interpretation from test points to their full sorts,
we define extrapolation functions ($\epfunctions$) $\eta$. 
An \emph{extrapolation function} ($\epfunction$) $\eta(P,\bar{a})$ maps an argument vector of test points for predicate $P$ (with $a_i \in \beta(P,i)$) to the subset of
$\sort(P,1)^\inta \times \ldots \times \sort(P,n)^\inta$ that is supposed to be interpreted the same as $\bar{a}$, 
i.e., $P(\bar{a})$ is interpreted as true if and only if $P(\bar{b})$ with $\bar{b} \in \eta(P,\bar{a})$ is interpreted as true.
By default, any argument vector of test points $\bar{a}$ for $P$ must also be an element of $\eta(P,\bar{a})$, i.e., $\bar{a} \in \eta(P,\bar{a})$.
An extrapolation function does not have to be complete for all argument positions, i.e., 
there may exist argument positions from which we cannot extrapolate to all argument vectors. 
Formally this means that the actual set of values that can be extrapolated from $(P,i)$ (i.e., $\bigcup_{a_1 \in \beta(P,1)} \ldots \bigcup_{a_n \in \beta(P,n)} \eta(P,\bar{a})$) may be a strict subset of
$\sort(P,1)^\inta \times \ldots \times \sort(P,n)^\inta$.
For all other values $\bar{a}$, $P(\bar{a})$ is supposed to be interpreted as false.

\subsubsection{Covering Clause Sets and Conjectures}

Our goal is to create total $\tpfunctions$ that restrict our solution space from the infinite reals and integers to finite sets of test points while still preserving (un)satisfiability.
Based on these $\tpfunctions$, we are then able to define a Datalog hammer that transforms a clause set belonging to (an extension of) $\HBS(\LA)$ into an equisatisfiable $\HBS$ clause set; 
even modulo universal and existential conjectures.

To be more precise, we are interested in finite $\tpfunctions$ (together with matching $\epfunctions$) that cover a clause set $N$ or a conjecture $N \models \mathcal{Q} \bar{x}. P(\bar{x})$ with $\mathcal{Q} \in \{{\exists},{\forall}\}$.
A total $\tpfunction$ $\beta$ is \emph{finite} if each argument position is assigned to a finite set of test points, i.e., $|\beta(P,i)| \in \Nat$.
A $\tpfunction$ $\beta$ \emph{covers a set of clauses} $N$ if $\mGnd_{\beta}(N)$ is equisatisfiable to $N$.
A $\tpfunction$ $\beta$ \emph{covers a universal conjecture} $\forall \bar{x}. Q(\bar{x})$ over $N$ if $\mGnd_{\beta}(N) \cup N_Q$
is satisfiable if and only if $N \models \forall \bar{x}. Q(\bar{x})$ is false.
Here $N_Q$ is the set $\{\parallel \mGnd_{\beta}(Q(\bar{x})) \rightarrow \bot \}$ if $\eta$ is complete
for $Q$ 
or the empty set otherwise.
A $\tpfunction$ $\beta$ \emph{covers an existential conjecture} $N \models \exists \bar{x}. Q(\bar{x})$ if $\mGnd_{\beta}(N) \cup \mGnd_{\beta}(\parallel Q(\bar{x}) \rightarrow \bot)$ is satisfiable if and only if $N \models \exists \bar{x}. Q(\bar{x})$ is false.

The most general $\tpfunction$ $\beta^*$ obviously covers all \HBS(\LA) clause sets and conjectures because satisfiability of $N$ is defined over $\mGnd_{\beta^*}(N)$. 
However, $\beta^*$ is not finite.
The test-point scheme in~\cite{BrombergerEtAl21FROCOS}, which assigns one finite set
of test points $B$ to all variables, also covers clause sets and universal/existential conjectures; at least if we restrict our input to variables over the reals.
As mentioned before, the goal of this paper is improve this test-point scheme by 
assigning different test-point sets to different predicate argument positions.


\section{The Sorted Datalog Hammer}
\label{sec:typing}

In this section, we present a transformation that we call the \emph{sorted Datalog hammer}. 
It transforms any pure $\HBS(\SB)$ clause set modulo a conjecture into an $\HBS$ clause set.
To guide our explanations, 
we apply each step of the transformation to a simplified example of the 
electronic control unit use case:

\begin{example}\label{exp:runningecuexp}
An electronic control unit (ECU) of a combustion engine determines actuator operations.
For instance, it computes the ignition timings based on a set of input sensors.
To this end, it looks up some base factors from static tables and combines them to the actual actuator values through a series of rules.

In our simplified model of an ECU, 
we only compute one actuator value, the ignition timing, 
and we only have an engine speed sensor (measuring in Rpm) as our input sensor.
Our verification goal, expressed as a universal conjecture, is to confirm, 
that the ECU computes an ignition timing for all potential input sensor values. 
Determining completeness of a set of rules, i.e., determining that the rules produce a result for all potential input values, 
is also our most common application for universal conjectures.
The ECU model is encoded as the following pure $\HBS(\LA)$ clause set $N$:\newline
$D_1 : \SpeedTable(0,2000,1350), \quad D_2 : \SpeedTable(2000,4000,1600),$\newline
$D_3: \SpeedTable(4000,6000,1850), \quad D_4 : \text{\SpeedTable}(6000,8000,2100),$ \newline
$C_1 : 0 \leq x_p, x_p < 8000 \parallel \rightarrow \Speed(x_p),$\newline 
$C_2 : x_1 \leq x_p, x_p < x_2 \parallel \Speed(x_p), \SpeedTable(x_1,x_2,y) \rightarrow \IgnDeg(x_p,y), $\newline
$C_{3} : \IgnDeg(x_p,z) \rightarrow \ResDegArgs(x_p), \quad
C_{4} : \ResDegArgs(x_p) \rightarrow \Conjecture(x_p), $\newline
$C_{5} : x_p \geq 8000 \parallel \rightarrow \Conjecture(x_p), \quad
C_{6} : x_p < 0 \parallel \rightarrow \Conjecture(x_p), $\newline

In this example all variables are real variables.
The clauses $D_1 - D_4$ are table entries from which we determine the base factor of our ignition time based on the speed. 
Semantically, $D_1 : \SpeedTable(0,2000,1350)$ states that the base ignition time is $13.5^{\circ}$ before dead center if the engine speed lies between $0 \Rpm$ and $2000 \Rpm$.
The clause $C_1$ produces all possible input sensor values labeled by the predicate $\Speed$.
The clause $C_2$ determines the ignition timing from the current speed and the table entries.
The end result is stored in the predicate $\IgnDeg(x_p,z)$, 
where $z$ is the resulting ignition timing and $x_p$ is the speed that led to this result.
The clauses $C_3 - C_6$ are necessary for encoding the verification goal as a universal conjecture over a single atom.
In clause $C_3$, the return value is removed from the result predicate $\IgnDeg(x_p,z)$ because for the conjecture we only need to know that there is a result and not what the result is.
Clause $C_{4}$ guarantees that the conjecture predicate $\Conjecture(x_p)$ is true if the rules can produce a $\IgnDeg(x_p,z)$ for the sensor value.
Clauses $C_{5} \& C_{6}$ guarantee that the conjecture predicate is true if one of the sensor values is out of bounds.
This flattening process can be done automatically using the techniques outlined in~\cite{BrombergerEtAl21FROCOS}.
Hence, the ECU computes an ignition timing for all potential input sensor values if the universal conjecture $\forall x_p. \Conjecture(x_p)$ is entailed by $N$.
\end{example}

\subsubsection{Approximately Grounded}

Example~\ref{exp:runningecuexp} contains inequalities that go beyond simple variable bounds, e.g., $x_1 \leq x_p$ in $C_2$.
However, it is possible to reduce the example to an $\HBS(\SB)$ clause set.
As our first step of the sorted Datalog hammer, 
we explain a way to heuristically determine which $\HBS(\LA)$ clause sets can be reduced to $\HBS(\SB)$ clause sets.
Moreover, we show later that we do not have to explicitly perform this reduction but that we can extend our other algorithms to handle this heuristic extension of $\HBS(\SB)$ directly.

We start by formulating an extension of positively grounded $\HBS(\SB)$ called approximately grounded $\HBS(\SB)$.
It is based on an over-approximation of the derivable values for each argument position.
An argument position $(P,i)$ is \emph{(in)finite} if $\dfacts(P,N)$ contains (in)finitely many different values at position $i$, i.e., the set of \emph{derivable values} $\dval(P,i,N)$ is (in)finite.
Naturally, all argument positions over first-order sorts $\FolS$ are finite argument positions.
But other argument positions can be finite too, e.g., in the set of clauses $\{P(0), P(1), P(x)\rightarrow Q(x)\}$ all argument positions of all predicates are finite.

\begin{lemma}\label{lem:hbsfiniteishard}
Determining the finiteness of a predicate argument position in an $\HBS(\LA)$ clause set is undecidable.
\end{lemma}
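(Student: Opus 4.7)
The plan is to reduce from an undecidable problem, namely the halting problem for two-counter Minsky machines. Since $\HBS(\LA)$ with full linear arithmetic is expressive enough to simulate counter machines via Datalog-style fixed-point derivations, this should give the required undecidability result. I would phrase the reduction so that one specific argument position of one specific predicate in the constructed clause set is finite if and only if the simulated machine fails to halt.

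First I would fix a two-counter Minsky machine $M$ with states $q_0,\ldots,q_n$ and a designated halting state $q_{\mathrm{halt}}$, and encode its configurations by a predicate $\mathit{State}(q,c_1,c_2)$ where $q$ comes from a finite first-order sort with one constant per state and $c_1,c_2$ are of sort $\IntS$. The initial configuration becomes the unit fact $\parallel \rightarrow \mathit{State}(q_0,0,0)$. Each transition is translated into one Horn clause: increments on counter $i$ become $\mathit{State}(q,x_1,x_2) \rightarrow \mathit{State}(q',x_1',x_2')$ with $x_i' = x_i + 1$; decrements become $x_i > 0 \parallel \mathit{State}(q,x_1,x_2) \rightarrow \mathit{State}(q',x_1',x_2')$ with $x_i' = x_i - 1$; and each branch of a zero test is guarded by $x_i = 0$ or $x_i > 0$ respectively. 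A routine induction on the length of the computation shows that $\mathit{State}(q,c_1,c_2)$ is derivable from this clause set exactly when $M$ reaches configuration $(q,c_1,c_2)$ from its initial state.

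Second I would attach a generator of infinitely many integers that is independent of the machine, using $\parallel \rightarrow \mathit{Tick}(0)$ and $\mathit{Tick}(x) \rightarrow \mathit{Tick}(x+1)$, so that $\dval(\mathit{Tick},1,N) = \Nat$ unconditionally. Finally I would couple the two parts with a clause $\mathit{State}(q_{\mathrm{halt}},c_1,c_2), \mathit{Tick}(x) \rightarrow \mathit{Inf}(x)$ over a fresh unary predicate $\mathit{Inf}$. If $M$ halts, then some $\mathit{State}(q_{\mathrm{halt}},\cdot,\cdot)$ fact becomes derivable and, combined with the infinitely many $\mathit{Tick}$ facts, yields $\mathit{Inf}(n)$ for every $n \in \Nat$, so $(\mathit{Inf},1)$ is infinite. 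If $M$ does not halt, no such $\mathit{State}(q_{\mathrm{halt}},\cdot,\cdot)$ is ever derivable, no $\mathit{Inf}(\cdot)$ fact is derivable, and $\dval(\mathit{Inf},1,N) = \emptyset$ is trivially finite. Hence finiteness of $(\mathit{Inf},1)$ is equivalent to non-halting of $M$, and a decision procedure for finiteness would decide non-halting, contradicting undecidability of the halting problem.

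The main obstacle is the faithfulness of the Minsky machine simulation inside the derivability semantics of $\HBS(\LA)$: I need to argue both soundness (every derivable $\mathit{State}$ fact corresponds to a real machine configuration, including that zero-test clauses do not fire spuriously) and completeness (every reachable machine configuration gives rise to a derivable $\mathit{State}$ fact). This relies on the guards $x = 0$ and $x > 0$ being correctly evaluated in the hierarchic setting and on the fact that hierarchic unit resolution is sound and refutationally complete for pure $\HBS(\LA)$, as recalled in the preliminaries. Everything else in the reduction is syntactic bookkeeping.
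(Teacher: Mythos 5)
Your proof is correct, but it takes a genuinely different route from the one in the paper. You reduce directly from the halting problem for two-counter Minsky machines: you build a fresh $\HBS(\LA)$ clause set whose Horn derivations simulate the machine, and you tie finiteness of one argument position of a fresh predicate $\mathit{Inf}$ to non-halting by coupling the halting-state atom to an unbounded $\mathit{Tick}$ generator. The paper instead does a \emph{black-box} reduction from satisfiability of an arbitrary $\HBS(\LA)$ clause set, which it cites as already known to be undecidable (Downey~1972 and Horbach et al.). Its auxiliary lemma (Lemma~\ref{lem:hbsfiniteashard}) is a one-line syntactic transformation: pick a fresh unary predicate $Q^f$ and a fresh variable $y$, replace every clause head $\bot$ in $N$ by $Q^f(y)$ to obtain $N'$, and then $N$ is satisfiable if and only if $(Q^f,1)$ is finite in $N'$ --- intuitively, $N$ being unsatisfiable means some ``$\bot$ body'' is forced, and then the unconstrained $y$ lets $Q^f$ absorb all of $\Real$. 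The paper's approach is shorter, more modular, and shows something slightly stronger (finiteness checking is exactly as hard as $\HBS(\LA)$ satisfiability), at the cost of depending on the cited undecidability result; your approach is self-contained (you effectively re-prove that undecidability en route), at the cost of carrying the full Minsky-machine bookkeeping, including the soundness of the zero-test guards under the minimal-model / hierarchic unit resolution semantics that you correctly flag as the delicate point. Both are valid; if you wanted to tighten your writeup you should make explicit that ``derivable'' in the paper coincides with the least-model semantics of a Horn program, which is what licenses the claim that $\mathit{State}(q,c_1,c_2)$ is derivable exactly when $(q,c_1,c_2)$ is reachable.
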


Determining the finiteness of a predicate argument position (and all its derivable values) is not trivial. 
In general, it is as hard as determining the satisfiability of a clause set, 
so in the case of $\HBS(\LA)$ undecidable~\cite{Downey1972,HorbachEtAl17ARXIV}.
This is the reason, why we propose the following algorithm that over-approximates the derivable values for each predicate argument position to heuristically determine whether a set of $\HBS(\LA)$ clauses can be reduced to $\HBS(\SB)$.

\begin{center}
\begin{tabular}{l}
  $\text{DeriveValues}(N)$\\
  for all predicates $P$ and argument positions $i$ for $P$\\
  $\quad$ $\aval(P,i,N) := \emptyset$;\\
  $\text{change} := \top$;\\
  while (change) \\
  $\quad$ $\text{change} := \bot$;\\
  $\quad$ for all Horn clauses $\Lambda \parallel \Delta \rightarrow P(t_1,\ldots,t_n)\in N$\\
  $\quad$ $\quad$ for all argument positions $1\leq i\leq n$ where $\aval(P,i,N) \not= \Real$\\
  $\quad$ $\quad$ $\quad$ if $[(t_i = c)$ or $t_i$ is assigned a constant $c$ in $\Lambda$ and $c\not\in\aval(P,i,N)$] then\\
  $\quad$ $\quad$ $\quad$ $\quad$ $\aval(P,i,N) := \aval(P,i,N)\cup \{c\}, \text{change} := \top$;\\
  $\quad$ $\quad$ $\quad$ else if [$t_i$ appears in argument positions $(Q_1,k_1),\ldots,(Q_m,k_m)$ in $\Delta$\\
  $\quad$ $\quad$ $\quad$ \hspace*{5ex} and $\aval(P,i,N) \not\supseteq \bigcap_j \aval(Q_j,k_j,N)$ ] then\\
  $\quad$ $\quad$ $\quad$ $\quad$ if [$\Real \neq \bigcap_j \aval(Q_j,k_j,N)$] then\\
  $\quad$ $\quad$ $\quad$ $\quad$ $\quad$ $\aval(P,i,N) := \aval(P,i,N) \cup \bigcap_j \aval(Q_j,k_j,N), \text{change} := \top$;\\  
  $\quad$ $\quad$ $\quad$ $\quad$ else\\
  $\quad$ $\quad$ $\quad$ $\quad$ $\quad$ $\aval(P,i,N) := \Real, \text{change} := \top$;\\
\end{tabular}
\end{center}

At the start, $\text{DeriveValues}(N)$ sets $\aval(P,i,N) = \emptyset$ for all predicate argument positions. 
Then it repeats iterating over the clauses in $N$ and uses the current sets $\aval$ in order to derive new values, 
until it reaches a fixpoint.
Whenever, $\text{DeriveValues}(N)$ computes that a clause can derive infinitely many values for an argument position,
it simply sets $\aval(P,i,N) = \Real$ for both real and integer argument positions. 
This is the case, when we have a clause $\Lambda \parallel \Delta \rightarrow P(t_1,\ldots,t_n)$,
and an argument position $i$ for $P$, such that:
(i)~$t_i$ is not a constant (and therefore a variable),
(ii)~$t_i$ is not assigned a constant $c$ in $\Lambda$ (i.e., there is no equation $t_i = c$ in $\Lambda$),
(iii)~$t_i$ is only connected to argument positions $(Q_1,k_1),\ldots,(Q_m,k_m)$ in $\Delta$ 
that already have $\aval(Q_j,k_j,N) = \Real$.
The latter also includes the case that $t_i$ is not connected to any argument positions in $\Delta$.
For instance, $\text{DeriveValues}(N)$ would recognize that clause $C_1$ in example~\ref{exp:runningecuexp} can be used to derive infinitely many values for the argument position $(\Speed,1)$ because the variable $x_p$ is not assigned an equation in $C_1$'s theory constraint $\Lambda := (0 \leq x_p, x_p < 8000)$ and $x_p$ is not connected to any argument position on the left side of the implication.
Hence, $\text{DeriveValues}(N)$ would set $\aval(\Speed,1,N) = \Real$.

For each run through the while loop, at least one predicate argument position is set to $\Real$ or the set is extended by
at least one constant. The set of constants in $N$ as well as the number of
predicate argument positions in $N$ are finite, hence $\text{DeriveValues}(N)$ terminates. It is correct because in each step it over-approximates
the result of a hierarchic unit resulting resolution step, see Section~\ref{sec:prelims}.
The above algorithm is highly inefficient. 
In our own implementation, we only apply it if all clauses are non-recursive and by first ordering the clauses based on their dependencies.
This guarantees that every clause is visited at most once and is sufficient for both of our use cases.

We can use the $\aval$ to heuristically determine whether a set of clauses can be reduced to $\HBS(\SB)$.
Let $\aval(P,i,N) \supseteq \dval(P,i,N)$ be an over-approximation of the derivable values per argument position. 
Then $|\aval(P,i,N)| \in \Nat$ is a sufficient criterion for argument position $i$ of $P$ being finite in $N$.
Based on $\aval$, we can now build a $\tpfunction$ $\beta^a$ that maps all finite argument positions $(P,i)$ that our over-approximation detected to the over-approximation of their derivable values, i.e., $\beta^a(P,i) := \aval(P,i,N)$ if $|\aval(P,i,N)| \in \Nat$ and $\beta^a(P,i) := \bot$ otherwise.
With $\beta^a$ we derive the finitely grounded over-approximation $\aGnd(Y)$ of a set of clauses $Y$, a clause $Y$ or an atom $Y$.
This set is equivalent to $\mGnd_{\beta^a}(Y)$, except that we assume that all $\LA$ atoms are simplified until they contain at most one integer number and that $\LA$ atoms that can be evaluated are reduced to true and false and the respective clause simplified.
Based of $\aGnd(N)$ we define a new extension of $\HBS(\SB)$ called approximately grounded $\HBS(\SB)$:

\begin{definition}[Approximately Grounded $\HBS(\SB)$: $\HBS(\SB)\AG$]
  A clause set $N$ is out of the fragment \emph{approximately grounded $\HBS(\SB)$} or short \emph{$\HBS(\SB)\AG$}\ if $\aGnd(N)$
  is out of the $\HBS(\SB)$ fragment. It is called $\HBS(\SB)\PAG$ if it is also pure.
\end{definition}


\begin{example}
Executing $\text{DeriveValues}(N)$ on example~\ref{exp:runningecuexp} leads to the following results:\newline
$\aval(\SpeedTable,1,N) = \{0,2000,4000,6000\}$, \newline
$\aval(\SpeedTable,2,N) = \{2000,4000,6000,8000\}$, \newline
$\aval(\SpeedTable,3,N) = \{1350,1600,1850,2100\}$,\newline
$\aval(\IgnDeg,2,N) = \{1350,1600,1850,2100\}$, \newline
and all other argument positions $(P,i)$ are infinite so $\aval(P,i,N) = \mathbb{R}$ for them.

We can now easily check whether $\aGnd(N)$ would turn our clause set into an $\HBS(\SB)$ fragment by checking whether the following holds for all inequalities:
all variables in the inequality except for one must be connected to a finite argument position on the left side of the clause it appears in.
This guarantees that all but one variable will be instantiated in $\aGnd(N)$ and the inequality can therefore be simplified to a variable bound.
\end{example}

\subsubsection{Connecting Argument Positions and Selecting Test Points}
As our second step,
we are reducing the number of test points per predicate argument position 
by incorporating that not all argument positions are connected to all inequalities. 
This also means that we select different sets of test points for different argument positions.
For finite argument positions, 
we can simply pick $\aval(P,i,N)$ as its set of test points.
However, before we can compute the test-point sets for all other argument positions, 
we first have to determine to which inequalities and other argument positions they are connected.

Let $N$ be an $\HBS(\SB)\PAG$ clause set and $(P,i)$ an argument position for a predicate in $N$.
Then we denote by $\connectedArgs(P,i,N)$ the \emph{set of connected argument positions} and 
by $\connectedIneqs(P,i,N)$ the \emph{set of connected inequalities}.
Formally, $\connectedArgs(P,i,N)$ is defined as the minimal set that fulfills the following conditions: 
(i)~two argument positions $(P,i)$ and $(Q,j)$ are connected if they share a variable in a clause in $N$, i.e.,
$(Q,j) \in \connectedArgs(P,i,N)$ if $(\Lambda \parallel \Delta \rightarrow H) \in N$, $P(\bar{t}), Q(\bar{s}) \in \atoms(\Delta \cup \{H\})$, and $t_i = s_j = x$; and
(ii)~the connection relation is transitive, i.e., if $(Q,j) \in \connectedArgs(P,i,N)$, then $\connectedArgs(P,i,N) = \connectedArgs(Q,j,N)$.
Similarly, $\connectedIneqs(P,i,N)$ is defined as the minimal set that fulfills the following conditions: 
(i)~an argument position $(P,i)$ is connected to an instance $\lambda'$ of an inequality $\lambda$ if they share a variable in a clause in $N$, i.e.,  $\lambda' \in \connectedIneqs(P,i,N)$ if $(\Lambda \parallel \Delta \rightarrow H) \in N$, $P(\bar{t}) \in \atoms(\Delta \cup \{H\})$, $t_i = x$, $(\Lambda' \parallel \Delta' \rightarrow H') \in \aGnd(\Lambda \parallel \Delta \rightarrow H)$, $\lambda' \in \Lambda'$, and $\lambda' = x \LAOP c$ (where $\LAOP = \{<,>,\leq,\geq,=,\neq\}$ and $c \in \mathbb{Z}$); 
(ii)~an argument position $(P,i)$ is connected to a value $c \in \mathbb{Z}$ if $P(\bar{t})$ with $t_i = c$ appears in a clause in $N$, i.e.,  $(x = c) \in \connectedIneqs(P,i,N)$ if $(\Lambda \parallel \Delta \rightarrow H) \in N$, $P(\bar{t}) \in \atoms(\Delta \cup \{H\})$, and $t_i = c$;
(iii)~an argument position $(P,i)$ is connected to a value $c \in \mathbb{Z}$ if $(P,i)$ is finite and $c \in \aval(P,i,N)$, i.e.,  $(x = c) \in \connectedIneqs(P,i,N)$ if $(P,i)$ is finite and $c \in \aval(P,i,N)$; and 
(iv)~the connection relation is transitive, i.e., $\lambda \in \connectedArgs(Q,j,N)$ if $\lambda \in \connectedIneqs(P,i,N)$ and $(Q,j) \in \connectedArgs(P,i,N)$.

\begin{example}
To highlight the connections in example~\ref{exp:runningecuexp} more clearly, we use the same variable symbol for connected argument positions.
Therefore $(\SpeedTable,1)$ and $(\SpeedTable,2)$ are only connected to themselves and 
$\connectedArgs(\SpeedTable,3,N) = \{(\SpeedTable,3), (\IgnDeg,2)\}$, and 
$\connectedArgs(\Speed,1,N) = \{(\Speed,1),(\IgnDeg,1),$ $(\ResDegArgs,1),(\Conjecture,1)\}$,
Computing the connected argument positions is a little bit more complicated:
first, if a connected argument position is finite, then we have to add all values in $\aval$ as equations to the connected inequalities.
E.g., $\connectedIneqs(\SpeedTable,1,N) = \{x_1 = 0, x_1 = 2000, x_1 = 4000, x_1 = 6000\}$ because $\aval(\SpeedTable,1,N) = \{0, 2000, 4000, 6000\}$.
Second, we have to add all inequalities connected in $\aGnd(N)$.
Again this is possible without explicitly computing $\aGnd(N)$. 
E.g., for the inequality $x_1 \leq x_p$ in clause $C_2$, 
we determine that $x_1$ is connected to the finite argument position $(\SpeedTable,1)$ in $C_2$ and $x_p$ is not connected to any finite argument positions. 
Hence, we have to connect the following variable bounds to all argument positions connected to $x_p$, i.e., $\{x_1 \leq x_p \mid x_1 \in \aval(\SpeedTable,1,N) \} = \{x_p \geq 0, x_p \geq 2000, x_p \geq 4000, x_p \geq 6000\}$ to the argument positions $\connectedArgs(\Speed,1,N)$.
If we apply the above two steps to all clauses,
then we get as connected inequalities:
$\connectedIneqs(\SpeedTable,2,N) = \{x_2 = 2000, x_2 = 4000, x_3 = 6000, x_4 = 8000\}$, 
$\connectedIneqs(\SpeedTable,3,N) = \{y = 1350, y = 1600, y = 1850, y = 2100\}$, and
$\connectedIneqs(\Speed,1,N) = \{x_p < 0, x_p < 2000, x_p < 4000, x_p < 6000, x_p <  8000, x_p \geq 0, x_p \geq 2000, x_p \geq 4000, x_p \geq 6000, x_p \geq  8000\}$.
\end{example}

Now based on these sets we can construct a set of test points as follows:
For each argument position $(P,i)$, 
we partition the reals $\Real$ into intervals such that any variable bound in $\lambda \in \connectedIneqs(P,i,N)$ is satisfied by all points in one such interval $I$ or none. 
Since we are in the Horn case, this is enough to ensure that we derive facts \emph{uniformly} over those intervals and the integers/non-integers.
To be more precise, we derive facts \emph{uniformly} over those intervals and the integers because $P(\bar{a})$ is derivable from $N$ and $a_i \in I \cap \mathbb{Z}$ 
implies that $P(\bar{b})$ is also derivable from $N$, where $b_j = a_j$ for $i \neq j$ and $b_i \in I \cap \mathbb{Z}$. 
Similarly, we derive facts \emph{uniformly} over those intervals and the non-integers because $P(\bar{a})$ is derivable from $N$ and $a_i \in I \setminus \mathbb{Z}$ 
implies that $P(\bar{b})$ is also derivable from $N$, where $b_j = a_j$ for $i \neq j$ and $b_i \in I$. 
As a result, it is enough to pick (if possible) one integer and one non-integer test point per interval to cover the whole clause set.

Formally we compute the interval partition $\iPart(P,i,N)$ and the set of test points $\testpoints(P,i,N)$ as follows:
First we transform all variable bounds $\lambda \in \connectedIneqs(P,i,N)$ into interval borders.
A variable bound $x \LAOP c$ with $\LAOP \in \{\leq,<,>,\geq\}$ in $\connectedIneqs(P,i,N)$ is turned into two interval borders. 
One of them is the interval border implied by the bound itself and the other its negation, e.g., $x \geq 5$ results in the interval border \dblquotes{}$[5$\dblquotes{} and the interval border of the negation \dblquotes{}$5)$\dblquotes{}.
Likewise, we turn every variable bound $x \LAOP c$ with $\LAOP \in \{=,\neq\}$ into all four possible interval borders for $c$, i.e. \dblquotes{}$c)$\dblquotes{}, \dblquotes{}$[c$\dblquotes{}, \dblquotes{}$c]$\dblquotes{}, and \dblquotes{}$(c$\dblquotes{}.
The set of interval borders $\iBorders(P,i,N)$ is then defined as follows:
\[\begin{array}{l l}
\iBorders(P,i,N) = &\left\{\dblquotes{}c]\dblquotes{}, \dblquotes{}(c\dblquotes{} \mid x \triangleleft c \in \connectedIneqs(P,i,N) \text{ where } \triangleleft \in \{\leq,=,\neq,>\} \right\} \; \cup \\
 &\left\{\dblquotes{}c)\dblquotes{}, \dblquotes{}[c\dblquotes{} \mid x \triangleleft c \in \connectedIneqs(P,i,N) \text{ where } \triangleleft \in \{\geq,=,\neq,<\} \right\} \; 
\cup \; \{\dblquotes{}(-\infty\dblquotes{},\dblquotes{}\infty)\dblquotes{}\}
\end{array}\]

The interval partition $\iPart(P,i,N)$ can be constructed by sorting $\iBorders(P,i,N)$ in an ascending order such that 
we first order by the border value---i.e. $\delta < \epsilon$ if $\delta \in \{\dblquotes{}c)\dblquotes{}, \dblquotes{}[c\dblquotes{}, \dblquotes{}c]\dblquotes{}, \dblquotes{}(c\dblquotes{}\}$, $\epsilon \in \{\dblquotes{}d)\dblquotes{}, \dblquotes{}[d\dblquotes{}, \dblquotes{}d]\dblquotes{}, \dblquotes{}(d\dblquotes{}\}$, and $c < d$---and 
then by the border type---i.e. $\dblquotes{}c)\dblquotes{} < \dblquotes{}[c\dblquotes{} < \dblquotes{}c]\dblquotes{} < \dblquotes{}(c\dblquotes{}$.
The result is a sequence $[\ldots,\dblquotes{}\delta_l\dblquotes{},\dblquotes{}\delta_u\dblquotes{},\ldots]$, 
where we always have one lower border $\delta_l$, 
followed by one upper border $\delta_u$. 
We can guarantee that an upper border $\delta_u$ follows a lower border $\delta_l$ because $\iBorders(P,i,N)$ always contains \dblquotes{}$c)$\dblquotes{} together with \dblquotes{}$[c$\dblquotes{} and \dblquotes{}$c]$\dblquotes{} together with \dblquotes{}$(c$\dblquotes{} for $c \in \Int$, so 
always two consecutive upper and lower borders.
Together with \dblquotes{}$(-\infty$\dblquotes{} and \dblquotes{}$\infty)$\dblquotes{} this guarantees that the sorted $\iBorders(P,i,N)$ has the desired structure.
If we combine every two subsequent borders $\delta_l$, $\delta_u$ in our sorted sequence $[\ldots,\dblquotes{}\delta_l\dblquotes{},\dblquotes{}\delta_u\dblquotes{},\ldots]$, then we receive our partition of intervals $\iPart(P,i,N)$.
For instance, if $x < 5$ and $x = 0$ are the only variable bounds in $\connectedIneqs(P,i,N)$, then $\iBorders(P,i,N) = \{\dblquotes{}5)\dblquotes{},\dblquotes{}[5\dblquotes{},\dblquotes{}0)\dblquotes{},\dblquotes{}[0\dblquotes{},\dblquotes{}0]\dblquotes{},\dblquotes{}(0\dblquotes{},\dblquotes{}(-\infty\dblquotes{},\dblquotes{}\infty)\dblquotes{}\}$ and if we sort and combine them we get $\iPart(P,i,N)=\{(-\infty,0),[0,0],(0,5),[5,\infty)\}$.

After constructing $\iPart(P,i,N)$, we can finally construct the set of test points $\testpoints(P,i,N)$ for argument position $(P,i)$.
If $|\aval(P,i,N)| \in \mathbb{N}$, i.e., we determined that $(P,i)$ is finite, 
then $\testpoints(P,i,N) = \aval(P,i,N)$.
If the argument position $(P,i)$ is over a first-order sort $\FolS_i$, i.e., $\sort(P,i)=\FolS_i$, 
then we should always be able to determine that $(P,i)$ is finite because $\Fol_i$ is finite. 
If the argument position $(P,i)$ is over an arithmetic sort, i.e., $\sort(P,i)=\RealS$ or $\sort(P,i)=\IntS$, 
and our approximation could not determine that $(P,i)$ is finite,
then the test-point set $\testpoints(P,i,N)$ for $(P,i)$ consists of at most two points per interval $I \in \iPart(P,i,N)$:
one integer value $a_I \in I \cap \mathbb{Z}$ if $I$ contains integers (i.e. if $I \cap \mathbb{Z} \neq \emptyset$) and one non-integer value $b_I \in I \setminus \mathbb{Z}$ if $I$ contains non-integers (i.e. if $I$ is not just one integer point).
Additionally, we enforce that $\testpoints(P,i,N) = \testpoints(Q,j,N)$ if $\connectedArgs(P,i,N) = \connectedArgs(Q,j,N)$ and both $(P,i)$ and $(Q,j)$ are infinite argument positions.
(In our implementation of this test-point scheme, we optimize the test point selection even further by picking only one test point per interval---if possible an integer value and otherwise a non-integer---if all $\connectedArgs(P,i,N)$ and all variables $x$ connecting them in $N$ have the same sort.
However, we do not prove this optimization explicitly here because the proofs are almost identical to the case for two test points per interval.)

Based on these sets, we can now also define a $\tpfunction$ $\beta$ and an $\epfunction$ $\eta$.
For the $\tpfunction$, we simply assign any argument position to $\testpoints(P,i,N)$, i.e., 
$\beta(P,i) = \testpoints(P,i,N) \cap \sort(P,i)^{\inta}$. 
(The intersection with $\sort(P,i)^{\inta}$ is needed to guarantee that the test-point set of an integer argument position is well-typed.)
This also means that $\beta$ is total and finite. 
For the $\epfunction$ $\eta$, we extrapolate any test-point vector $\bar{a}$ (with $\bar{a} = \bar{x} \sigma$ and $\sigma \in \tg_{\beta}(P(\bar{x}))$) over the (non-)integer subset of the intervals the test points belong to, i.e.,
$\eta(P,\bar{a}) = I'_1 \times \ldots \times I'_n$, where $I'_i  = \{a_i\}$ if we determined that $(P,i)$ is finite and otherwise $I_i$ is the interval $I_i \in \iPart(P,i,N)$ with $a_i \in I_i$ and $I'_i = I_i \cap \mathbb{Z}$ if $a_i$ is an integer value and $I'_i = I_i \setminus \mathbb{Z}$ if $a_i$ is a non-integer value.
Note that this means that $\eta$ might not be complete for every predicate $P$, e.g., 
when $P$ has a finite argument position $(P,i)$ with an infinite domain.
However, both $\beta$ and $\eta$ together still cover the clause set $N$, cover any universal conjecture $N \models \forall \bar{x}. Q(\bar{x})$, and cover any existential conjecture $N \models \exists \bar{x}. Q(\bar{x})$.

\begin{theorem}
\label{thm:coveringdefinition}
The $\tpfunction$ $\beta$ covers $N$. 
The $\tpfunction$ $\beta$ covers an existential conjecture $N \models \exists \bar{x}. Q(\bar{x})$. 
The $\tpfunction$ $\beta$ covers a universal conjecture $N \models \forall \bar{x}. Q(\bar{x})$.
\end{theorem}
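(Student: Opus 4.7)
All three claims hinge on a single \emph{model-extrapolation lemma}: given $\inta \models \mGnd_\beta(N)$, the interpretation $\inta'$ defined by $P^{\inta'} = \bigcup_{\bar{a} \in P^{\inta} \cap \prod_i \beta(P,i)} \eta(P, \bar{a})$ is a model of $N$. The forward directions of all three claims are immediate, since $\mGnd_\beta(N) \subseteq \mGnd(N)$ and the conjecture clauses added in Claims~2 and~3 fit the same schema.

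To prove the lemma I would fix an arbitrary ground instance $C\delta$ of a clause $C = \Lambda \parallel \Delta \rightarrow H \in N$ and build a ``shadow'' well-typed substitution $\delta'$ with $\inta' \models C\delta$ iff $\inta \models C\delta'$. For each variable $x$ of $C$, pick $x\delta' \in \wti(x, C, \beta)$ in the same equivalence class as $x\delta$: the same interval of $\iPart(P,i,N)$ and the same integer/non-integer status for any infinite $(P,i) \in \depend(x, C)$, and $x\delta' = x\delta$ for any finite such position. The consistency of this choice across all $(P,i) \in \depend(x, C)$ is exactly what $\connectedArgs$ and $\connectedIneqs$ were designed to guarantee, since connected positions are forced to share test-point sets, interval partitions, and finiteness status. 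If no such $\delta'$ exists because $x\delta \notin \aval(P,i,N)$ at some finite $(P,i) \in \depend(x, C)$, then by the extrapolation definition the atom at that position is already false in $\inta'$, so the clause is vacuously satisfied (the Horn shape and $\aval$-overapproximation rule out the head-only case).

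Given $\delta'$, truth preservation is routine: each $\LA$ atom in $\Lambda$ after $\aGnd$-simplification is a variable bound drawn from $\connectedIneqs(P,i,N)$, and by construction of $\iPart$ the interval shared by $x\delta$ and $x\delta'$ is uniformly above or below its constant, so the atom has the same truth value under $\delta$ and $\delta'$. Each free atom $P(\bar{t})$ satisfies $\bar{t}\delta' \in \prod_i \beta(P,i)$ and $\bar{t}\delta \in \eta(P, \bar{t}\delta')$, so $P(\bar{t})\delta \in P^{\inta'}$ iff $P(\bar{t})\delta' \in P^{\inta}$ by definition of $\inta'$. Hence $\inta \models C\delta'$ gives $\inta' \models C\delta$, proving Claim~1. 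Claim~2 follows by applying the lemma to $N \cup \{\parallel Q(\bar{x}) \rightarrow \bot\}$. For Claim~3 I would case-split on completeness of $\eta$ for $Q$: if complete, $N_Q$ falsifies $Q$ at every test-point tuple and extrapolation spreads this to the entire sort product, yielding a countermodel to $\forall \bar{x}.Q(\bar{x})$; conversely, the Horn minimal model of $N$ falsifies some $Q(\bar{b})$, hence by uniformity and completeness of $\eta$ also $Q(\bar{a})$ at the corresponding test-point tuple, and its restriction to test-point arguments satisfies $\mGnd_\beta(N) \cup N_Q$. If $\eta$ is incomplete for $Q$ then $N_Q = \emptyset$, and the extrapolated model automatically leaves $Q(\bar{b})$ false for any $\bar{b}$ outside the range of $\eta$, giving a countermodel for free.

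The main obstacle is the consistency of the shadow substitution: a single variable may lie in many argument positions whose test-point sets, interval partitions, and finiteness flags all have to agree. Verifying this hinges on the closure properties of $\connectedArgs$ and $\connectedIneqs$ and the equality $\testpoints(P,i,N) = \testpoints(Q,j,N)$ on connected infinite positions, together with the check that these closures remain intact after the $\aGnd$-simplifications that reduce $\HBS(\SB)\PAG$ to $\HBS(\SB)$---in particular that inequalities introduced by instantiating finite positions are still reflected in $\connectedIneqs$.
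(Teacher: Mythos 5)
Your proposal follows essentially the same strategy as the paper's proof: the forward directions are immediate from $\mGnd_\beta(N)\subseteq\mGnd(N)$, and the reverse directions hinge on (a) extrapolating a model of $\mGnd_\beta(N)$ to a model of $N$ via the $\epfunction$ $\eta$ and (b) a shadow/representative substitution $\delta'$ that places a general grounding $\delta$ into a well-typed test-point grounding with the same interval classes and integrality status. This is precisely the paper's Lemma~\ref{lem:welltypedgroundingrepresentation} used inside Lemma~\ref{lem:satgrounding}, and the universal case splits on completeness of $\eta$ exactly as in Lemma~\ref{lem:univgrounding}. The one structural difference is that you extrapolate from an arbitrary model $\inta$ of $\mGnd_\beta(N)$ while the paper extrapolates from the set of derivable facts (i.e. the minimal Herbrand model); both work for Claim~1, but the paper's choice makes the universal direction cleaner because uniformity of \emph{derivability} over intervals (Lemmas~\ref{lem:uniformpredicateinterpretationZ}/\ref{lem:uniformpredicateinterpretationR}) is precisely what one needs, whereas you have to switch to the minimal model mid-argument for Claim~3 to get that uniformity. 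Your remark that the Horn shape and the $\aval$-overapproximation rule out the problematic head-only case is terse but correct: if $(P,i)$ is finite, $t_i=x$, and $x$ occurs nowhere in $\Delta$ and is not equality-constrained in $\Lambda$, then \textsf{DeriveValues} would have set $\aval(P,i,N)=\Real$ — that is worth spelling out, since the paper's corresponding lemma gets this for free from the derivability hypothesis.

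There is one outright error in the Claim~3 argument. You write that if $\eta$ is complete, ``$N_Q$ falsifies $Q$ at every test-point tuple and extrapolation spreads this to the entire sort product.'' That is not what $N_Q$ says: $N_Q$ is a \emph{single} Horn clause whose body is the conjunction of all $Q(\bar a)$ over $\beta$-test-point tuples $\bar a$, so satisfying $N_Q$ only forces \emph{some} $Q(\bar a)$ to be false, and the extrapolated model then falsifies $Q$ only on the corresponding block $\eta(Q,\bar a)$, not on the entire sort product. This is all that is needed (one false $Q(\bar b)$ already refutes the universal conjecture), so your conclusion stands, but the intermediate claims as phrased are wrong and should be corrected.
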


\begin{example}
Continuation of example~\ref{exp:runningecuexp}:
The majority of argument positions in our example are finite. 
Hence, determining their test point set is equivalent to the over-approximation of derivable values $\aval$ we computed for them:
$\beta(\SpeedTable,1) = \{0,2000,4000,6000\}$,
$\beta(\SpeedTable,2) = \{2000,4000,6000,8000\}$,
$\beta(\SpeedTable,3) = \{1350,1600,1850,2100\}$, and
$\beta(\IgnDeg,2) = \{1350,1600,1850,2100\}$.
The other argument positions are all connected to $(\Speed,1)$ and 
$\connectedIneqs(\Speed,1,N) = \{x_p < 0, x_p < 2000, x_p < 4000, x_p < 6000, x_p <  8000, x_p \geq 0, x_p \geq 2000, x_p \geq 4000, x_p \geq 6000, x_p \geq  8000\}$, 
from which we can compute\newline
\centerline{$\iPart(P,i,N) = \{(-\infty,0),[0,2000),[2000,4000),[4000,6000),[6000,8000),[8000,\infty)\}$}
and select the test point sets
$\beta(\Speed,1) = $ $\beta(\IgnDeg,1) =$ $\beta(\ResDegArgs,1) =$ $\beta(\Conjecture,1) =$ $\{-1, 0, 2000, 4000, 6000, 8000\}$.
(Note that all variables in our problem are over the reals, so we only have to select one test point per interval! 
Moreover, in our previous version of the test point scheme, there would have been more intervals in the partition 
because we would have processed all inequalities, e.g., also those in $\connectedIneqs(\SpeedTable,3,N)$.)
The $\epfunction$ $\eta$ that determines which interval is represented by which test point is 
$\eta(P,1,-1) = (-\infty,0)$, $\eta(P,1,0) = [0,2000)$, $\eta(P,1,2000) = [2000,4000)$, $\eta(P,1,4000) = [4000,6000)$, $\eta(P,1,6000) = [6000,8000)$, $\eta(P,1,8000) =[8000,\infty)$
for the predicates $\Speed$, $\IgnDeg$, $\ResDegArgs$, and $\Conjecture$.
$\eta$ behaves like the identity function for all other argument positions because they are finite.
\end{example}

\subsubsection{From a Test-Point Function to a Datalog Hammer}
We can use the covering definitions, e.g., $\mGnd_{\beta}(N)$ is equisatisfiable to $N$, 
to instantiate our clause set (and conjectures) with numbers. 
As a result, we can simply evaluate all theory atoms and thus reduce our $\HBS(\SB)\PAG$ clause sets/conjectures to ground $\HBS$ clause sets, which means we could reduce our input into formulas without any arithmetic theory that can be solved by any Datalog reasoner.
There is, however, one problem.
The set $\mGnd_{\beta}(N)$ grows exponentially with regard to the maximum number of variables $n_C$ in any clause in $N$, i.e. $O(|\mGnd_{\beta}(N)|) = O(|N|\cdot|B|^{n_C})$,
where $B = \max_{(P,i)}(\beta(P,i))$ is the largest test-point set for any argument position.
Since $n_C$ is large for realistic examples, e.g., in our examples the size of $n_C$ ranges from 9 to 11 variables, the finite abstraction is often too large to be solvable in reasonable time. 
Due to this blow-up, we have chosen an alternative approach for our Datalog hammer.
This hammer exploits the ideas behind the covering definitions and will allow us to make the same ground deductions, but instead of grounding everything, we only need to (i)~ground the negated conjecture over our $\tpfunction$ and (ii)~provide a set of ground facts that define which theory atoms are satisfied by our test points.
As a result, the hammered formula is much more concise and we need no actual theory reasoning to solve the formula.
In fact, we can solve the hammered formula by greedily applying unit resolution until this produces the empty clause---which would mean the conjecture is implied---or until it produces no more new facts---which would mean we have found a counter example.
In practice, greedily applying resolution is not the best strategy and we recommend to use more advanced $\HBS$ techniques for instance those used by a state-of-the-art Datalog reasoner.

The Datalog hammer takes as input 
(i)~an $\HBS(\SB)\PAG$ clause set $N$ and 
(ii)~optionally a universal conjecture $\forall \bar{y}. P(\bar{y})$.
The case for existential conjectures is handled by encoding the conjecture $N \models \exists \bar{x}. Q(\bar{x})$ as the clause set $N \cup \{Q(\bar{x}) \rightarrow \bot\}$, which is unsatisfiable if and only if the conjecture holds.
Given this input, the Datalog hammer first computes the $\tpfunction$ $\beta$ and the $\epfunction$ $\eta$ as described above.
Next, it computes four clause sets that will make up the Datalog formula. 
The first set $\tren_N(N)$ is computed by abstracting away any arithmetic from the clauses $(\Lambda \parallel \Delta \rightarrow H) \in N$.
This is done by replacing each theory atom $A$ in $\Lambda$ with a literal $P_A(\bar{x})$, where $\vars(A) = \vars(\bar{x})$ and $P_A$
is a fresh predicate.
The abstraction of the theory atoms is necessary because Datalog does not support non-constant function symbols (e.g., $+,-$) that would otherwise appear in approximately grounded theory atoms.
Moreover, it is necessary to add extra sort literals $\neg Q_{(P,i,S)}(x)$ for some of the variables $x \in \vars(H)$, where $H = P(\bar{t})$, $t_i = x$, $\sort(x) = S$, and $Q_{(P,i,S)}$ is a fresh predicate.
This is necessary in order to define the test point set for $x$ if $x$ does not appear in $\Lambda$ or in $\Delta$.
It is also necessary in order to filter out any test points that are not integer values if $x$ is an integer variable (i.e. $\sort(x) = \IntS$) but
connected only to real sorted argument positions in $\Delta$ (i.e. $\sort(Q,j) = \RealS$ for all $(Q,j) \in \depend(x,\Delta)$).
It is possible to reduce the number of fresh predicates needed, e.g., by reusing the same predicate for two theory atoms whose variables range over the same sets of test points.
The resulting abstracted clause has then the form $\Delta_T, \Delta_S, \Delta \rightarrow H$, where $\Delta_T$ contains the abstracted theory literals (e.g. $P_A(\bar{x}) \in \Delta_T$) and $\Delta_S$ the ``sort'' literals (e.g. $Q_{(P,i,S)}(x) \in \Delta_S$).
The second set is denoted by $N_C$ and it is empty if we have no universal conjecture or if $\eta$ does not cover our conjecture.
Otherwise, $N_C$ contains the ground and negated version $\phi$ of our universal conjecture $\forall \bar{y}. P(\bar{y})$ . $\phi$ has the form $\Delta_{\phi} \rightarrow \bot$, where $\Delta_{\phi} = \mGnd_{\beta}(P(\bar{y}))$ contains all literals $P(\bar{y})$ for all groundings over $\beta$.
We cannot skip this grounding but the worst-case size of $\Delta_\phi$ is $O(\mGnd_{\beta}(P(\bar{y}))) = O(|B|^{n_{\phi}})$,
where $n_{\phi} = |\bar{y}|$, which is in our applications typically much smaller than the maximum number of variables $n_C$ contained in some clause in $N$.
The third set is denoted by $\tfacts(N,\beta)$ and contains a fact $\tren_N(A)$ for every ground theory atom $A$ contained in the theory part $\Lambda$ of a clause $(\Lambda \parallel \Delta \rightarrow H) \in \mGnd_{\beta}(N)$ such that $A$ simplifies to true. 
This is enough to ensure that our abstracted theory predicates evaluate every test point in every satisfiable interpretation $\inta$ to true that also would have evaluated to true in the actual theory atom.
Alternatively, it is also possible to use a set of axioms and a smaller set of facts and let the Datalog reasoner compute all relevant theory facts for itself.
The set $\tfacts(N,\beta)$ can be computed without computing $\mGnd_{\beta}(N)$ if we simply iterate over all theory atoms $A$ in all constraints $\Lambda$ of all clauses $Y = \Lambda \parallel \Delta \rightarrow H$ (with $Y \in N$) and compute all well typed groundings $\tau \in \tg_{\beta}(Y)$ such that $A \tau$ simplifies to true. 
This can be done in time $O(\mu(n_v) \cdot n_L \cdot {|B|}^{n_v})$ and the resulting set $\tfacts(N,\beta)$ has worst-case size $O(n_A \cdot {|B|}^{n_v})$,
where $n_L$ is the number of literals in $N$, $n_v$ is the maximum number of variables $|\vars(A)|$ in any theory atom $A$ in $N$, $n_A$ is the number of different theory atoms in $N$, and $\mu(x)$ is the time needed to simplify a theory atom over $x$ variables to a variable bound.
The last set is denoted by $\sfacts(N,\beta)$ and contains a fact $Q_{(P,i,S)}(a)$ for every fresh sort predicate $Q_{(P,i,S)}$ added during abstraction and every $a \in \beta(P,i) \cap S^\inta$. 
This is enough to ensure that $Q_{(P,i,S)}$ evaluates to true for every test point assigned to the argument position $(P,i)$ filtered by the sort $S$.
Please note that already satifiability testing for $\BS$ clause sets is NEXPTIME-complete in general, and DEXPTIME-complete for the Horn case~\cite{Lewis80,Plaisted84}.
So when abstracting to a polynomially decidable clause set (ground $\HBS$) an exponential factor is unavoidable.

\begin{lemma}\label{lem:hammeringequivalence}
$N$ is equisatisfiable to its hammered version $\tren_N(N) \cup \tfacts(N,\beta) \cup \sfacts(N,\beta)$.
The conjecture $N \models \exists \bar{y}. Q(\bar{y})$ is false iff $N_D = \tren_N'(N') \cup \tfacts(N',\beta) \cup \sfacts(N',\beta)$ is satisfiable with $N' = N \cup \{Q(\bar{y}) \rightarrow \bot\}$. 
The conjecture $N \models \forall \bar{y}. Q(\bar{y})$ is false iff $N_D = \tren_N(N) \cup \tfacts(N,\beta) \cup \sfacts(N,\beta) \cup N_C$ is satisfiable. 
\end{lemma}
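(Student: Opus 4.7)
The plan is to reduce all three claims to a single core equisatisfiability result between $\mGnd_\beta(N)$ and the hammered clause set, and then to combine that core result with Theorem~\ref{thm:coveringdefinition} to obtain the two conjecture cases. Concretely, the central claim to establish first is: $\mGnd_\beta(N)$ is equisatisfiable to $\tren_N(N) \cup \tfacts(N,\beta) \cup \sfacts(N,\beta)$. Once this is in hand, Claim~1 follows directly because Theorem~\ref{thm:coveringdefinition} states that $\beta$ covers $N$ (i.e.\ $N$ and $\mGnd_\beta(N)$ are equisatisfiable). Claim~2 follows by applying the core result to $N' = N \cup \{Q(\bar{y}) \rightarrow \bot\}$, which is a $\HBS(\SB)\PAG$ clause set whose unsatisfiability encodes the existential conjecture, and invoking the existential-conjecture part of Theorem~\ref{thm:coveringdefinition}. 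Claim~3 is slightly more delicate: by Theorem~\ref{thm:coveringdefinition} the falsity of $N \models \forall \bar{y}. Q(\bar{y})$ corresponds to satisfiability of $\mGnd_\beta(N) \cup N_Q$, and I would then observe that $N_Q$ (when $\eta$ is complete for $Q$) coincides exactly with $N_C$ after abstraction, and that when $\eta$ is not complete, $N_C$ is empty while the covering condition still delivers the required equivalence via the core result alone.

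For the core equisatisfiability, I would argue both directions by constructing canonical interpretations. For the forward direction, given $\inta \models \mGnd_\beta(N)$ I would define $\inta'$ on the free predicates as $\inta$ itself, on each fresh theory predicate $P_A$ as exactly the set of tuples $\bar{a}$ for which the original theory atom $A[\bar{x}/\bar{a}]$ evaluates to true in $\sigval^{\LA}$, and on each fresh sort predicate $Q_{(P,i,S)}$ as $\beta(P,i)\cap S^{\inta}$. By this construction $\inta' \models \tfacts(N,\beta)$ and $\inta' \models \sfacts(N,\beta)$ hold trivially, and any abstracted clause $\Delta_T, \Delta_S, \Delta \rightarrow H$ in $\tren_N(N)$ whose body is satisfied by some ground assignment corresponds, via the choice of $P_A$ and $Q_{(P,i,S)}$, to a well-typed ground instance $(\Lambda \parallel \Delta \rightarrow H)\sigma \in \mGnd_\beta(N)$ whose theory part is true and whose first-order body is satisfied in $\inta$; hence $H\sigma$ must be true. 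For the backward direction, given $\inta' \models \tren_N(N) \cup \tfacts(N,\beta) \cup \sfacts(N,\beta)$ I would restrict $\inta'$ to the free predicates, yielding $\inta$, and show $\inta \models \mGnd_\beta(N)$. The key point is that every instance in $\mGnd_\beta(N)$ whose theory constraint $\Lambda\sigma$ evaluates to true has the property that each of its ground theory atoms lies in $\tfacts(N,\beta)$ (by construction of that set), and each of its typing witnesses lies in $\sfacts(N,\beta)$, so the corresponding abstracted body literals are satisfied in $\inta'$ and the head follows.

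The main obstacle, in my view, is bookkeeping around the sort literals $\Delta_S$: one must verify that the $Q_{(P,i,S)}$ literals correctly enforce integrality when an integer-sorted variable $x$ occurs in a clause only through real-sorted argument positions $(Q,j)$, and that variables occurring only in $H$ (not in $\Lambda$ or $\Delta$) are still appropriately restricted to $\beta(P,i)$. This requires a careful case analysis of how $\tren_N$ inserts the $\Delta_S$ literals and why $\sfacts(N,\beta)$ contains precisely the right set of constants to reproduce the well-typedness condition $\sigma \in \tg_\beta(\cdot)$. A secondary subtlety is ensuring that the equivalence is preserved under the simplification of theory atoms assumed for $\aGnd$: since $\HBS(\SB)\PAG$ guarantees that $\aGnd(N)$ lies in $\HBS(\SB)$, every relevant ground theory atom in $\mGnd_\beta(N)$ is a simple variable bound on an integer constant, so $\tfacts(N,\beta)$ is well-defined and finite, and the evaluation of theory atoms under $\sigval^{\LA}$ matches the interpretation of the $P_A$ predicates in $\inta'$.

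Finally, for Claim~3 I would split on whether $\eta$ is complete for $Q$. If it is, then $N_C = \{\parallel \mGnd_\beta(Q(\bar{y})) \rightarrow \bot\}$ equals $N_Q$ (it contains only free literals and therefore is unaffected by $\tren_N$), so combining the core result with the universal part of Theorem~\ref{thm:coveringdefinition} gives the claim. If $\eta$ is not complete for $Q$, then $N_C = N_Q = \emptyset$ and the same argument goes through with no contribution from $N_C$. In both cases, the conclusion is that $N \models \forall \bar{y}. Q(\bar{y})$ is false iff $N_D$ is satisfiable, as required.
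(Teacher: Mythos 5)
Your overall decomposition matches the paper's: prove a core equisatisfiability between $\mGnd_\beta(N)$ and $N_H = \tren_N(N) \cup \tfacts(N,\beta) \cup \sfacts(N,\beta)$, then chain it with Theorem~\ref{thm:coveringdefinition} to get the three claims. The reduction of the existential case to the plain case via $N'$ and the case split on whether $\eta$ is complete for $Q$ in the universal case are also the paper's moves. So the architecture is right.

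However, your forward direction has a real gap, and it is not the one you flag as the ``main obstacle.'' You start from an arbitrary $\inta \models \mGnd_\beta(N)$, keep the free predicates as in $\inta$, and interpret each $P_A$ as \emph{all} test-point tuples $\bar a$ for which $A[\bar a]$ evaluates to true. But $\tfacts(N,\beta)$ only records tuples that arise from groundings $\tau \in \tg_\beta(Y)$ that are well-typed for the \emph{whole} clause $Y$, not merely tuples satisfying $A$. With your over-interpretation of $P_A$, a grounding $\sigma$ of a clause $\Delta_T, \Delta_S, \Delta \rightarrow H$ can satisfy the entire body even when $\sigma$ is not in $\tg_\beta(\Lambda \parallel \Delta \rightarrow H)$: $\Delta_T\sigma$ holds because $A\sigma$ is true, $\Delta_S\sigma$ holds because the sort literals only guard a \emph{subset} of the variables, and $\Delta\sigma$ can hold because an arbitrary $\inta$ is free to interpret a free predicate $Q$ as true on test points that never appear in $\beta(Q,j)$ or as a derivable value. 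Then $(\Lambda \parallel \Delta \rightarrow H)\sigma \notin \mGnd_\beta(N)$, so you have no way to conclude $H\sigma$ is true. The argument does not go through for an arbitrary $\inta$.

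The paper avoids this by constructing both interpretations as \emph{minimal Herbrand models built from derivable facts} (free predicate $P$ is interpreted as exactly $\dfacts(P,\cdot)$, $P_A$ as exactly $\tfacts(N,\beta)$, $Q_{(P,i,S)}$ as exactly $\sfacts(N,\beta)$), and it supplies the missing bridge via an auxiliary derivability lemma (a fact is derivable from $N_H$ iff it is derivable from $\mGnd_\beta(N)$), proved by structural induction on derivations. That lemma is what lets you transfer the truth of $\Delta\sigma$ between the two clause sets without any appeal to arbitrary models. To repair your proof you would need to (i) interpret $P_A$ as exactly $\tfacts(N,\beta)$ so that $\Delta_T\sigma$ being true forces the well-typedness of $\sigma$ on $\vars(\Lambda)$, (ii) pass from an arbitrary $\inta$ to the minimal Herbrand model of $\mGnd_\beta(N)$ so that $\Delta\sigma$ being true forces its atoms to be derivable (hence on derivable values), and (iii) establish the bi-directional derivability result explicitly, since it is the load-bearing step rather than the sort-literal bookkeeping you singled out.
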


Note that $\tren_N(N) \cup \tfacts(N,\beta) \cup \sfacts(N,\beta) \cup N_C$ is only a $\HBS$ clause set over a finite set of constants and not yet a Datalog input file.
It is well known that such a formula can be transformed easily into a Datalog problem by adding a nullary predicate Goal and adding it as a positive literal to any clause without a positive literal.
Querying for the Goal atom returns true if the $\HBS$ clause set was unsatisfiable and false otherwise.

\begin{example}
The hammered formula for example~\ref{exp:runningecuexp} looks as follows.
The set of renamed clauses $\tren_N(N)$ consists of all the previous clauses in $N$, 
except that inequalities have been abstracted to new first-order predicates:\newline
$D'_1 : \SpeedTable(0,2000,1350), \quad D'_2 : \SpeedTable(2000,4000,1600),$\newline
$D'_3: \SpeedTable(4000,6000,1850), \quad D'_4 : \text{\SpeedTable}(6000,8000,2100),$ \newline
$C'_1 : P_{0 \leq x_p}(x_p), P_{x_p < 8000}(x_p) \rightarrow \Speed(x_p),$\newline 
$C'_2 : P_{x_1 \leq x_p}(x_1,x_p), P_{x_p < x_2}(x_p,x_2), \Speed(x_p), \SpeedTable(x_1,x_2,y) \rightarrow \IgnDeg(x_p,y), $\newline
$C'_{3} : \IgnDeg(x_p,z) \rightarrow \ResDegArgs(x_p), \quad
C'_{4} : \ResDegArgs(x_p) \rightarrow \Conjecture(x_p), $\newline
$C'_{5} : P_{x_p \geq 8000}(x_p) \rightarrow \Conjecture(x_p), \quad
C'_{6} : P_{x_p < 0}(x_p) \rightarrow \Conjecture(x_p), $\newline
The set $\tfacts(N,\beta)$ defines for which test points those new predicates evaluate to true:
$\{P_{0 \leq x_p}(0)$, $P_{0 \leq x_p}(2000)$, $P_{0 \leq x_p}(4000)$, $P_{0 \leq x_p}(6000)$, $P_{0 \leq x_p}(8000)$, 
$P_{x_p < 8000}(-1)$,\newline $P_{x_p < 8000}(0)$, $P_{x_p < 8000}(2000)$, $P_{x_p < 8000}(4000)$, $P_{x_p < 8000}(6000)$, 
$P_{x_1 \leq x_p}(0,0)$, \newline $P_{x_1 \leq x_p}(0,2000)$, $P_{x_1 \leq x_p}(0,4000)$, $P_{x_1 \leq x_p}(0,6000)$, $P_{x_1 \leq x_p}(0,8000)$,
$P_{x_1 \leq x_p}(2000,2000)$, \newline $P_{x_1 \leq x_p}(2000,4000)$, $P_{x_1 \leq x_p}(2000,6000)$, $P_{x_1 \leq x_p}(2000,8000)$,
$P_{x_1 \leq x_p}(4000,4000)$, \newline$P_{x_1 \leq x_p}(4000,6000)$, $P_{x_1 \leq x_p}(4000,8000)$,
$P_{x_1 \leq x_p}(6000,6000)$, $P_{x_1 \leq x_p}(6000,8000)$,\newline
$P_{x_p < x_2}(-1,2000)$, $P_{x_p < x_2}(0,2000)$,
$P_{x_p < x_2}(-1,4000)$, $P_{x_p < x_2}(0,4000)$,\newline $P_{x_p < x_2}(2000,4000)$,
$P_{x_p < x_2}(-1,6000)$, $P_{x_p < x_2}(0,6000)$, $P_{x_p < x_2}(2000,6000)$, \newline $P_{x_p < x_2}(4000,6000)$,
$P_{x_p < x_2}(-1,8000)$, $P_{x_p < x_2}(0,8000)$, $P_{x_p < x_2}(2000,8000)$, \newline$P_{x_p < x_2}(4000,8000)$, $P_{x_p < x_2}(6000,8000)$,
$P_{x_p \geq 800}(8000)$,
$P_{x_p < 0}(-1)\}$ \newline
$\sfacts(N,\beta) = \emptyset$ because there are no fresh sort predicates.
The hammered negated conjecture is $N_C := \Conjecture(-1)$, $\Conjecture(0)$, $\Conjecture(2000)$, $\Conjecture(4000)$, $\Conjecture(6000)$, $\Conjecture(8000)$ $\rightarrow \bot$
and lets us derive false if and only if we can derive $\Conjecture(a)$ for all test points $a \in \beta(\Conjecture,1)$.
\end{example}

\section{Implementation and Experiments} \label{sec:experiments}

We have implemented the sorted Datalog hammer as an extension to the SPASS-SPL system~\cite{BrombergerEtAl21FROCOS} (option \verb|-d|) (SSPL in the table). 
By default the resulting formula is then solved with the Datalog reasoner VLog.
The previously file-based combination with the Datalog reasoner VLog has been replaced by an integration of VLog into SPASS-SPL via the VLog API.
We focus here only on the sorted extension and refer to~\cite{BrombergerEtAl21FROCOS} for an introduction into coupling of the two reasoners.
Note that the sorted Datalog hammer itself is not fine tuned towards the capabilities of a specific Datalog 
reasoner nor VLog towards the sorted Datalog hammer.

%

\begin{figure}[t]
\centering
  {\scriptsize
  \setlength{\tabcolsep}{1pt}
    \begin{tabular}{|l|c|c|c|c||r|r|r||r|r|r||r|r|r|r|}
    \hline
        \multicolumn{1}{|c|}{Problem} & \multicolumn{1}{|c|}{Q} & \multicolumn{1}{|c||}{Status} & \multicolumn{1}{|c|}{$|N|$} & \multicolumn{1}{|c|}{vars} & \multicolumn{1}{|c|}{$|B^m|$} & \multicolumn{1}{|c|}{$|\Delta_{\phi}|$} & \multicolumn{1}{|c||}{SSPL} & \multicolumn{1}{|c|}{$|B^s|$} & \multicolumn{1}{|c|}{$|\Delta^o_{\phi}|$} & \multicolumn{1}{|c||}{SSPL06} & \multicolumn{1}{|c|}{vampire} & \multicolumn{1}{|c|}{spacer} & \multicolumn{1}{|c|}{z3} & \multicolumn{1}{|c|}{cvc4} \\ \hline \hline
        lc\_e1 & $\exists$ & true & 139 & 9 & 9 & 0 & \textbf{< 0.1s} & 45 & 0 & \textbf{< 0.1s} & \textbf{< 0.1s} & \textbf{< 0.1s} & 0,1 & \textbf{< 0.1s} \\ \hline
        lc\_e2 & $\exists$ & false & 144 & 9 & 9 & 0 & \textbf{< 0.1s} & 41 & 0 & \textbf{< 0.1s} & \textbf{< 0.1s} & \textbf{< 0.1s} & - & - \\ \hline
        lc\_e3 & $\exists$ & false & 138 & 9 & 9 & 0 & \textbf{< 0.1s} & 37 & 0 & \textbf{< 0.1s} & \textbf{< 0.1s} & \textbf{< 0.1s} & - & - \\ \hline
        lc\_e4 & $\exists$ & true & 137 & 9 & 9 & 0 & \textbf{< 0.1s} & 49 & 0 & \textbf{< 0.1s} & \textbf{< 0.1s} & \textbf{< 0.1s} & \textbf{< 0.1s} & \textbf{< 0.1s} \\ \hline
        lc\_e5 & $\exists$ & false & 152 & 13 & 9 & 0 & 33.5s & - & - & N/A & \textbf{< 0.1s} & - & - & - \\ \hline
        lc\_e6 & $\exists$ & true & 141 & 13 & 9 & 0 & 42.8s & - & - & N/A & \textbf{0.1s} & 3.3s & 11.5s & 0.4s \\ \hline
        lc\_e7 & $\exists$ & false & 141 & 13 & 9 & 0 & 41.4s & - & - & N/A & \textbf{< 0.1s} & 7.6s & - & - \\ \hline
        lc\_e8 & $\exists$ & false & 141 & 13 & 9 & 0 & 32.5s & - & - & N/A & \textbf{< 0.1s} & 2.1s & - & - \\ \hline
        lc\_u1 & $\forall$ & false & 139 & 9 & 9 & 27 & \textbf{< 0.1s} & 45 & 27 & \textbf{< 0.1s} & \textbf{< 0.1s} & N/A & - & - \\ \hline
        lc\_u2 & $\forall$ & false & 144 & 9 & 9 & 27 & \textbf{< 0.1s} & 41 & 27 & \textbf{< 0.1s} & \textbf{< 0.1s} & N/A & - & - \\ \hline
        lc\_u3 & $\forall$ & true & 138 & 9 & 9 & 27 & \textbf{< 0.1s} & 37 & 27 & \textbf{< 0.1s} & \textbf{< 0.1s} & N/A & \textbf{< 0.1s} & \textbf{< 0.1s} \\ \hline
        lc\_u4 & $\forall$ & false & 137 & 9 & 9 & 27 & \textbf{< 0.1s} & 49 & 27 & \textbf{< 0.1s} & \textbf{< 0.1s} & N/A & - & - \\ \hline
        lc\_u5 & $\forall$ & false & 154 & 13 & 9 & 3888 & 32.4s & - & - & N/A & \textbf{0.1s} & N/A & - & - \\ \hline
        lc\_u6 & $\forall$ & true & 154 & 13 & 9 & 3888 & 32.5s & - & - & N/A & \textbf{2.3s} & N/A & - & - \\ \hline
        lc\_u7 & $\forall$ & true & 141 & 13 & 9 & 972 & 32.3s & - & - & N/A & \textbf{0.2s} & N/A & - & - \\ \hline
        lc\_u8 & $\forall$ & false & 141 & 13 & 9 & 1259712 & \textbf{48.8s} & - & - & N/A & 2351.4s & N/A & - & - \\ \hline
        ecu\_e1 & $\exists$ & false & 757 & 10 & 96 & 0 & \textbf{< 0.1s} & 624 & 0 & 1.3s & 0.2s & 0.1s & - & - \\ \hline
        ecu\_e2 & $\exists$ & true & 757 & 10 & 96 & 0 & \textbf{< 0.1s} & 624 & 0 & 1.3s & 0.2s & 0.1s & 1.4s & 0.4s \\ \hline
        ecu\_e3 & $\exists$ & false & 775 & 11 & 196 & 0 & 50.1s & 660 & 0 & 41.5s & 3.1s & \textbf{0.1s} & - & - \\ \hline
        ecu\_u1 & $\forall$ & true & 756 & 11 & 96 & 37 & \textbf{0.1s} & 620 & 306 & 1.1s & 32.8s & N/A & 197.5s & 0.4s \\ \hline
        ecu\_u2 & $\forall$ & false & 756 & 11 & 96 & 38 & \textbf{0.1s} & 620 & 307 & 1.1s & 32.8s & N/A & - & - \\ \hline
        ecu\_u3 & $\forall$ & true & 745 & 9 & 88 & 760 & \textbf{< 0.1s} & 576 & 11360 & 0.7s & 1.2s & N/A & 239.5s & 0.1s \\ \hline
        ecu\_u4 & $\forall$ & true & 745 & 9 & 486 & 760 & \textbf{< 0.1s} & 2144 & 237096 & 15.9s & 1.2s & N/A & 196.0s & 0.1s \\ \hline
        ecu\_u5 & $\forall$ & true & 767 & 10 & 96 & 3900 & \textbf{0.1s} & 628 & 415296 & 31.9s & - & N/A & - & - \\ \hline
        ecu\_u6 & $\forall$ & false & 755 & 10 & 95 & 3120 & \textbf{< 0.1s} & 616 & 363584 & 14.4s & 597.8 & N/A & - & - \\ \hline
        ecu\_u7 & $\forall$ & false & 774 & 11 & 196 & 8400 & \textbf{48.9s} & 656 & 2004708 & - & - & N/A & - & - \\ \hline
        ecu\_u8 & $\forall$ & true & 774 & 11 & 196 & 8400 & \textbf{48.7s} & 656 & 2004708 & - & - & N/A & - & - \\ \hline 
    \end{tabular}
  }
  \caption{Benchmark results and statistics}
  \label{fig:toolchainresults}
\end{figure}

In order to test the progress in efficiency of our sorted hammer, 
we ran the benchmarks of the lane change assistant and engine ECU  from~\cite{BrombergerEtAl21FROCOS} plus
more sophisticated, extended formalizations. While for the ECU benchmarks in~\cite{BrombergerEtAl21FROCOS} we modeled
ignition timing computation adjusted by inlet temperature measurements, the new benchmarks take also
gear box protection mechanisms into account. 
The lane change examples in~\cite{BrombergerEtAl21FROCOS} 
only simulated the supervisor for lane change assistants over some real-world instances.
The new lane change benchmarks check properties for all potential inputs.
The universal ones check that any suggested action by a lane change assistant is either proven as correct or disproven by our supervisor.
The existential ones check safety properties, e.g., that the supervisor never returns both a proof and a disproof for the same input.
We actually used SPASS-SPL to debug a prototype supervisor for lane change assistants during its development.
The new lane change examples are based on versions generated during this debugging process where SPASS-SPL found the following bugs:
(i)~it did not always return a result, 
(ii)~it declared actions as both safe and unsafe at the same time, and 
(iii)~it declared actions as safe although they would lead to collisions.
The supervisor is now fully verified.

The names of the problems are formatted so the lane change examples start with lc and the ECU examples start with ecu.
Our benchmarks are prototypical for the complexity of $\HBS(\SB)$ reasoning in that they cover all abstract relationships between
conjectures and $\HBS(\SB)$ clause sets. With respect to our two case studies we have many more examples showing respective characteristics.
We would have liked to run benchmarks from other sources, 
but could not find any problems in the SMT-LIB~\cite{BarrettMRST10,SMT-LIB} or CHC-COMP~\cite{CHC-COMP} benchmarks within the range of what our hammer can currently accept.
Either the arithmetic part goes beyond $\SB$ or there are further theories involved such as equality on first-order symbols.

For comparison, we also tested several state-of-the-art theorem provers for related logics (with the best settings we found): SPASS-SPL-v0.6 (SSPL06 in the table) that uses the original version of our Datalog Hammer~\cite{BrombergerEtAl21FROCOS} with settings \verb|-d| for existential and \verb|-d -n| for universal conjectures;
the satisfiability modulo theories (SMT) solver \emph{cvc4-1.8}~\cite{BarrettCDHJKRT:11} with settings \verb|--multi-trigger-cache| \verb|--full-saturate-quant|; the SMT solver \emph{z3-4.8.12}~\cite{deMouraBjorner:08} with its default settings;
the constrained horn clause (CHC) solver \emph{spacer}~\cite{KomuravelliGC14} with its default settings;
and the first-order theorem prover \emph{vampire-4.5.1}~\cite{RiazanovVoronkov02} with settings \verb|--memory_limit 8000| \verb|-p off|, i.e., with memory extended to 8GB and without proof output.
For the SMT/CHC solvers, we directly transformed the benchmarks into their respective formats. 
Vampire gets the same input as VLog transformed into the TPTP format~\cite{Sutcliffe17}.
Our experiments with vampire investigate how superposition reasoners perform on the hammered benchmarks compared to Datalog reasoners.

For the experiments, we used the TACAS~22 artifact evaluation VM (Ubuntu 20.04 with 8 GB RAM and a single processor core) on a system with an Intel Core i7-9700K CPU with eight 3.60GHz cores. 
Each tool got a time limit of 40 minutes for each problem.

The table in Fig.~\ref{fig:toolchainresults} lists for each benchmark problem: the name of the problem (Problem);
the type of conjecture (Q), i.e., whether the conjecture is existential $\exists$ or universal $\forall$; the status of the conjecture (Status);
number of clauses ($|N|$); maximum number of variables in a clause (vars); 
the size of the largest test-point set introduced by the sorted/original Hammer ($B^s$/$B^o$); the size of the hammered universal conjecture ($|\Delta_{\phi}|$/$|\Delta^o_{\phi}|$ for sorted/original); 
the remaining columns list the time needed by the tools to solve the benchmark problems. 
An entry "N/A" means that the benchmark example cannot be expressed in the tools input format, e.g., it is not possible to encode a universal conjecture (or, to be more precise, its negation) in the CHC format and SPASS-SPL-v0.6 is not sound when the problem contains integer variables. 
An entry "-" means that the tool ran out of time, ran out of memory, exited with an error or returned unknown.

The experiments show that SPASS-SPL (with the sorted Hammer) is orders of magnitudes faster than SPASS-SPL-v0.6 (with the original Hammer) on problems with universal conjectures.
On problems with existential conjectures, 
we cannot observe any major performance gain compared to the original Hammer.
Sometimes SPASS-SPL-v0.6 is even slightly faster (e.g. ecu\_e3). 
Potential explanations are:
First, the number of test points has a much larger impact on universal conjectures because the size of the hammered universal conjecture increases exponentially with the number of test points.
Second, our sorted Hammer needs to generate more abstracted theory facts than the original Hammer because the latter can reuse abstraction predicates for theory atoms that are identical upto variable renaming.
The sorted Hammer can reuse the same predicate only if variables also range over the same sets of test points, which we have not yet implemented.

Compared to the other tools, SPASS-SPL is the only one that solves all problems in reasonable time.
It is also the only solver that can decide in reasonable time whether a universal conjecture is \emph{not} a consequence.
This is not surprising because to our knowledge SPASS-SPL is the only theorem prover that implements a decision procedure for \HBS(\SB).
On the problems with existential conjectures, our tool-chain solves all of the problems in under a minute and with comparable times to the best tool for the problem.
The only exception are problems that contain a lot of superfluous clauses, i.e., clauses that are not needed to confirm/refute the conjecture.
The reason might be that VLog derives all facts for the input problem in a breadth-first way, which is not very efficient if there are a lot of superfluous clauses.
Vampire coupled with our sorted Hammer returns the best results for those problems.
Vampire performed best on the hammered problems among all first-order theorem provers we tested,
including iProver~\cite{Korovin08}, E~\cite{SchulzEtAl19}, and SPASS~\cite{WeidenbachEtAlSpass2009}. We tested all provers in default theorem
proving mode with adjusted memory limits.
The experiments with the first-order provers showed that our hammer also works reasonably well for them, but they do not scale well if the size and the complexity of the universal conjectures increases.
For problems with existential conjectures, the CHC solver spacer is often the best, but as a trade-off it is unable to handle universal conjectures.
The instantiation techniques employed by cvc4 are good for proving some universal conjectures, 
but both SMT solvers seem to be unable to disprove conjectures.


\section{Conclusion} \label{sec:conclusion}

We have presented an extension of our previous Datalog hammer~\cite{BrombergerEtAl21FROCOS} supporting
a more expressive input logic resulting in more elegant and more detailed supervisor formalizations, and
through a soft typing discipline supporting more efficient reasoning.
Our experiments show, compared to~\cite{BrombergerEtAl21FROCOS}, that our performance  on existential conjectures is at the same level as SMT
and CHC solvers. The complexity of queries we can handle in reasonable
time has significantly increased, see Section~\ref{sec:experiments}, Figure~\ref{fig:toolchainresults}.
Still SPASS-SPL is the only solver that can prove and disprove universal queries. 
The file interface between SPASS-SPL and VLog
has been replaced by a close coupling resulting in a more comfortable application. 

Our contribution here solves the third point for future work mentioned in~\cite{BrombergerEtAl21FROCOS}
although there is still room to also improve our soft typing discipline.
In the future, we want SPASS-SPL to produce explications that prove that its translations are correct.
Another direction is to exploit specialized Datalog expressions and techniques,
e.g., aggregation and stratified negation, to increase the efficiency of our tool-chain and to lift some restrictions from our input formulas.
Finally, our hammer can be seen as part of an overall reasoning methodology for the class of $\BS(\LA)$ formulas
which we presented in~\cite{BrombergerFW21}. We will implement and further develop this methodology and integrate
our Datalog hammer.


\smallskip\noindent
{\bf Acknowledgments:} This work was funded by DFG grant 389792660 as part of
\href{http://perspicuous-computing.science}{TRR~248 (CPEC)},
by BMBF in project \href{https://www.scads.de}{ScaDS.AI},
and by the \href{https://cfaed.tu-dresden.de/}{Center for Advancing Electronics Dresden} (cfaed).
We thank our anonymous reviewers for their constructive comments.

\bibliographystyle{splncs04}
\bibliography{paper}

\if 1\IsExtended
\section{Appendix}  \label{sec:appendix}

\subsection{Proofs and Auxiliary Lemmas}

\subsubsection{Auxiliary Lemma for the Proof of Lemma~\ref{lem:hbsfiniteashard}}

\begin{lemma}\label{lem:hbsfiniteashard}
Let $N$ be a set of $\HBS(\LA)$ clauses.
Let $Q^f$ be a predicate of arity one not occurring in $N$.
Let $y$ be a real variable not occurring in $N$.
Let $N' = \{(\Lambda \parallel \Delta \rightarrow H) \in N \mid H \neq \bot\} \cup \{(\Lambda \parallel \Delta \rightarrow Q^f(y)) \mid (\Lambda \parallel \Delta \rightarrow \bot) \in N\}$, i.e., the set of clauses $N$ just that we gave every clause $Q^f(y)$ as head literal that previously had no head literal.
Then $N$ is satisfiable if and only if argument position 1 of $Q^f$ is finite in $N'$.
\end{lemma}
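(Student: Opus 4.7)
The plan is to exploit the fact that $y$ is fresh: it occurs in $N'$ only inside the head literal $Q^f(y)$, in no body of any clause, and in no theory constraint. Consequently, once the clause $\Lambda \parallel \Delta \rightarrow Q^f(y)$ is introduced by the transformation, the variable $y$ is never touched by any unifier during unit resolution, and in particular never constrained by any atom of $\Lambda$. This immediately implies that $\dval(Q^f,1,N')$ is either empty or equal to $\Real$: if any clause $\Lambda' \parallel \rightarrow Q^f(y)$ is derived by unit resolution from $N'$ and $\Lambda'$ is satisfiable, then every real number can be plugged in for $y$.

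My proof will therefore proceed by showing the following two directions, which together yield the biconditional via the above dichotomy.

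\emph{If $N$ is satisfiable, then argument position $1$ of $Q^f$ is finite in $N'$.} I will prove the contrapositive. Suppose $Q^f(a) \in \dfacts(Q^f,N')$ for some $a \in \Real$. By definition there is a unit-resolution derivation of some $\Lambda' \parallel \rightarrow Q^f(t)$ from $N'$ together with a grounding $\sigma$ with $t\sigma = a$ and $\Lambda'\sigma$ true. Since $Q^f$ does not occur in any body of $N'$, exactly one step of the derivation is the initial use of a clause of the form $\Lambda \parallel \Delta \rightarrow Q^f(y)$, which in $N$ was $\Lambda \parallel \Delta \rightarrow \bot$. Replaying the same sequence of unit resolutions in $N$ (starting from $\Lambda \parallel \Delta \rightarrow \bot$ instead) yields $\Lambda' \parallel \rightarrow \bot$ with $\sigma$ witnessing satisfiability of $\Lambda'$, so this is a refutation of $N$; hence $N$ is unsatisfiable.

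\emph{If $N$ is unsatisfiable, then argument position $1$ of $Q^f$ is infinite in $N'$.} Since $N$ is pure $\HBS(\LA)$, hierarchic unit resolution is refutationally complete (as quoted in the preliminaries), so there is a derivation from $N$ of some $\Lambda' \parallel \rightarrow \bot$ with a grounding $\delta$ such that $\sigval^{\LA}\models\Lambda'\delta$. The very first clause of that refutation must be some $\Lambda \parallel \Delta \rightarrow \bot$ in $N$. Replaying the same steps in $N'$, but starting from the transformed clause $\Lambda \parallel \Delta \rightarrow Q^f(y)$, produces $\Lambda' \parallel \rightarrow Q^f(y)$ with the same $\Lambda'$, because $y$ does not appear in any unifier or body and is therefore untouched. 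For every $a \in \Real$, the grounding $\delta \cup \{y \mapsto a\}$ satisfies $\Lambda'$ and maps $y$ to $a$, so $Q^f(a) \in \dfacts(Q^f,N')$. Thus $\dval(Q^f,1,N') = \Real$, which is infinite.

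The main obstacle is simply the bookkeeping in the derivation-replay argument: one must verify that the unit-resolution steps that resolve away the literals of $\Delta$ in $N$ and in $N'$ agree exactly, i.e., use the same side premises and the same unifiers. This is immediate because the only syntactic difference between the two initial clauses is the head $\bot$ versus $Q^f(y)$, and $Q^f$ and $y$ are fresh, hence not involved in any unification during body-literal resolution. Once this is observed, the dichotomy for $\dval(Q^f,1,N')$ noted at the start closes the argument.
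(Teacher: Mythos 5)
Your proof is correct, but it takes a genuinely different route from the paper's. The paper argues model-theoretically: from a model $\inta$ of $N$ it directly builds a model $\inta'$ of $N'$ in which $Q^f$ is everywhere false (so no $Q^f$ fact is derivable, giving finiteness), and conversely observes that finiteness forces the existence of a model of $N'$ in which every body of a $Q^f$-headed clause is false, which then also models $N$. You instead argue proof-theoretically: you replay unit-resolution derivations between $N$ and $N'$, using the facts that $Q^f$ and $y$ are fresh, that $Q^f$ never occurs in a body, and hence that the side premises and mgu's on body literals are identical in both clause sets; in one direction a derivation of $Q^f(a)$ in $N'$ replays to a refutation in $N$, and in the other a refutation in $N$ (whose existence you get from the refutational completeness of hierarchic unit resolution, quoted in the preliminaries) replays to a clause $\Lambda'\parallel \rightarrow Q^f(y)$ in $N'$ that yields $Q^f(a)$ for every real $a$. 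Both proofs hinge on the same dichotomy, that $\dval(Q^f,1,N')$ is either empty or all of $\Real$ because $y$ occurs only in the head. The paper's version needs only soundness of the derivability notion (derivable facts are true in all models) and avoids invoking completeness; your version needs refutational completeness of hierarchic unit resolution for pure $\HBS(\LA)$ (and hence the paper's standing purity assumption), but in exchange the derivation-replay argument makes the correspondence between $N$ and $N'$ completely explicit and syntactic, which is arguably tighter than the paper's somewhat informal model-theoretic sketch.
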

\begin{proof}
Based on $\inta$, we can construct an interpretation $\inta'$ that is equivalent to $\inta$ except that it interprets $Q^f$ for all arguments as false and satisfies $N'$.
This is straightforward for all clauses $(\Lambda \parallel \Delta \rightarrow H) \in N'$ with $H \neq Q^f(y)$ because they also appear in $N$,
but it also holds for the clauses with $H = Q^f(y)$ because 
$\inta$ can only satisfy $(\Lambda \parallel \Delta \rightarrow \bot) \in N$ if $\bigwedge{A \in (\Lambda \cup \Delta)} A \sigma$ is interpreted as false by $\inta$.
Hence, $(\Lambda \parallel \Delta \rightarrow Q^f(y))$ is satisfied by $\inta'$, which means the set of derivable facts $\dfacts(Q^f,N)$ for $(Q^f,1)$ is empty and $(Q^f,1)$ is therefore finite.
Symmetrically, $Q^f$ is only finite if there exists at least one satisfiable interpretation for $N'$, 
where $\bigwedge{A \in (\Lambda \cup \Delta)} A \sigma$ is interpreted as false for every $(\Lambda \parallel \Delta \rightarrow Q^f(y)) \in N'$.
The reason is that any interpretation $\inta$ that satisfies $\bigwedge{A \in (\Lambda \cup \Delta)} A \sigma$ can derive all facts $Q^f(a)$ for $a \in \mathbb{R}$ (so infinitely many) from the clause $(\Lambda \parallel \Delta \rightarrow Q^f(y)) \in N'$.
However, if $\inta$ satisfies $N'$ and evaluates every $\bigwedge{A \in (\Lambda \cup \Delta)} A \sigma$ with $(\Lambda \parallel \Delta \rightarrow Q^f(y)) \in N'$ as false,
then $\inta$ also satisfies $N$.
This is straightforward for all clauses $(\Lambda \parallel \Delta \rightarrow H) \in N$ with $H \neq \bot$ because they also appear in $N'$,
but it also holds for the clauses with $H = \bot$ because 
$\inta$ evaluates $\bigwedge{A \in (\Lambda \cup \Delta)} A \sigma$ as false for those clauses.
Hence, $N$ is satisfiable.
\end{proof}

\subsubsection{Proof of Lemma~\ref{lem:hbsfiniteishard}}

\begin{proof}
Due to Lemma~\ref{lem:hbsfiniteishard}, we know that determining the finiteness of a predicate argument position can be as hard as determining the satisfiability of an $\HBS(\LA)$ clause set.
Thanks to~\cite{Downey1972,HorbachEtAl17ARXIV} we know that this is undecidable.
\end{proof}

\subsubsection{Converting Interpretations for $\HBS(\SB)\AG$ Problems}

\begin{lemma}\label{lem:interpretationtofinitelygrounded}
Every satisfying interpretation $\inta$ for $N$ is also a satisfying interpretation for $\aGnd(N)$.
\end{lemma}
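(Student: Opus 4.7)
The plan is to exploit the fact that $\aGnd(N)$ is, up to theory-preserving simplifications of $\LA$ atoms, just a subset of all ground instances of $N$. Since satisfaction of a universally quantified clause set is defined over all ground instances, the inclusion gives the result almost immediately.

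First I would unfold the definitions. By the semantics of universally quantified clauses, $\inta \models N$ holds iff $\inta \models C\sigma$ for every clause $C \in N$ and every grounding $\sigma$ of $C$, equivalently $\inta \models \mGnd(N)$. Since $\beta^a$ is a (total) $\tpfunction$ assigning each argument position a subset of the interpretation of its sort, every well-typed ground instance in $\mGnd_{\beta^a}(N)$ is in particular a ground instance in $\mGnd_{\beta^*}(N) = \mGnd(N)$. Hence $\mGnd_{\beta^a}(N) \subseteq \mGnd(N)$, and therefore $\inta \models \mGnd_{\beta^a}(N)$.

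Second I would argue that the simplification step that distinguishes $\aGnd(N)$ from $\mGnd_{\beta^a}(N)$ preserves truth under any interpretation $\inta$ of $\HBS(\LA)$. By assumption, $\inta$ coincides with $\sigval^{\LA}$ on all $\LA$ predicates and functions, so ground $\LA$ atoms like $3 \leq 5$ or $2 \neq 2$ are evaluated to their fixed truth value; normalizing an $\LA$ atom so that it contains at most one integer constant is an equivalence in $\sigval^{\LA}$; and dropping evaluated-true literals from the theory guard, or discharging clauses whose guard evaluates to false, are standard Horn clause simplifications equivalent under $\inta$. Consequently, each clause $C' \in \aGnd(N)$ is logically equivalent under $\inta$ to some clause in $\mGnd_{\beta^a}(N)$.

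Combining the two observations, $\inta \models \mGnd_{\beta^a}(N)$ and the equivalence above yield $\inta \models \aGnd(N)$. The only potentially delicate point is making precise that the $\LA$-simplification really is truth-preserving in the hierarchic setting, but this follows directly from the fixed interpretation of arithmetic symbols in any $\HBS(\LA)$ interpretation, so no real obstacle arises.
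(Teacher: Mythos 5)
Your proof is correct and takes essentially the same route as the paper, which simply observes that $\mGnd(\aGnd(N)) \subseteq \mGnd(N)$ modulo canonical simplifications of theory atoms and concludes immediately. One small inaccuracy: $\beta^a$ is \emph{not} total --- it maps infinite argument positions to $\bot$ --- but your inclusion $\mGnd_{\beta^a}(N) \subseteq \mGnd(N)$ still holds because $\wti(x, Y, \beta^a) \subseteq \sort(x)^\inta = \wti(x, Y, \beta^*)$ for every variable $x$, so every $\sigma \in \tg_{\beta^a}(Y)$ lies in $\tg_{\beta^*}(Y)$. Your elaboration of why the $\LA$-simplification step is truth-preserving in the hierarchic semantics fills in a detail that the paper leaves implicit (``after some canonical simplifications'').
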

\begin{proof}
We know that $\mGnd(\aGnd(N)) \subseteq \mGnd(N)$; at least after some canonical simplifications on the theory atoms. Therefore any interpretation that satisfies $N$ (and thus $\mGnd(N)$) also satisfies $\mGnd(\aGnd(N))$ and thus $\aGnd(N)$.
\end{proof}

\begin{lemma}\label{lem:interpretationfromfinitelygrounded}
Let $\inta$ be an interpretation satisfying the clause set $\aGnd(N)$. 
Then we can construct a satisfying interpretation $\inta'$ for $N$ such that $P^{\inta'} = \left\{\bar{a} \mid P(\bar{a}) \in \dfacts(P,\aGnd(N)) \right\}$.
\end{lemma}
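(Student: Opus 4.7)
The plan is to set $P^{\inta'} := \{\bar{a} \mid P(\bar{a}) \in \dfacts(P,\aGnd(N))\}$ for every free predicate $P$, combine this with the canonical arithmetic interpretation, and verify $\inta' \models N$ in two stages: first showing $\inta' \models \aGnd(N)$, then lifting this to all ground instances of clauses in $N$ by a case analysis on whether the grounding respects the approximation $\aval$.

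For the first stage I would invoke the standard least-fixed-point theory for Horn clauses in the hierarchic setting. Since $\aGnd(N)$ is Horn, belongs to $\HBS(\SB)$, and is satisfiable (witnessed by $\inta$), the collection of facts derivable by hierarchic unit resolution forms a model. Every positive Horn clause in $\aGnd(N)$ is satisfied because whenever $\inta'$ makes its body and theory part true, the head is immediately derivable by one more resolution step; every constraint clause $\Lambda' \parallel \Delta' \to \bot$ in $\aGnd(N)$ is satisfied because otherwise a derivation of $\bot$ would contradict $\inta \models \aGnd(N)$.

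For the second stage fix a clause $C = (\Lambda \parallel \Delta \to H) \in N$ and a ground substitution $\sigma$. If $\Lambda\sigma$ is false in the theory, $\inta' \models C\sigma$ trivially, so assume it is true. Call $\sigma$ \emph{well-typed at $C$} if $x\sigma \in \aval(P,i,N)$ for every $x \in \vars(C)$ and every finite argument position $(P,i) \in \depend(x,C)$. If $\sigma$ is well-typed, then $C\sigma$, after simplifying its now-ground theory atoms, is a member of $\aGnd(N)$ and hence satisfied by the first stage.

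The main obstacle is the non-well-typed case: pick a variable $x$ of $C$ and a finite position $(P,i) \in \depend(x,C)$ with $x\sigma \notin \aval(P,i,N)$. If $(P,i)$ belongs to a body atom $P(\bar{s}) \in \Delta$ with $s_i = x$, then $P(\bar{s}\sigma) \notin \dfacts(P,\aGnd(N)) = P^{\inta'}$, because $\aval(P,i,N) \supseteq \dval(P,i,N) \supseteq \dval(P,i,\aGnd(N))$; the body is thereby refuted. If $(P,i)$ is instead the head position of $C$, I would read off $\text{DeriveValues}$: for $(P,i)$ to remain finite, $C$ must either (i)~contain an equation $x = c$ in $\Lambda$, or (ii)~have $x$ in some body positions $(Q_j,k_j)$ with $\bigcap_j \aval(Q_j,k_j,N) \neq \Real$, which then gets propagated into $\aval(P,i,N)$; otherwise the algorithm would have set $\aval(P,i,N) := \Real$, contradicting finiteness. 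In case~(i) $x\sigma \neq c$ (since $c \in \aval(P,i,N)$ while $x\sigma \notin \aval(P,i,N)$) contradicts the assumption that $\Lambda\sigma$ is true. In case~(ii) $x\sigma \notin \aval(P,i,N) \supseteq \bigcap_j \aval(Q_j,k_j,N)$ forces some $j$ with $x\sigma \notin \aval(Q_j,k_j,N)$, refuting the body atom $Q_j(\bar{s}\sigma)$ in $\inta'$ as in the body-position subcase. Hence $\inta' \models C\sigma$ in all cases, completing the lift.
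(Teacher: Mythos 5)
Your proof is correct and follows the same high-level strategy as the paper's: define $\inta'$ by the derivable facts of $\aGnd(N)$ and argue by (implicit) contradiction that every ground instance $C\sigma$ of every $C\in N$ is satisfied. The difference is one of rigor rather than approach. The paper's proof, after observing that all atoms of $\Delta\sigma$ must be derivable from $\aGnd(N)$, simply asserts that ``since $\Lambda\sigma$ is satisfied and $\Delta\sigma$ consists of derivable facts from $\aGnd(N)$, $H\sigma$ should also be a derivable fact from $\aGnd(N)$'' --- but this is precisely the nontrivial step, since $\sigma$ need not a priori land inside $\aval$ at the finite argument positions, in which case $C\sigma$ would not correspond to any instance of $\aGnd(N)$. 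You correctly flag this as ``the main obstacle'' and supply the missing argument: either the witness position $(P,i)$ with $x\sigma\notin\aval(P,i,N)$ is a body position, in which case the corresponding body atom already fails to be derivable (using $\aval(P,i,N)\supseteq\dval(P,i,\aGnd(N))$), or it is a head position, in which case inspection of $\text{DeriveValues}$ forces either an equation $x=c$ in $\Lambda$ (refuted since $x\sigma\notin\aval(P,i,N)\ni c$) or a body position $(Q_j,k_j)$ from which the constraint is inherited, reducing to the body case. Your first stage (that $\inta'$ models $\aGnd(N)$, via the least-fixed-point argument for satisfiable Horn sets under hierarchic unit resolution) is sound and implicit in the paper as well. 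In short, your proposal is the same route, argued more carefully; the paper's proof would benefit from the case analysis you provide. Incidentally, the paper's proof text says ``Suppose $\inta$'' where it clearly means $\inta'$.
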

\begin{proof}
Proof by contradiction. 
Suppose $\inta$ is an interpretation that satisfies $\aGnd(N)$ but not $\mGnd(N)$.
This would mean that there must exists a clause $(\Lambda \parallel \Delta \rightarrow H) \in N$ and a grounding $\sigma$ such that $\inta'$ does not satisfy $(\Lambda \parallel \Delta \rightarrow H) \sigma$, which would mean $\inta'$ satisfies $\Lambda \sigma$ and $\Delta \sigma$, but not $H \sigma$.
$\Delta \sigma$ is satisfied by $P^{\inta'}$ would imply that all atoms in $\Delta \sigma$ are derivable facts from $\aGnd(N)$.
However, since $\Lambda \sigma$ is satisfied and $\Delta \sigma$ consists of derivable facts from $\aGnd(N)$, $H \sigma$ should also be a derivable fact from $\aGnd(N)$.
This is a contradiction because this would imply that $H \sigma$ is actually satisfied.
\end{proof}

\subsubsection{Auxiliary Lemmas for the Proof of Theorem~\ref{thm:coveringdefinition}}

\begin{lemma}
\label{lem:uniforminequalityevaluation}
Let $\LAOP =  \{\leq,<,>,\geq, =, \neq\}$.
Let $(x \LAOP c) \in \connectedIneqs(Q,i,N)$ and let $a$ and $a'$ belong to $I \in \iPart(Q,i,N)$.
Then $a \LAOP c$ evaluates to true if and only if $a' \LAOP c$ evaluates to true.
\end{lemma}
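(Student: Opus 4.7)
The plan is to perform a case analysis on the relation $\LAOP$, using the construction of $\iBorders(Q,i,N)$ to show that for each inequality $(x \LAOP c) \in \connectedIneqs(Q,i,N)$, the value $c$ acts as a boundary point in the partition so that no interval $I \in \iPart(Q,i,N)$ straddles $c$ in a way that would change the truth value of $x \LAOP c$.

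First, I would unfold the relevant definitions. Given $(x \LAOP c) \in \connectedIneqs(Q,i,N)$, the construction of $\iBorders(Q,i,N)$ places into that set at least the borders ``$c]$'' and ``$(c$'' when $\LAOP \in \{\leq, =, \neq, >\}$, and at least the borders ``$c)$'' and ``$[c$'' when $\LAOP \in \{\geq, =, \neq, <\}$. For $\LAOP \in \{=,\neq\}$ all four borders are present. By the sorting discipline (first by numerical value of the border, then by border type with the order $c) < [c < c] < (c$), consecutive borders in the sorted sequence alternate between a lower border and an upper border, and each interval $I \in \iPart(Q,i,N)$ is obtained by pairing one lower border with the immediately following upper border.

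Next, I would argue the decisive structural fact: for each such $c$, no interval $I \in \iPart(Q,i,N)$ contains points strictly on both ``sides'' of $c$ relative to $\LAOP$. The case analysis is then routine. For $\LAOP = \leq$, the borders ``$c]$'' and ``$(c$'' in the sorted sequence form an adjacent pair of an upper border at $c$ (inclusive) and a lower border at $c$ (exclusive); hence every interval $I$ is either entirely contained in $(-\infty, c]$ or entirely contained in $(c, +\infty)$, so $a \leq c$ iff $a' \leq c$ for $a,a' \in I$. The cases $<$, $>$, $\geq$ are symmetric, handled by the same reasoning on the opposite side of $c$ and with the appropriate open/closed borders. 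For $\LAOP \in \{=,\neq\}$, all four borders $c), [c, c], (c$ lie in $\iBorders(Q,i,N)$, so after sorting $c$ is isolated as a singleton interval $[c,c]$ in $\iPart(Q,i,N)$, while every other interval lies entirely in $(-\infty, c) \cup (c, +\infty)$; consequently $a = c \iff a' = c$ and $a \neq c \iff a' \neq c$ for $a,a' \in I$.

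The main obstacle is simply being careful with the interval-border bookkeeping, particularly verifying that the tie-breaking rule ``$c) < [c < c] < (c$'' in the sort always pairs a lower border with the next upper border and never leaves an interval whose interior crosses a declared border. Once this structural claim is established, the case analysis on $\LAOP$ is straightforward and the lemma follows by direct inspection of the six cases.
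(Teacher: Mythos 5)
Your proposal is correct and takes essentially the same approach as the paper's own proof: a case analysis on $\LAOP$, unfolding the definition of $\iBorders(Q,i,N)$ to observe which interval borders the membership $(x \LAOP c)\in\connectedIneqs(Q,i,N)$ forces into the partition, and then concluding that every $I \in \iPart(Q,i,N)$ lies entirely on one side of $c$ with respect to $\LAOP$. The only difference is that you spell out the tie-breaking order of the sort and the pairing of lower/upper borders a bit more carefully than the paper does, which is a minor presentational improvement rather than a different argument.
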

\begin{proof}
We make a case distinction over the different cases for $\LAOP$:
\begin{itemize}
\item $\LAOP$ is $\leq$: this means that $\iBorders(Q,i,N)$ contains the interval borders $c]$ and $(c$.
Therefore, $I$ is either a subset of $(-\infty,c]$, i.e., all points in $I$ satisfy $(x \LAOP c)$ or $I \subseteq (c,\infty]$ so no points in $I$ satisfy $(x \LAOP c)$.
\item $\LAOP$ is $\geq$: this means that $\iBorders(Q,i,N)$ contains the interval borders $[c$ and $c)$.
Therefore, $I$ is either a subset of $[c,\infty)$, i.e., all points in $I$ satisfy $(x \LAOP c)$ or $I \subseteq (-\infty,c)$ so no points in $I$ satisfy $(x \LAOP c)$.
\item $\LAOP$ is $<$: this means that $\iBorders(Q,i,N)$ contains the interval borders $c)$ and $[c$.
Therefore, $I$ is either a subset of $(-\infty,c)$, i.e., all points in $I$ satisfy $(x \LAOP c)$ or $I \subseteq [c,\infty]$ so no points in $I$ satisfy $(x \LAOP c)$.
\item $\LAOP$ is $>$: this means that $\iBorders(Q,i,N)$ contains the interval borders $(c$ and $c]$.
Therefore, $I$ is either a subset of $(c,\infty)$, i.e., all points in $I$ satisfy $(x \LAOP c)$ or $I \subseteq (-\infty,c]$ so no points in $I$ satisfy $(x \LAOP c)$.
\item $\LAOP$ is $=$: this means that $\iBorders(Q,i,N)$ contains the interval $[c,c]$.
Therefore, $I$ is either $[c,c]$ or no point in $I$ satisfies $(x \LAOP c)$.
\item $\LAOP$ is $\neq$: this means that $\iBorders(Q,i,N)$ contains the interval $[c,c]$.
Therefore, $I$ is either $[c,c]$ and no point in $I$ satisfies $(x \LAOP c)$ or $I \neq [c,c]$ and all points in $I$ satisfy $(x \LAOP c)$.
\end{itemize}
\end{proof}

\begin{lemma}
\label{lem:uniformpredicateinterpretationZ}
Let $Q(\bar{a})$ be derivable from $N$ and let $a_i \in \mathbb{Z}$ belong to $I \in \iPart(Q,i,N)$.
Then $Q(\bar{a'})$ is also derivable from $N$, where $a'_j = a_j$ for $i \neq j$ and $a'_i \in I \cap  \mathbb{Z}$.
\end{lemma}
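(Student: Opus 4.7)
The plan is to unpack the definition of derivability for $Q(\bar{a})$ and perturb the grounding that witnesses it. By definition, there is a clause $C = \Lambda \parallel \rightarrow Q(\bar{t})$ obtained by a sequence of hierarchic unit resolution steps from clauses in $N$, together with a grounding $\sigma$ such that $Q(\bar{t})\sigma = Q(\bar{a})$ and $\Lambda\sigma$ evaluates to true. I will construct a modified grounding $\sigma'$ that changes the value assigned to the variable appearing in position $i$ of $\bar{t}$ from $a_i$ to $a'_i$ (leaving all other variable assignments intact), and show that $\Lambda\sigma'$ still evaluates to true, so that $C$ together with $\sigma'$ witnesses derivability of $Q(\bar{a'})$.

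The heart of the argument is the following invariant, which I would establish by induction on the length of the derivation of $C$: every theory atom in $\Lambda$ is a variable bound $x \LAOP c$ (with $c\in\mathbb{Z}$ and $\LAOP\in\{\leq,<,>,\geq,=,\neq\}$), and for every argument position $(P,j)$ at which $x$ occurs in the first-order part of $C$, this bound lies in $\connectedIneqs(P,j,N)$. The base case uses that the original clauses in $N$ satisfy the property directly by the first defining clause of $\connectedIneqs$. The inductive step uses that unit resolution identifies variables across argument positions via the MGU; combined with the transitivity clause in the definition of $\connectedIneqs$, any preexisting bound is carried to all argument positions its variable now touches. With this invariant, the construction of $\sigma'$ is mechanical: if $t_i$ is a constant then $a_i = a'_i$ and there is nothing to prove; otherwise $t_i$ is a variable $x$ and we set $\sigma'(x) = a'_i$, keeping $\sigma'(y) = \sigma(y)$ for every other variable $y$. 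Theory atoms in $\Lambda$ not involving $x$ are unaffected. A bound $x \LAOP c$ in $\Lambda$ belongs to $\connectedIneqs(Q,i,N)$ by the invariant, and since $a_i, a'_i \in I \in \iPart(Q,i,N)$, Lemma~\ref{lem:uniforminequalityevaluation} gives that $a_i \LAOP c$ and $a'_i \LAOP c$ have the same truth value, so $\Lambda\sigma'$ is still satisfied.

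The main obstacle is the careful bookkeeping required to push the invariant through hierarchic unit resolution. Each resolution step unifies a literal in a multi-literal clause with a derived unit fact, and one must track how the pre-image argument positions in the two parent clauses combine into the argument positions of the resolvent; the transitivity and shared-variable closure of $\connectedIneqs$ must be invoked carefully here. A secondary subtlety is the integrality requirement: because $\SB$ bounds only use integer constants and $\iPart$ is built from those same constants, any interval $I \in \iPart(Q,i,N)$ that contains one integer contains its full integer subset, so choosing $a'_i \in I \cap \mathbb{Z}$ is never blocked by a hidden integrality constraint from the derivation.
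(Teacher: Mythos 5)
Your proposal reaches the same conclusion via a genuinely different induction. The paper does a \emph{structural} induction on the ground derivation tree: at each step it inspects only the single clause $\Lambda \parallel \Delta \rightarrow Q(\bar t) \in N$ whose grounding $\sigma$ produces $Q(\bar a)$, modifies $\sigma$ on the one variable $x=t_i$, applies Lemma~\ref{lem:uniforminequalityevaluation} to $\Lambda\sigma'$, and then invokes the inductive hypothesis on the body atoms $P(\bar s)\sigma$. You instead unfold the paper's definition of derivability into the explicit unit-resolution sequence and prove an invariant about the theory part $\Lambda$ of the final derived unit clause $C$, tracked through every resolution step. Both plans lean on Lemma~\ref{lem:uniforminequalityevaluation} and the transitivity closure in $\connectedArgs$/$\connectedIneqs$, so the ideas are cousins; but the paper's route buys a lot of simplicity because at each inductive step it only has to reason about one original clause of $N$ (in which $\vars(\Lambda)\subseteq\vars(\Delta\rightarrow H)$ is guaranteed by the standing assumption), rather than about a theory constraint $\Lambda$ that has accumulated and had its variables merged across the entire resolution history.

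Two concrete weaknesses in your formulation. First, your invariant as stated --- ``every theory atom in $\Lambda$ is a variable bound $x \LAOP c$'' --- is literally false already in the base case: the lemma is implicitly about $\HBS(\SB)\PAG$ (the input logic of Section~\ref{sec:typing}), where clauses may carry atoms like $u \leq y+z$ that are simple bounds only after the variables attached to finite argument positions have been instantiated (this is exactly what $\aGnd$ does, and exactly how the definition of $\connectedIneqs$ produces bounds). So the invariant has to be recast in terms of the \emph{partial groundings} of the theory atoms by the values $\sigma$ assigns to the non-$x$ variables, with the additional observation that these values are derivable and hence covered by $\aval$ so that the resulting bound is indeed recorded in $\connectedIneqs(Q,i,N)$. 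Second, after several unit resolutions $\Lambda$ can contain constraints on variables that were unified away from the first-order part; you do not say how your invariant covers a compound atom that couples $x$ to such a dangling variable. The paper's structural induction never meets either complication because $\Lambda$ there is always the theory part of a single clause of $N$. Finally, your closing integrality remark misses the actual (much simpler) point: $a'_i\in I\cap\mathbb{Z}$ is chosen to be an integer, so it is a legal substitution for $x$ regardless of whether $\sort(x)$ is $\IntS$ or $\RealS$; nothing deeper about $\iPart$ being built from integer endpoints is needed here.
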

\begin{proof}
The case where $a'_i = a_i$ is trivial because $\bar{a'} = \bar{a}$.
We prove the case for $a'_i \neq a_i$ (and therefore also $I \neq [a_i,a_i]$) by structural induction over the derivations in $N$.
\begin{itemize}
\item The base case is that $Q(\bar{a})$ was only derived using one clause, i.e.,
      $N$ contains a clause $\Lambda \parallel \rightarrow Q(\bar{t})$ with 
      a grounding $\sigma$ such that $Q(\bar{a}) = Q(\bar{t}) \sigma$ and 
      $\Lambda \sigma$ evaluates to true.
      We can assume that $t_i \neq a_i$ because $I$ would be the interval $[a_i,a_i]$ otherwise.
      This means $t_i = x$ for a variable $x$.
      Based on Lemma~\ref{lem:uniforminequalityevaluation}, $\Lambda \sigma'$ 
      (with $x \sigma' = a'_i$ and $y \sigma' = y \sigma$ for all $y \neq x$) 
      must also evaluate to true.
      Therefore, $Q(\bar{a'})$ is also derivable from $N$.
\item The induction step is that $Q(\bar{a})$ was derived using a clause 
      $\Lambda \parallel \Delta \rightarrow Q(\bar{t}) \in N$ with a grounding $\sigma$ 
      such that $Q(\bar{a}) = Q(\bar{t}) \sigma$, 
      $\Lambda \sigma$ evaluates to true, and 
      all $P(\bar{s}) \sigma \in \Delta \sigma$ are derivable from $N$.      
      Moreover, we can assume that $t_i \neq a_i$ because $I$ would be the interval $[a_i,a_i]$ otherwise.
      Thus, $t_i = x$ for a variable $x$.
      This means we can again construct a substitution 
      $\sigma'$ with $x \sigma' = a'_i$ and $y \sigma' = y \sigma$ for all $y \neq x$.
      By induction we can assume that $P(\bar{s}) \sigma' \in \Delta \sigma'$ is derivable from $N$ because $P(\bar{s}) \sigma$ is derivable from $N$.    
      Due to Lemma~\ref{lem:uniforminequalityevaluation}, $\Lambda \sigma'$ is also satisfiable.
      Thus $Q(\bar{a'})$ is derivable from $N$ using 
      $(\Lambda \parallel \Delta \rightarrow Q(\bar{t})) \sigma'$.      
\end{itemize}
\end{proof}

\begin{lemma}
\label{lem:uniformpredicateinterpretationR}
Let $Q(\bar{a})$ be derivable from $N$ and let $a_i \not\in \mathbb{Z}$ belong to $I \in \iPart(Q,i,N)$.
Then $Q(\bar{a'})$ is also derivable from $N$, where $a'_j = a_j$ for $i \neq j$ and $a'_i \in I$.
\end{lemma}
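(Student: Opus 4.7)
\textbf{Proof plan for Lemma~\ref{lem:uniformpredicateinterpretationR}.} The plan is to mirror the proof of Lemma~\ref{lem:uniformpredicateinterpretationZ} by structural induction on the derivation of $Q(\bar{a})$ from $N$, exploiting Lemma~\ref{lem:uniforminequalityevaluation} to transfer satisfaction of theory constraints from $a_i$ to $a'_i$. The trivial case $a'_i = a_i$ is handled directly (then $\bar{a'} = \bar{a}$), so we may assume $a'_i \neq a_i$ and in particular $I \neq [a_i,a_i]$.

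For the base case, suppose $Q(\bar{a})$ is derived by a single unit clause $\Lambda \parallel {\rightarrow} Q(\bar{t}) \in N$ with grounding $\sigma$ such that $Q(\bar{t})\sigma = Q(\bar{a})$ and $\Lambda\sigma$ evaluates to true. Since $a_i \notin \mathbb{Z}$, the term $t_i$ cannot be an integer constant, so $t_i = x$ for some variable $x$. The key type-theoretic observation, which is what distinguishes this lemma from Lemma~\ref{lem:uniformpredicateinterpretationZ}, is that well-typedness of $\sigma$ together with $x\sigma = a_i \notin \mathbb{Z}$ forces $\sort(x) = \RealS$; consequently, any real value, integer or not, is a legal replacement for $x$. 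Hence we define $\sigma'$ with $x\sigma' = a'_i$ and $y\sigma' = y\sigma$ for $y \neq x$. By Lemma~\ref{lem:uniforminequalityevaluation}, every variable bound appearing in $\Lambda$ that involves $x$ lies in $\connectedIneqs(Q,i,N)$ and evaluates uniformly on $I$; hence $\Lambda\sigma'$ evaluates to true and $Q(\bar{a'})$ is derivable via $\sigma'$.

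For the inductive step, suppose $Q(\bar{a})$ is derived using $\Lambda \parallel \Delta \rightarrow Q(\bar{t}) \in N$ with grounding $\sigma$, where each $P(\bar{s})\sigma \in \Delta\sigma$ is derivable by a shorter derivation. As above, $t_i = x$ for a variable $x$ of sort $\RealS$, so $\sigma'$ (defined as before) is well-typed. For each body atom $P(\bar{s}) \in \Delta$ and each position $k$ with $s_k = x$, the argument positions $(P,k)$ and $(Q,i)$ share the variable $x$ in this clause, so $(P,k) \in \connectedArgs(Q,i,N)$ and thus $\iPart(P,k,N) = \iPart(Q,i,N)$; in particular $I \in \iPart(P,k,N)$. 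Since the value at position $k$ in $P(\bar{s})\sigma$ is $a_i \notin \mathbb{Z}$, the induction hypothesis applied to this subderivation yields that $P(\bar{s})\sigma'$ is derivable. Positions $k$ with $s_k \neq x$ are unchanged. Lemma~\ref{lem:uniforminequalityevaluation} again ensures $\Lambda\sigma'$ evaluates to true, so $Q(\bar{a'})$ is derivable using $\sigma'$.

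The main obstacle is the sort argument: one must justify that $a_i \notin \mathbb{Z}$ in the head forces the corresponding syntactic term to be a real-sorted variable, and that this well-typedness propagates through every body atom where the same variable occurs. This is precisely why the conclusion here permits $a'_i \in I$ without an integer/non-integer restriction, in contrast with Lemma~\ref{lem:uniformpredicateinterpretationZ}. All remaining steps are direct applications of the earlier auxiliary results and involve no genuinely new ideas.
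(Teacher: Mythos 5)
Your proof is correct and follows essentially the same route as the paper: induction over derivations, with Lemma~\ref{lem:uniforminequalityevaluation} transferring satisfaction of the theory constraint from $a_i$ to $a'_i$, and the observation that $a_i\notin\mathbb{Z}$ forces $t_i$ to be a variable of sort $\RealS$. Your justification that $t_i$ must be a variable (all constants in a pure clause set are integers) is in fact cleaner than the paper's interval-based argument, and your explicit note that shared variables make $(P,k)\in\connectedArgs(Q,i,N)$ and hence $\iPart(P,k,N)=\iPart(Q,i,N)$ fills in a step the paper leaves implicit; otherwise the two proofs coincide.
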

\begin{proof}
The case where $a'_i = a_i$ is trivial because $\bar{a'} = \bar{a}$.
We prove the case for $a'_i \neq a_i$ (and therefore also $I \neq [a_i,a_i]$) by structural induction over the derivations in $N$.
\begin{itemize}
\item The base case is that $Q(\bar{a})$ was only derived using one clause, i.e.,
      $N$ contains a clause $\Lambda \parallel \rightarrow Q(\bar{t})$ with 
      a grounding $\sigma$ such that $Q(\bar{a}) = Q(\bar{t}) \sigma$ and 
      $\Lambda \sigma$ evaluates to true.
      We can assume that $t_i \neq a_i$ because $I$ would be the interval $[a_i,a_i]$ otherwise.
      This means $t_i = x$ for a variable $x$ and $\sort(x)=\RealS$ or we could not have derived a value $a_i \not\in \mathbb{Z}$.
      Based on Lemma~\ref{lem:uniforminequalityevaluation}, $\Lambda \sigma'$ 
      (with $x \sigma' = a'_i$ and $y \sigma' = y \sigma$ for all $y \neq x$) 
      must also evaluate to true.
      Therefore, $Q(\bar{a'})$ is also derivable from $N$.
\item The induction step is that $Q(\bar{a})$ was derived using a clause 
      $\Lambda \parallel \Delta \rightarrow Q(\bar{t}) \in N$ with a grounding $\sigma$ 
      such that $Q(\bar{a}) = Q(\bar{t}) \sigma$, 
      $\Lambda \sigma$ evaluates to true, and 
      all $P(\bar{s}) \sigma \in \Delta \sigma$ are derivable from $N$.      
      Moreover, we can assume that $t_i \neq a_i$ because $I$ would be the interval $[a_i,a_i]$ otherwise.
      Thus, $t_i = x$ for a variable $x$ and $\sort(x)=\RealS$ or we could not have derived a value $a_i \not\in \mathbb{Z}$.
      This means we can again construct a substitution 
      $\sigma'$ with $x \sigma' = a'_i$ and $y \sigma' = y \sigma$ for all $y \neq x$.
      By induction we can assume that $P(\bar{s}) \sigma' \in \Delta \sigma'$ is derivable from $N$ because $P(\bar{s}) \sigma$ is derivable from $N$.    
      Due to Lemma~\ref{lem:uniforminequalityevaluation}, $\Lambda \sigma'$ is also satisfiable.
      Thus $Q(\bar{a'})$ is derivable from $N$ using 
      $(\Lambda \parallel \Delta \rightarrow Q(\bar{t})) \sigma'$.      
\end{itemize}
\end{proof}

\begin{lemma}
\label{lem:welltypedgroundingrepresentation}
Let $\Lambda \parallel \Delta \rightarrow H$ be a clause in $N$.
Let $\sigma$ be a well-typed grounding over the most general test-point $\tpfunction$ $\beta^*$ such that $\Lambda \sigma$ evaluates to true and all atoms in $\Delta\sigma$ are derivable from $N$.
Then there exists a grounding $\sigma'$ that is a \emph{well-typed instance} over the $\tpfunction$ $\beta$, i.e., $\sigma \in \tg_{\beta}(\Lambda \parallel \Delta \rightarrow H)$, 
and from which we can extrapolate the interpretation for $(\Lambda \parallel \Delta \rightarrow H) \sigma$, i.e., $\bar{t}\sigma = \eta(P,\bar{t}\sigma')$ for all $P(\bar{t}) \in \atoms(\Delta \rightarrow H)$.
\end{lemma}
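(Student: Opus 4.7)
The plan is to define $\sigma'$ pointwise on the variables of the clause, then verify two things: (a) that $\sigma'$ is well-typed with respect to $\beta$, and (b) that the extrapolation from $\sigma'$ via $\eta$ recovers the values assigned by $\sigma$ at every predicate argument in $\Delta \rightarrow H$. Before constructing $\sigma'$, I would record the key invariant furnished by the earlier definitions: for any variable $x$ the set $S_x = \depend(x, \Lambda \parallel \Delta \rightarrow H)$ sits entirely inside a single $\connectedArgs$-class, because all positions in $S_x$ share $x$. By the transitivity clauses in the definitions of $\connectedArgs$, $\connectedIneqs$, and (for infinite positions) $\testpoints$, the sets $\connectedIneqs(P,i,N)$, $\iPart(P,i,N)$, and $\testpoints(P,i,N)$ agree across $S_x$. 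Hence there is a well-defined interval $I_x \in \iPart$ containing $x\sigma$, common to every $(P,i) \in S_x$.

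Next I would construct $\sigma'$ by a case split on whether some $(P_0,i_0) \in S_x$ is finite. If yes, then since $\Lambda\sigma$ evaluates to true, $\Delta\sigma$ is derivable, and the clause $\Lambda \parallel \Delta \rightarrow H$ is in $N$, hierarchic unit resolution shows that $x\sigma$ appears in some derivable fact, so $x\sigma \in \dval(P_0,i_0,N) \subseteq \aval(P_0,i_0,N) = \beta(P_0,i_0)$, and I set $x\sigma' := x\sigma$. If no, then all positions in $S_x$ are infinite and share the common $\testpoints$-set $T_x$ by the explicit uniformity condition; I pick $x\sigma' \in T_x \cap I_x$ of the same integer status as $x\sigma$. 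Such a test point is guaranteed to exist by the per-interval construction of $\testpoints$, which supplies one integer point (when the interval contains an integer) and one non-integer point (when the interval is not a single-integer singleton). For constants $t_i = c$ in $\bar{t}$, trivially $t_i\sigma = t_i\sigma' = c$.

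Well-typedness amounts to $x\sigma' \in \sort(x)^{\inta} \cap \bigcap_{(P,i)\in S_x} \beta(P,i)$. In Case 2 this is immediate from uniformity of $T_x$ and from $\sort(x) \preceq \sort(P,i)$ for every $(P,i) \in S_x$, which makes the sort intersection collapse to $\sort(x)^{\inta}$; the matching-integer-status choice then places $x\sigma'$ in it. The harder sub-case is Case 1 when $x$ also occurs at an infinite position $(Q,j) \in S_x$: I have to exhibit $x\sigma$ as an actual test point of $(Q,j)$. This is exactly where condition (iii) of $\connectedIneqs$ is invoked — every $a \in \aval(P_0,i_0,N)$ enters $\connectedIneqs(P_0,i_0,N)$ as $x = a$, propagates to $(Q,j)$ by transitivity (iv), forces the singleton $[a,a]$ into $\iPart(Q,j,N)$, and thereby forces $a$ itself to be chosen as the unique test point for that interval. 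Since $x\sigma \in \aval(P_0,i_0,N)$, we get $x\sigma \in \testpoints(Q,j,N)$, and combined with $x\sigma \in \sort(Q,j)^{\inta}$ the membership in $\beta(Q,j)$ follows.

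For the extrapolation condition I would finish position by position on each atom $P(\bar{t}) \in \atoms(\Delta \rightarrow H)$: at a finite $(P,i)$, $\eta(P,\bar{t}\sigma')$ is the singleton $\{t_i\sigma'\}$ at coordinate $i$, and $t_i\sigma = t_i\sigma'$ by our choice in Case 1 (or by $t_i$ being a constant); at an infinite $(P,i)$, the coordinate is the integer or non-integer slice of the interval containing $t_i\sigma'$, and since $t_i\sigma$ lies in the same interval $I_x$ with the same integer status by construction, $t_i\sigma$ lies in that slice. The main obstacle is the consistency of the single choice of $x\sigma'$ across all positions in $S_x$, and in particular the finite/infinite mixed case — pinning this down is precisely what conditions (iii) and (iv) of $\connectedIneqs$ were designed for, and spelling out that chain carefully is the critical step of the proof.
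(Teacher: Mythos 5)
Your proposal is correct and follows essentially the same strategy as the paper's own proof: construct $\sigma'$ pointwise on variables, use the fact that $x\sigma$ is a derivable value to cover finite positions (so $x\sigma' = x\sigma$ works there), invoke clauses (iii) and (iv) of $\connectedIneqs$ to force $[x\sigma,x\sigma]$ into the interval partition of any infinite position connected to a finite one, and in the all-infinite case use the shared test-point set to pick a same-integer-status representative. Your per-variable construction is in fact a little cleaner than the paper's, which phrases the argument pairwise ("any two argument positions sharing $x$ share a test point") and leaves the reader to observe that the chosen $b$ is in fact uniform across the whole $\connectedArgs$-class; your version makes that uniformity explicit from the start. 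One small imprecision: you assert in the preamble that $\testpoints(P,i,N)$ "agree across $S_x$," but the paper only enforces equality of $\testpoints$ between \emph{infinite} connected positions; finite positions take $\aval(P,i,N)$ as their test-point set. You handle the distinction correctly in the case analysis that follows, so this is only a slip in the summary statement, not in the argument.
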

\begin{proof}
Before we start with the actual proof, let us repeat and clarify the definition of our extrapolation function and the definition of a well-typed instance over $\beta$.
The extrapolation $\eta(P,\bar{t}\sigma') = I'_{(P,1)} \times \ldots \times I'_{(P,n)}$ of a test-point vector $\bar{t}\sigma'$ is the cross product of the sets $I'_{(P,i)}$. 
One for each predicate argument position $(P,i)$.
If $(P,i)$ is finite, then $I'_{(P,i)} = t_i\sigma'$.
If $(P,i)$ is infinite and $t_i\sigma'$ an integer value, 
then $I'_{(P,i)} = \iPart(P,i,N) \cap \mathbb{Z}$.
If $(P,i)$ is infinite and $t_i\sigma'$ a non-integer value, 
then $I'_{(P,i)} = \iPart(P,i,N) \setminus \mathbb{Z}$.
A substitution $\sigma'$ for a clause $Y$ is a well-typed instance over $\beta$ if it guarantees for each variable $x$ that $x \sigma'$ is part of every test-point set (i.e., $x \sigma' \in \beta(P,i)$) of every argument position $(P,i)$ it occurs in (i.e., $(P,i)\in\depend(x,Y)$) and that $x \sigma' \in \sort(x)^\inta$. 
This means our proof only needs to show that any two argument positions $(P,i)$ and $(Q,j)$ in $Y$ that share the same variable $x$, share a test point $b \in \sort(x)^\inta$ such that $x \sigma$ can be extrapolated from $b$, i.e., for $b = x \sigma'$:
$b \in \beta(P,i)$, $b \in \beta(Q,j)$, $b \in \sort(x)^\inta$, 
and $x \sigma \in I'_{(P,i)}$ as well as $x \sigma \in I'_{(Q,j)}$.

First of all, $(x \sigma) \in \sort(x)^\inta$, $(x \sigma) \in \sort(P,i)^\inta$, and $(x \sigma) \in \sort(Q,j)^\inta$ because $\sigma$ is well-typed over $\beta^*$ and $N$ is well-typed.
Second of all, $H \sigma$ is derivable because the atoms in $\Delta\sigma$ are derivable from $N$ and $\Lambda \sigma$ evaluates to true.
Hence, $x \sigma$ must be a derivable value for both $(P,i)$ and $(Q,j)$.
This means our condition is trivial to satisfy if $(P,i)$ is finite and $(Q,j)$ is finite because 
$\dval(P,i,N) \subseteq \beta(P,i)$ and $\dval(Q,j,N) \subseteq \beta(Q,j)$.
So we can simply choose $b = (x \sigma)$ to satisfy our conditions.
The case where $(P,i)$ is finite and $(Q,j)$ is infinite works similarly.
Here we additionally need that $[x \sigma, x \sigma] \in \iPart(Q,i,N)$ because $\connectedArgs(P,i,N) = \connectedArgs(Q,j,N)$ and thus $(x \sigma) \in \beta(Q,j)$.
As a result, we can choose $b = (x \sigma)$ again to satisfy our conditions.
The reverse case, where $(P,i)$ is infinite and $(Q,j)$ is finite, works symmetrically.
In the case where $(P,i)$ and $(Q,j)$ are both infinite, 
we know that $\connectedArgs(P,i,N) = \connectedArgs(Q,j,N)$ because they share a variable in this clause and therefore $\testpoints(P,i,N) = \testpoints(Q,j,N)$.
Since we know that $(x \sigma) \in \sort(P,i)^\inta$ and $(x \sigma) \in \sort(Q,j)^\inta$, 
we know that $\beta(P,i,N) \cap \beta(Q,j,N)$ contains a (non-)integer value $b$ for the interval $I \in \iPart(P,i,N) = \iPart(Q,j,N)$ if $x \sigma \in I$ is a (non-)integer value.
Hence, we can choose $b = x \sigma'$ and satisfy our conditions.
\end{proof}

\begin{lemma}
\label{lem:testpointrepresentation}
Let $\bar{a}$ be a test-point vector for $Q$ over $\beta$, i.e., with $a_i \in \beta(Q,i)$ for all $i$.
Then $Q(\bar{a})$ is derivable from $\mGnd(N)$ if and only if $Q(\bar{a})$ is derivable from $\mGnd_{\beta}(N)$.
\end{lemma}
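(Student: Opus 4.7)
\medskip

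The plan is to prove the two directions separately. The easy direction is that any derivation in $\mGnd_{\beta}(N)$ is already a derivation in $\mGnd(N)$, since $\mGnd_{\beta}(N) \subseteq \mGnd(N)$: every well-typed instance over $\beta$ is a well-typed instance over the maximal $\tpfunction$ $\beta^*$, which (after canonical simplification of evaluable $\LA$ atoms) yields exactly $\mGnd(N)$. Hence derivability of $Q(\bar{a})$ in $\mGnd_{\beta}(N)$ implies derivability in $\mGnd(N)$, with no use of the hypothesis that $\bar{a}$ is a test-point vector.

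For the converse, I would induct on the length of a derivation of $Q(\bar{a})$ from $\mGnd(N)$. Consider a clause $(\Lambda \parallel \Delta \rightarrow Q(\bar{t})) \in N$ and a grounding $\sigma$ such that $Q(\bar{t})\sigma = Q(\bar{a})$, $\Lambda\sigma$ evaluates to true, and every atom in $\Delta\sigma$ is derivable from $\mGnd(N)$ by a strictly shorter derivation. Apply Lemma~\ref{lem:welltypedgroundingrepresentation} to obtain a substitution $\sigma' \in \tg_{\beta}(\Lambda \parallel \Delta \rightarrow Q(\bar{t}))$ such that $\bar{t}\sigma \in \eta(Q,\bar{t}\sigma')$ and $\bar{s}\sigma \in \eta(P,\bar{s}\sigma')$ for every body atom $P(\bar{s}) \in \Delta$. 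Because $\bar{a} = \bar{t}\sigma$ is already a test-point vector (i.e., $a_i \in \beta(Q,i)$), the construction in the proof of Lemma~\ref{lem:welltypedgroundingrepresentation} may pick $b = x\sigma$ for each variable $x$ occurring in the head $Q(\bar{t})$, so we can ensure $\bar{t}\sigma' = \bar{a}$.

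It remains to show that the clause $(\Lambda \parallel \Delta \rightarrow Q(\bar{t}))\sigma'$, which lies in $\mGnd_{\beta}(N)$, actually fires. First, $\Lambda\sigma'$ evaluates to true: each atom in $\Lambda$ has the form $x \LAOP c$ (by purity and well-typedness of $N$), and by well-typedness of $\sigma'$ together with the extrapolation property, $x\sigma'$ lies in the same element of $\iPart(\cdot,\cdot,N)$ as $x\sigma$, so Lemma~\ref{lem:uniforminequalityevaluation} yields the claim. Second, each body atom $P(\bar{s})\sigma'$ is derivable from $\mGnd_{\beta}(N)$: since $P(\bar{s})\sigma$ is derivable from $\mGnd(N)$ by a shorter derivation and $\bar{s}\sigma \in \eta(P,\bar{s}\sigma')$, Lemmas~\ref{lem:uniformpredicateinterpretationZ} and \ref{lem:uniformpredicateinterpretationR} applied componentwise transport derivability from $\bar{s}\sigma$ to $\bar{s}\sigma'$ (using the same clauses, so the derivation depth is preserved), and the inductive hypothesis then promotes this to a derivation in $\mGnd_{\beta}(N)$. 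Unit resolution with the head then yields $Q(\bar{a})$ in $\mGnd_{\beta}(N)$.

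The main obstacle is the bookkeeping around Lemma~\ref{lem:welltypedgroundingrepresentation}: one must verify that the freedom in choosing $\sigma'$ is sufficient to fix $\bar{t}\sigma' = \bar{a}$ on the head variables, while still giving well-typed, extrapolation-compatible values on the body-only variables. A secondary subtlety is ensuring that the uniformity transport preserves (or at worst does not increase) derivation depth so that the outer induction is well-founded; this is clear from the proofs of Lemmas~\ref{lem:uniformpredicateinterpretationZ} and \ref{lem:uniformpredicateinterpretationR}, which replay the given derivation clause-by-clause under a modified grounding. Everything else is routine.
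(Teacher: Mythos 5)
Your proof follows essentially the same route as the paper's: the easy direction via $\mGnd_{\beta}(N) \subseteq \mGnd(N)$, then an induction on derivation structure applying Lemma~\ref{lem:welltypedgroundingrepresentation} to obtain $\sigma'$, Lemma~\ref{lem:uniforminequalityevaluation} to validate $\Lambda\sigma'$, and Lemmas~\ref{lem:uniformpredicateinterpretationZ} and~\ref{lem:uniformpredicateinterpretationR} to transport derivability of body atoms before invoking the induction hypothesis. In fact you are slightly more careful than the printed proof on two points the paper leaves implicit: that the choice of $\sigma'$ in Lemma~\ref{lem:welltypedgroundingrepresentation} can (indeed must, since each interval contains at most one test point of each integrality) be made to agree with $\sigma$ on head variables so that $\bar{t}\sigma' = \bar{a}$, and that the uniformity lemmas preserve derivation depth so the induction is well-founded.
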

\begin{proof}
The first direction, $Q(\bar{a})$ is derivable from $\mGnd(N)$ if $Q(\bar{a})$ is derivable from $\mGnd_{\beta}(N)$, is straightforward because $\mGnd_{\beta}(N) \subseteq \mGnd(N)$.
So any derivation step in $\mGnd_{\beta}(N)$ can also be performed in $\mGnd(N)$.
The second direction follows by structural induction:
\item The base case is that $Q(\bar{a})$ was only derived using one clause, i.e.,
      $N$ contains a clause $\Lambda \parallel \rightarrow Q(\bar{t})$ with 
      a well-typed grounding $\sigma$ over the most general $\tpfunction$ $\beta^*$ 
      such that $Q(\bar{a}) = Q(\bar{t}) \sigma$ and $\Lambda \sigma$ evaluates to true.
      However, $(\Lambda \parallel \rightarrow Q(\bar{t}))\sigma$ is part of $\mGnd_{\beta}(N)$.
      Therefore, $Q(\bar{a})$ is also derivable from $\mGnd_{\beta}(N)$.
\item The induction step is that $Q(\bar{a})$ was derived using a clause 
      $\Lambda \parallel \Delta \rightarrow Q(\bar{t}) \in N$ with a well-typed grounding $\sigma$ 
      over the most general $\tpfunction$ $\beta^*$ 
      such that $Q(\bar{a}) = Q(\bar{t}) \sigma$, 
      $\Lambda \sigma$ evaluates to true, and 
      all $P(\bar{s}) \sigma \in \Delta \sigma$ are derivable from $N$.
      Since all $P(\bar{s}) \sigma \in \Delta \sigma$ are derivable and 
      $\Lambda \sigma$ evaluates to true, 
      we can use Lemma~\ref{lem:welltypedgroundingrepresentation} to construct a grounding $\sigma'$ 
      that is well-typed over $\beta$ and 
      such that $\bar{s} \sigma = \eta(P,\bar{s} \sigma')$ for all $P(\bar{s}) \sigma \in \Delta \sigma$.
      By Lemmas~\ref{lem:uniformpredicateinterpretationZ} 
      and~\ref{lem:uniformpredicateinterpretationR}, 
      we can assume that $P(\bar{s}) \sigma'$ is derivable from $N$ because 
      $P(\bar{s}) \sigma$ is derivable from $N$.      
      Following that we can assume by induction that $P(\bar{s}) \sigma' \in \Delta \sigma'$ is derivable from $\mGnd_{\beta}(N)$ because $P(\bar{s}) \sigma'$ is derivable from $N$.    
      Due to Lemma~\ref{lem:uniforminequalityevaluation}, $\Lambda \sigma'$ is also satisfiable.
      Thus $Q(\bar{a}) = Q(\bar{t}) \sigma' = Q(\bar{t}) \sigma$ is derivable 
      from $\mGnd_{\beta}(N)$ using $(\Lambda \parallel \Delta \rightarrow Q(\bar{t})) \sigma'$.    
\end{proof}

\begin{lemma}
\label{lem:satgrounding}
The $\tpfunction$ $\beta$ covers $N$, i.e., $\mGnd_{\beta}(N)$ is equisatisfiable to $N$.
\end{lemma}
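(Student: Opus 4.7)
The plan is to prove the two directions of equisatisfiability separately. The forward direction is immediate: up to canonical simplification of theory atoms, $\mGnd_{\beta}(N) \subseteq \mGnd(N)$, so any satisfying interpretation of $N$ restricts to one of $\mGnd_{\beta}(N)$; this parallels Lemma~\ref{lem:interpretationtofinitelygrounded}.

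For the converse, I would fix a satisfying interpretation $\inta$ of $\mGnd_{\beta}(N)$ and, since $N$ is Horn, work with the minimal such interpretation $P^{\inta} = \{\bar{b} \mid P(\bar{b}) \in \dfacts(P,\mGnd_{\beta}(N))\}$. From this I construct an interpretation $\inta'$ for $N$ by extrapolating with $\eta$: set $P^{\inta'} = \bigcup_{\bar{b} \in P^{\inta}} \eta(P,\bar{b})$, and leave everything else false. The goal is to show $\inta' \models N$.

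The heart of the proof is verifying an arbitrary clause $C = (\Lambda \parallel \Delta \rightarrow H) \in N$ under a grounding $\sigma$ such that $\Lambda\sigma$ evaluates to true and every $P(\bar{s})\sigma \in \Delta\sigma$ is true in $\inta'$. By construction of $\inta'$, each such atom is derivable from $\mGnd(N)$ after extrapolation. I would then apply Lemma~\ref{lem:welltypedgroundingrepresentation} to obtain a single well-typed grounding $\sigma' \in \tg_{\beta}(C)$ with $\bar{s}\sigma \in \eta(P,\bar{s}\sigma')$ for every atom $P(\bar{s})$ appearing in $\Delta$ or as $H$. Lemma~\ref{lem:uniforminequalityevaluation} then guarantees $\Lambda\sigma'$ still evaluates to true, since $\sigma'$ stays inside the same cells of every relevant interval partition. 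Lemmas~\ref{lem:uniformpredicateinterpretationZ} and \ref{lem:uniformpredicateinterpretationR} combined with Lemma~\ref{lem:testpointrepresentation} show that each $P(\bar{s})\sigma'$ is derivable from $\mGnd_{\beta}(N)$, hence already belongs to $P^{\inta}$. Therefore the instance $C\sigma' \in \mGnd_{\beta}(N)$ fires in $\inta$, so $H\sigma' \in P^{\inta}$, and by the extrapolation definition $H\sigma \in \eta(P,\bar{s}\sigma') \subseteq P^{\inta'}$, as needed.

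The main obstacle is producing the common grounding $\sigma'$ that works simultaneously for every atom of $C$: a single variable $x$ may occur in many argument positions whose assigned test-point sets were chosen independently, and we need all of them to agree on a value $x\sigma'$ that both lies in every relevant $\beta(P,i)$ and extrapolates back to $x\sigma$. This is precisely the content of Lemma~\ref{lem:welltypedgroundingrepresentation}, which relies on the design decision that infinite argument positions in the same connected component receive identical test-point sets and on the presence, in each interval, of both an integer and a non-integer representative when the sort allows it. Once that coherence is in hand, the rest of the argument reduces to chaining the uniformity and derivability lemmas already established.
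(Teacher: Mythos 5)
Your proposal is correct and follows essentially the same route as the paper's own proof: the forward direction via $\mGnd_{\beta}(N)\subseteq\mGnd(N)$, and the converse by extrapolating the minimal (derivable-facts) model of $\mGnd_{\beta}(N)$ through $\eta$ and then discharging an arbitrary ground instance of a clause in $N$ via Lemma~\ref{lem:welltypedgroundingrepresentation} together with the uniformity lemmas. The only minor difference is that you explicitly invoke Lemma~\ref{lem:testpointrepresentation} to move from derivability over $\mGnd(N)$ back to derivability over $\mGnd_{\beta}(N)$, a step the paper's proof leaves implicit; your version is a touch more careful there but does not change the argument.
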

\begin{proof}
If $N$ is satisfiable, then $\mGnd(N)$ is satisfiable.
Hence $\mGnd_{\beta}(N)$ is also satisfiable because $\mGnd_{\beta}(N)\subseteq \mGnd(N)$ .
For the reverse direction we assume that $\mGnd_{\beta}(N)$ is satisfiable.
Then we show that we can extrapolate a new interpretation $\inta$ from the derivable facts of $\mGnd_{\beta}(N)$ so it satisfies $\mGnd(N)$ and thus $N$.
The extrapolation is defined as follows:
$P^{\inta} = \{\bar{b} \mid \bar{a} \in \dfacts(P,\mGnd_{\beta}(N)) \text{ and } \bar{b} \in \eta(P,\bar{a})\}$.
The interpretation $\inta$ satisfies every clause $(\Lambda \parallel \Delta \rightarrow H) \in \mGnd(N)$, i.e., every well-typed grounding $(\Lambda \parallel \Delta \rightarrow H) \sigma$ of every clause $(\Lambda \parallel \Delta \rightarrow H) \in N$ over the most general $\tpfunction$ $\beta^*$, 
due to one of three reasons:
(i)~$\Lambda \sigma$ evaluates to false. 
(ii)~An atom $A$ from $\Delta \sigma$ is not derivable from $N$ and since $\mGnd_{\beta}(N)$ is a subset of $\mGnd(N)$ the atom $A$ is also not derivable from $\mGnd_{\beta}(N)$. By definition of $\inta$ this means $\Delta \sigma$ is interpreted as false and the clause is thus satisfied.
(iii)~We can assume that $\Lambda \sigma$ evaluates to true and that all atoms $A$ from $\Delta \sigma$ are derivable from $N$.
      This means we can use Lemma~\ref{lem:welltypedgroundingrepresentation} to construct a 
      grounding $\sigma'$ that is well-typed over $\beta$ and 
      such that $\bar{s} \sigma = \eta(P,\bar{s} \sigma')$ 
      for all $P(\bar{s}) \sigma \in \atoms((\Delta \rightarrow H) \sigma)$.
      Due to Lemma~\ref{lem:uniforminequalityevaluation}, 
      this means $\Lambda \sigma'$ evaluates to true.
      Together with Lemmas~\ref{lem:uniformpredicateinterpretationZ} 
      and~\ref{lem:uniformpredicateinterpretationR} 
      this implies that all atoms in $\Delta \sigma'$ and $H \sigma'$ are derivable from $\mGnd_{\beta}(N)$.
      Hence, $\inta$ satisfies $H \sigma = P(\bar{s})\sigma$ and the full clause because 
      $\bar{s} \sigma = \eta(P,\bar{s} \sigma')$ and 
      $H \sigma'$ is derivable from $\mGnd_{\beta}(N)$.
\end{proof}

\begin{lemma}
\label{lem:existgrounding}
The $\tpfunction$ $\beta$ covers an existential conjecture $N \models \exists \bar{x}. Q(\bar{x})$, i.e., 
$\mGnd_{\beta}(N) \cup \{\mGnd_{\beta}(\parallel Q(\bar{x}) \rightarrow \bot)\}$ is satisfiable if and only if $N \models \exists \bar{x}. Q(\bar{x})$ is false.
\end{lemma}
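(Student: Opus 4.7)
The plan is to reduce this claim to Lemma~\ref{lem:satgrounding} by noting that falsifying an existential conjecture is the same as satisfying a suitably extended clause set, and then showing that the test-point function $\beta$ computed from $N$ is also an appropriate covering for that extension.

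First, I would observe the standard equivalence $N \not\models \exists \bar{x}.\, Q(\bar{x})$ iff $N' := N \cup \{\parallel Q(\bar{x}) \rightarrow \bot\}$ is satisfiable: any model of $N$ in which $\exists \bar{x}.\, Q(\bar{x})$ is false interprets all $Q(\bar{a})$ as false and therefore satisfies $N'$, and conversely any model of $N'$ is a model of $N$ in which $Q$ is empty. It therefore suffices to prove $N'$ is satisfiable iff $\mGnd_{\beta}(N) \cup \mGnd_{\beta}(\parallel Q(\bar{x}) \rightarrow \bot) = \mGnd_{\beta}(N')$ is satisfiable.

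The second step is to argue that the $\beta$ computed from $N$ covers $N'$ in the sense of Lemma~\ref{lem:satgrounding}. The new clause $\parallel Q(\bar{x}) \rightarrow \bot$ has an empty head, so it introduces no new positive occurrences of any predicate; hence $\dval(P,i,N') = \dval(P,i,N)$ and the approximation $\aval(P,i,N') = \aval(P,i,N)$ remain unchanged. Its body contains no arithmetic atoms and its variables $\bar{x}$ are fresh and each appears in exactly one argument position $(Q,i)$, so no new connections are created in $\connectedArgs$ or $\connectedIneqs$. Consequently the interval partitions $\iPart(Q,i,N')$, the test-point sets $\testpoints(Q,i,N')$, and the resulting function $\beta$ coincide with those built from $N$ alone.

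Once this coincidence is established, I would apply Lemma~\ref{lem:satgrounding} to $N'$ directly, yielding equisatisfiability of $N'$ with $\mGnd_{\beta}(N')$. The only part that needs a small additional check is the extrapolation argument for the new clause: well-typed groundings for $\parallel Q(\bar{x}) \rightarrow \bot$ are precisely those substitutions $\sigma$ with $x_i\sigma \in \beta(Q,i) \cap \sort(x_i)^{\inta}$, so in the extrapolated interpretation $\inta$ of Lemma~\ref{lem:satgrounding} where $Q^{\inta} = \{\bar{b} \mid \bar{a} \in \dfacts(Q,\mGnd_{\beta}(N)) \text{ and } \bar{b} \in \eta(Q,\bar{a})\}$, the falsity of $Q(\bar{a})$ for every well-typed test-point vector $\bar{a}$ (which follows since $\mGnd_{\beta}(N')$ contains $\parallel Q(\bar{a}) \rightarrow \bot$) lifts to falsity of $Q(\bar{b})$ on every extrapolated interval, making $\inta$ a model of the universally quantified conjecture clause as well. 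The main obstacle is this last step: verifying that the extrapolation preserves falsity of $Q$, and not just truth of the derivable facts, uniformly across each interval. The key is that $\eta$ partitions each infinite argument position into (non-)integer intervals and that derivable facts are uniform on these intervals by Lemmas~\ref{lem:uniformpredicateinterpretationZ} and~\ref{lem:uniformpredicateinterpretationR}, so a missing fact at one test point yields a missing extrapolated fact throughout its interval.
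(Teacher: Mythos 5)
Your proof takes essentially the same route as the paper: encode the existential conjecture via $N' := N \cup \{\parallel Q(\bar{x}) \rightarrow \bot\}$, observe that $\mGnd_{\beta}(N') = \mGnd_{\beta}(N) \cup \mGnd_{\beta}(\parallel Q(\bar{x}) \rightarrow \bot)$, and invoke Lemma~\ref{lem:satgrounding}. The paper's version is a two-line reduction; yours fills in a step the paper leaves implicit, namely that the $\tpfunction$ $\beta$ is unchanged when passing from $N$ to $N'$ (because the added clause has an empty head, no arithmetic atoms, and fresh variables that occur in only one argument position each, so $\aval$, $\connectedArgs$, $\connectedIneqs$, and hence $\iPart$ and $\testpoints$ are all preserved). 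This is a genuine and worthwhile observation, since Lemma~\ref{lem:satgrounding} and the auxiliary Lemmas~\ref{lem:welltypedgroundingrepresentation}, \ref{lem:uniformpredicateinterpretationZ}, \ref{lem:uniformpredicateinterpretationR} are all stated for the $\beta$ constructed from the specific clause set at hand. One small remark: the final paragraph where you flag ``the main obstacle'' of verifying that extrapolation preserves falsity of $Q$ is not actually an extra obligation. Once you apply Lemma~\ref{lem:satgrounding} to $N'$, the extrapolated interpretation sets $Q^{\inta}$ to contain only extrapolations of facts derivable from $\mGnd_{\beta}(N')$; but if any such $Q(\bar{a})$ were derivable, $\bot$ would follow via $\parallel Q(\bar{a}) \rightarrow \bot \in \mGnd_{\beta}(N')$, contradicting satisfiability, so $Q^{\inta}$ is automatically empty. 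The case analysis in Lemma~\ref{lem:satgrounding}'s proof already handles the body-only clause $\parallel Q(\bar{x}) \rightarrow \bot$ without any additional argument, so that last paragraph can be dropped.
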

\begin{proof}
$N \models \exists \bar{x}. Q(\bar{x})$ if $N \cup \{Q(\bar{x}) \rightarrow \bot\}$ is unsatisfiable. 
Hence, Lemma~\ref{lem:satgrounding} shows that $\beta$ covers the existential conjecture.
\end{proof}

\begin{lemma}
\label{lem:univgrounding}
The $\tpfunction$ $\beta$ covers a universal conjecture $N \models \forall \bar{x}. Q(\bar{x})$, i.e.,
$\mGnd_{\beta}(N) \cup N_C$ is satisfiable if and only if $N \models \forall \bar{x}. Q(\bar{x})$ is false.
Here $N_C$ is the set $\{\parallel \mGnd_{\beta}(Q(\bar{x})) \rightarrow \bot \}$ if $\eta$ is complete for $Q$ or the empty set otherwise.
\end{lemma}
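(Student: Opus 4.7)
The plan is to combine Lemma~\ref{lem:satgrounding} with a careful analysis of the extrapolation function $\eta$, splitting into two cases according to whether $\eta$ is complete for $Q$. Throughout, I will use the crucial observation that for distinct test-point vectors $\bar{a} \neq \bar{c}$ over $\beta$ (that is, vectors of test points for $Q$), the extrapolation regions $\eta(Q,\bar{a})$ and $\eta(Q,\bar{c})$ are disjoint. On finite argument positions $(Q,i)$, the $i$th coordinate of $\eta(Q,\bar{a})$ is the singleton $\{a_i\}$. On infinite argument positions the intervals in $\iPart(Q,i,N)$ partition $\mathbb{R}$, and the test-point set $\beta(Q,i)$ contains at most one integer and one non-integer point per interval, whose extrapolations $I \cap \mathbb{Z}$ and $I \setminus \mathbb{Z}$ are disjoint. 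Since $\bar{a} \neq \bar{c}$ differ in some coordinate, this yields disjointness of the whole products.

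For the direction $\mGnd_{\beta}(N) \cup N_C$ satisfiable implies $N \not\models \forall \bar{x}.Q(\bar{x})$, I invoke the extrapolation $\inta$ of Lemma~\ref{lem:satgrounding}, defined by $P^{\inta} = \{\bar{b} \mid \bar{c} \in \dfacts(P,\mGnd_{\beta}(N)),\,\bar{b} \in \eta(P,\bar{c})\}$, which satisfies $N$. If $\eta$ is complete for $Q$, then $N_C$ being satisfied together with $\mGnd_{\beta}(N)$ forces some test point $\bar{a}$ with $Q(\bar{a})$ not derivable from $\mGnd_{\beta}(N)$; by the disjointness property, the only candidate $\bar{c}$ with $\bar{a} \in \eta(Q,\bar{c})$ is $\bar{a}$ itself, so $\bar{a} \notin Q^{\inta}$ and $\inta \not\models Q(\bar{a})$. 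If $\eta$ is not complete for $Q$, then $N_C = \emptyset$ and by incompleteness there exists $\bar{b}$ in the domain of $Q$ outside $\bigcup_{\bar{c}} \eta(Q,\bar{c})$, so by construction $\bar{b} \notin Q^{\inta}$.

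For the converse, assume $N \not\models \forall \bar{x}.Q(\bar{x})$. Then $N$ is satisfiable, and by Lemma~\ref{lem:satgrounding} so is $\mGnd_{\beta}(N)$. If $\eta$ is not complete for $Q$, then $N_C = \emptyset$ and we are done. If $\eta$ is complete for $Q$, I argue by contraposition: suppose every test-point vector $\bar{a}$ satisfies that $Q(\bar{a})$ is derivable from $\mGnd_{\beta}(N)$. By Lemma~\ref{lem:testpointrepresentation} this is equivalent to derivability from $N$; then Lemmas~\ref{lem:uniformpredicateinterpretationZ} and~\ref{lem:uniformpredicateinterpretationR} applied coordinate-by-coordinate (picking the integer test point per interval for integer targets, and the non-integer one for non-integer targets) show that $Q(\bar{b})$ is derivable for every $\bar{b} \in \bigcup_{\bar{c}} \eta(Q,\bar{c})$, which by completeness of $\eta$ is the full domain of $Q$, contradicting $N \not\models \forall \bar{x}.Q(\bar{x})$. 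Hence some test point $\bar{a}$ makes $Q(\bar{a})$ not derivable, and the minimal Herbrand model of $\mGnd_{\beta}(N)$ satisfies $\mGnd_{\beta}(N) \cup N_C$.

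The main obstacle I anticipate is not any single technical step but unpacking the definition of $\beta$ and $\eta$ carefully enough to justify the disjointness property, since it hinges on the specific construction of at most one integer and one non-integer test point per interval and on the identification of test-point sets along connected argument positions. Once disjointness is in hand, each of the four subcases reduces cleanly to the existing uniformity and extrapolation lemmas, mirroring the structure of the proofs of Lemmas~\ref{lem:satgrounding} and~\ref{lem:existgrounding}.
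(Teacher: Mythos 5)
Your proof is correct and follows essentially the same approach as the paper's: the same core lemmas (Lemma~\ref{lem:satgrounding} for equisatisfiability, Lemma~\ref{lem:testpointrepresentation} for transferring derivability, and Lemmas~\ref{lem:uniformpredicateinterpretationZ} and~\ref{lem:uniformpredicateinterpretationR} for extrapolating from test points to the full domain), and the same case split on whether $\eta$ is complete for $Q$. Your organization differs slightly --- you prove the two directions of the biconditional directly, while the paper exhausts four cases by crossing \emph{$\eta$ complete / incomplete} with \emph{$\mGnd_{\beta}(N)$ satisfiable / unsatisfiable}, which amounts to the same coverage. The one genuinely distinct move you make is to invoke an explicit disjointness property of the extrapolation regions $\eta(Q,\bar{a})$ in order to read off $\bar{a}\notin Q^{\inta}$ directly from the extrapolated interpretation; the paper instead routes this step through Lemma~\ref{lem:testpointrepresentation}, concluding that $Q(\bar{a})$ is not derivable from $N$ and hence not entailed. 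Both are sound: your version keeps everything inside the constructed model and so makes the construction more self-contained, at the cost of having to verify disjointness (which you do correctly, and which indeed follows from the interval partition and the ``at most one integer plus one non-integer per interval'' test-point selection); the paper's version is a bit more economical by delegating to the already-proved derivability-transfer lemma.
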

\begin{proof}
We split the proof into four parts:
\begin{enumerate}
\item Assume $\mGnd_{\beta}(N) \cup N_C$ is satisfiable and $\eta$ is complete for $Q$.
      This means $\mGnd_{\beta}(N)$ alone is also satisfiable and by 
      Lemma~\ref{lem:satgrounding} we know that $N$ is, too.
      Since $N_C$ is also satisfiable, 
      there must exist an instance $Q(\bar{a}) \in \mGnd_{\beta}(Q(\bar{x}))$ 
      that is not derivable from $\mGnd_{\beta}(N)$.
      By Lemma~\ref{lem:testpointrepresentation}, 
      we know this means that $Q(\bar{a})$ is also not derivable for $N$.
      Hence $N \models \forall \bar{x}. Q(\bar{x})$ is false.
\item Assume $\mGnd_{\beta}(N) \cup N_C$ is satisfiable and $\eta$ does not cover $Q$.
      This means that our over-approximation detected that we cannot derive all instances for $Q$.
      The only thing left to show is that $\mGnd_{\beta}(N)$ is satisfiable implies that
      $N$ is satisfiable.
      This follows from Lemma~\ref{lem:satgrounding}.
\item Assume $\mGnd_{\beta}(N)$ is unsatisfiable.
      Then due to Lemma~\ref{lem:satgrounding}, $N$ is also unsatisfiable and hence the universal conjecture holds.
\item Assume $\mGnd_{\beta}(N)$ is satisfiable, but $\mGnd_{\beta}(N) \cup N_C$ is not.
      First of all, this means $\eta$ is complete for $Q$ and 
      $N$ is satisfiable (Lemma~\ref{lem:satgrounding}).
      Moreover, it means that all facts in $\mGnd_{\beta}(Q(\bar{x}))$ are derivable from $\mGnd_{\beta}(N)$.
      Since $\eta$ is complete for $Q$ and all instances of $Q$ over our test-points are derivable,
      Lemmas~\ref{lem:uniformpredicateinterpretationZ} and~\ref{lem:uniformpredicateinterpretationR} 
      imply that all groundings of $Q$ (i.e., all facts in $\mGnd(Q(\bar{x}))$) 
      are derivable from $N$.
      Hence, the universal conjecture holds.
\end{enumerate}
\end{proof}

\subsubsection{Proof of Theorem~\ref{thm:coveringdefinition}}

\begin{proof}
See Lemmas~\ref{lem:satgrounding},~\ref{lem:existgrounding}, and~\ref{lem:univgrounding}.
\end{proof}

\subsubsection{Auxiliary Lemmas for the Proof of Lemma~\ref{lem:hammeringequivalence}}

\begin{lemma}
\label{lem:hammerequiderivability}
Let $Q$ be a predicate in $N$.
Let $\beta$ be a finite and covering $\tpfunction$ for $N$.
Let $N_H := \tren_N(N) \cup \tfacts(N,\beta) \cup \sfacts(N,\beta)$ be the hammered version of $N$.
Then any fact $Q(\bar{a})$ derivable from $N_H$ is also derivable from $\mGnd_{\beta}(N)$ and vice versa.
\end{lemma}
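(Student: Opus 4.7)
The plan is to prove both directions by induction on the structure of hierarchic unit resolution derivations, exploiting a clean separation of roles in $N_H$: the auxiliary predicates $P_A$ and $Q_{(P,i,S)}$ introduced by $\tren_N$ occur positively only in the ground unit facts of $\tfacts(N,\beta) \cup \sfacts(N,\beta)$, and negatively only in the renamed clauses of $\tren_N(N)$. Hence any derivation of a fact $Q(\bar{a})$ over an original predicate must interleave steps using renamed clauses with eliminations of auxiliary literals by these facts, never using auxiliary predicates as intermediate derived facts.

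For the direction from $N_H$ to $\mGnd_{\beta}(N)$, I would induct on the length of a unit resolution derivation of $Q(\bar{a})$. The last step uses some ground instance $\tren_N(\Lambda \parallel \Delta \rightarrow H)\tau$ of a renamed clause together with previously derived unit facts. Each abstracted theory literal $P_A(\bar{x})\tau$ in the body has been resolved against a fact of $\tfacts(N,\beta)$; by the definition of $\tfacts$, such a fact exists precisely when $A\tau$ evaluates to true in the arithmetic theory. Likewise, each sort literal $Q_{(P,i,S)}(x)\tau$ is resolved against a fact of $\sfacts(N,\beta)$, which by construction certifies $x\tau \in \beta(P,i) \cap S^{\inta}$. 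Collecting these constraints shows that $\tau \in \tg_{\beta}(\Lambda \parallel \Delta \rightarrow H)$ and that $\Lambda\tau$ evaluates to true, so $(\Lambda \parallel \Delta \rightarrow H)\tau \in \mGnd_{\beta}(N)$. The remaining free body atoms are derived by strictly shorter sub-derivations, so the induction hypothesis supplies matching derivations in $\mGnd_{\beta}(N)$, and composing them yields the required derivation of $Q(\bar{a})$.

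The converse direction is essentially symmetric. Given a derivation of $Q(\bar{a})$ in $\mGnd_{\beta}(N)$, each step uses a ground clause $(\Lambda \parallel \Delta \rightarrow H)\tau$ with $\tau \in \tg_{\beta}$ and $\Lambda\tau$ true. By the very definition of $\tfacts(N,\beta)$, for every $A \in \Lambda$ the ground fact $\tren_N(A)\tau$ belongs to $\tfacts(N,\beta)$, and by the definition of $\sfacts(N,\beta)$ every sort fact $Q_{(P,i,S)}(x\tau)$ required by the extra sort literals of $\tren_N(\Lambda \parallel \Delta \rightarrow H)$ belongs to $\sfacts(N,\beta)$. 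Resolving $\tren_N(\Lambda \parallel \Delta \rightarrow H)\tau$ successively against these unit facts eliminates all auxiliary literals, and then resolving with the first-order body facts (available by the induction hypothesis) produces $H\tau = Q(\bar{a})$.

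The main subtle point will be handling the sort literals $\neg Q_{(P,i,S)}(x)$ that $\tren_N$ inserts for variables that neither appear in $\Delta$ nor are otherwise constrained (in particular, integer variables connected only to real-sorted positions): one must verify that $\sfacts(N,\beta)$ provides a fact for exactly those values of $x$ that are admissible under $\beta$, using the intersection with $\sort(x)^{\inta}$ in the definition of $\beta$. Once this bookkeeping is settled, both inductions go through routinely, and the equivalence of derivable facts over original predicates follows.
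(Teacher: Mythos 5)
Your proposal is correct and follows essentially the same strategy as the paper: both directions are established by structural induction on the unit-resolution derivation, using the correspondence between $\tfacts$-facts and ground theory atoms that evaluate to true, and between $\sfacts$-facts and test-point membership of the otherwise-unconstrained variables. One small presentational inaccuracy: in the $N_H \to \mGnd_\beta(N)$ direction you assert well-typedness of $\tau$ from the theory and sort facts alone before invoking the induction hypothesis, but variables occurring in $\Delta$ and not in $\Lambda$ (and not covered by a sort literal) are only forced to be test points because the body facts $\Delta\tau$ are, by the induction hypothesis, derivable from $\mGnd_\beta(N)$ and hence already range over test points; the paper makes this dependency explicit, while you state the well-typedness claim a sentence too early. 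Apart from that ordering detail, your account matches the paper's proof, including the subtle handling of the sort literals that you correctly flag.
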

\begin{proof}
First, we prove that $Q(\bar{a})$ is derivable from $N_H$ if $Q(\bar{a})$ is derivable from $\mGnd_{\beta}(N)$.
We prove this by structural induction over the derivations in $\mGnd_{\beta}(N)$.
\begin{itemize}
\item The base case is that $Q(\bar{a})$ was only derived using one clause, i.e.,
      $N$ contains a clause $\Lambda \parallel \rightarrow Q(\bar{t})$ that has
      a well-typed grounding $\sigma$ over $\beta$ such that $Q(\bar{a}) = (Q(\bar{t}) \sigma)$ 
      and $\Lambda \sigma$ evaluates to true.
      However, this also means that $\tren_N(N)$ contains a clause $\Delta_T, \Delta_S \rightarrow Q(\bar{t})$.
      $\Delta_T$ are all abstracted theory literals and all their groundings
      $\Delta_T \sigma$ appear as facts in $\tfacts(N,\beta)$ since $\Lambda \sigma$ evaluates to true.
      $\Delta_S$ are all sort literals $Q_{(Q,i,S)}(x)$ such that $t_i = x$ and 
      $\sfacts(N,\beta)$ contains  $Q_{(Q,i,S)}(x \sigma)$ because it must hold that $x \sigma \in \beta(Q,i) \cap S^\inta$ or $\sigma$ would not be well-typed over $\beta$.
      Hence, all $A \in (\Delta_T \sigma \cup \Delta_S \sigma)$ are derivable from $N_H$
      and therefore $Q(\bar{a})$ is derivable via $(\Delta_T, \Delta_S \rightarrow Q(\bar{t}))\sigma$ from $N_H$.
\item The induction step is that $Q(\bar{a})$ was derived using a clause 
      $(\Lambda \parallel \Delta \rightarrow Q(\bar{t})) \in N$ with 
      a well-typed grounding $\sigma$ over $\beta$ such that $Q(\bar{a}) = Q(\bar{t}) \sigma$, 
      $\Lambda \sigma$ evaluates to true, and 
      all $P(\bar{s}) \sigma \in \Delta \sigma$ are derivable from $\mGnd_{\beta}(N)$ and $N_H$.
      This also means that $\tren_N(N)$ contains a clause $\Delta_T, \Delta_S, \Delta \rightarrow Q(\bar{t})$.
      $\Delta_T$ are all abstracted theory literals and all their groundings
      $\Delta_T \sigma$ appear as facts in $\tfacts(N,\beta)$ since $\Lambda \sigma$ evaluates to true.
      $\Delta_S$ are all sort literals $Q_{(Q,i,S)}(x)$ such that $t_i = x$ and 
      $\sfacts(N,\beta)$ contains  $Q_{(Q,i,S)}(x \sigma)$ because it must hold that $x \sigma \in \beta(Q,i) \cap S^\inta$ or $\sigma$ would not be well-typed over $\beta$.
      Hence, all $A \in (\Delta_T \sigma \cup \Delta_S \sigma \cup \Delta \sigma)$ 
      are derivable from $N_H$ and therefore $Q(\bar{a})$ is derivable via 
      $(\Delta_T, \Delta_S, \Delta \rightarrow Q(\bar{t}))\sigma$ from $N_H$.     
\end{itemize}
Second, we prove that $Q(\bar{a})$ is derivable from $\mGnd_{\beta}(N)$ if $Q(\bar{a})$ is derivable from $N_H$.
We prove this again by structural induction over the derivations in $N_H$.
\begin{itemize}
\item The base case is that $Q(\bar{a})$ was only derived using one clause in $\tren_N(N)$, i.e.,
      $\tren_N(N)$ contains a clause $\Delta_T, \Delta_S \rightarrow Q(\bar{t})$, 
      there exists a grounding $\sigma$ such that $Q(\bar{a}) = (Q(\bar{t}) \sigma)$ 
      and such that $\Delta_T \sigma \subseteq \tfacts(N,\beta)$ and $\Delta_S \sigma \subseteq \sfacts(N,\beta)$.
      However, this also means that $N$ contains a clause $\Lambda \parallel \rightarrow Q(\bar{t})$
      such that $(\Delta_T, \Delta_S \rightarrow Q(\bar{t})) = \tren_N(\Lambda \parallel \rightarrow Q(\bar{t}))$.
      Therefore, $\Delta_T \sigma \subseteq \tfacts(N,\beta)$ implies that $\Lambda \sigma$
      evaluates to true and that $\sigma$ is well-typed over $\beta$ for the variables $x \in \vars(\Lambda)$.
      All other variables $x \in \vars(Q(\bar{t})) \setminus \vars(\Lambda)$ (with $t_i = x$)
      were added by $\tren_N$ through sort literals $Q_{(Q,i,S)}(x)$ to $\Delta_S$ and 
      $\Delta_S \sigma \subseteq \sfacts(N,\beta)$ ensures that $\sigma$ is also well-typed 
      over $\beta$ for those variables.
      Hence, $\sigma$ is a well-typed grounding over $\beta$ for 
      $\Lambda \parallel \rightarrow Q(\bar{t})$ and $\Lambda \sigma$ evaluates to true.
      Thus, $Q(\bar{a})$ is derivable via $(\Lambda \parallel \rightarrow Q(\bar{t}))\sigma$ from $\mGnd_{\beta}(N)$.
\item The induction step is that $Q(\bar{a})$ was derived using a clause 
      $(\Delta_T, \Delta_S, \Delta \rightarrow Q(\bar{t})) \in \tren_N(N)$,
      all $P(\bar{s}) \sigma \in \Delta \sigma$ are derivable from $\mGnd_{\beta}(N)$ and $N_H$, 
      and there exists a grounding $\sigma$ such that $Q(\bar{a}) = (Q(\bar{t}) \sigma)$ 
      and such that $\Delta_T \sigma \subseteq \tfacts(N,\beta)$ and $\Delta_S \sigma \subseteq \sfacts(N,\beta)$.
      This also means that $N$ contains a clause $\Lambda \parallel \Delta \rightarrow Q(\bar{t})$
      such that $(\Delta_T, \Delta_S, \Delta \rightarrow Q(\bar{t})) = \tren_N(\Lambda \parallel \Delta \rightarrow Q(\bar{t}))$.
      Therefore, $\Delta_T \sigma \subseteq \tfacts(N,\beta)$ implies that $\Lambda \sigma$
      evaluates to true and that $\sigma$ is well-typed over $\beta$ for the variables $x \in \vars(\Lambda)$.
      The grounding $\sigma$ is also well-typed over $\beta$ for all variables $x \in \vars(Q(\bar{t})) \setminus \vars(\Lambda)$ with $\sort(x) = \RealS$ that appear in at least one literal $P(\bar{s}) \in \Delta$ with $s_j = x$ or otherwise the groundings $R(\bar{s'}) \sigma$ of all literals $R(\bar{s'})$ containing $x$ would not be derivable from $\mGnd_{\beta}(N)$.
      The grounding $\sigma$ is also well-typed over $\beta$ for all variables $x \in \vars(Q(\bar{t})) \setminus \vars(\Lambda)$ with $\sort(x) = \IntS$ that appear in a literal $P(\bar{s}) \in \Delta$ with $s_j = x$ and $\sort(P,j) = \IntS$ or otherwise the groundings $R(\bar{s'}) \sigma$ of all literals $R(\bar{s'})$ containing $x$ would not be derivable from $\mGnd_{\beta}(N)$.
      All other variables $x \in \vars(Q(\bar{t})) \setminus \vars(\Lambda)$ (with $t_i = x$)
      were added by $\tren_N$ through sort literals $Q_{(Q,i,S)}(x)$ to $\Delta_S$ and 
      $\Delta_S \sigma \subseteq \sfacts(N,\beta)$ ensures that $\sigma$ is also well-typed 
      over $\beta$ for those variables.
      Hence, $\sigma$ is a well-typed grounding over $\beta$ for 
      $\Lambda \parallel \Delta \rightarrow Q(\bar{t})$, $\Lambda \sigma$ evaluates to true, 
      and all atoms in $\Delta \sigma$ are derivable from $\mGnd_{\beta}(N)$.
      Thus, $Q(\bar{a})$ is derivable via $(\Lambda \parallel \Delta \rightarrow Q(\bar{t}))\sigma$ 
      from $\mGnd_{\beta}(N)$.    
\end{itemize}
\end{proof}

\begin{lemma}\label{lem:hammeringnoconjequivalence}
$N$ is equisatisfiable to its hammered version $\tren_N(N) \cup \tfacts(N,\beta) \cup \sfacts(N,\beta)$.
\end{lemma}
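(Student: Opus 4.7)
The plan is to reduce this to a statement we have already done most of the work for. Since $\beta$ covers $N$, Lemma~\ref{lem:satgrounding} gives $N \equiv_{\mathrm{sat}} \mGnd_{\beta}(N)$, so it suffices to prove that $\mGnd_{\beta}(N)$ is equisatisfiable to $N_H := \tren_N(N) \cup \tfacts(N,\beta) \cup \sfacts(N,\beta)$. Both are (effectively) Horn theories over finitely many ground terms once one unfolds the well-typed ground instances, so their satisfiability is equivalent to $\bot$ not being derivable by hierarchic unit resolution, and their minimal Herbrand model on the free predicates is exactly the set of derivable facts. This lets me use Lemma~\ref{lem:hammerequiderivability} as the central transport.

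For the forward direction, starting from a satisfying interpretation of $\mGnd_{\beta}(N)$, I would build an interpretation $\inta$ of $N_H$ by setting $P^{\inta} := \dfacts(P,\mGnd_{\beta}(N))$ for each free predicate $P$ of $N$, $P_A^{\inta} := \{\bar a \mid P_A(\bar a)\in\tfacts(N,\beta)\}$ for each fresh theory predicate, and $Q_{(P,i,S)}^{\inta} := \{a \mid Q_{(P,i,S)}(a)\in\sfacts(N,\beta)\}$ for each fresh sort predicate. The facts in $\tfacts(N,\beta)\cup\sfacts(N,\beta)$ are satisfied by construction. For a clause $\Delta_T,\Delta_S,\Delta\rightarrow H$ in $\tren_N(N)$ arising from $\Lambda\parallel\Delta\rightarrow H\in N$ and any grounding $\sigma$ satisfying $\Delta_T\sigma\cup\Delta_S\sigma$, the design of $\tren_N$, $\tfacts$ and $\sfacts$ guarantees that $\sigma$ is a well-typed grounding over $\beta$ with $\Lambda\sigma$ true, so $(\Lambda\parallel\Delta\rightarrow H)\sigma\in\mGnd_{\beta}(N)$ and satisfaction is inherited from the minimal model argument. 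For the reverse direction, given a satisfying interpretation of $N_H$, the interpretation $P^{\inta'} := \dfacts(P,N_H)$ satisfies $\mGnd_{\beta}(N)$: Lemma~\ref{lem:hammerequiderivability} says $\dfacts(P,N_H)=\dfacts(P,\mGnd_{\beta}(N))$, which is the minimal Herbrand model we want.

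The one gap that needs care is that Lemma~\ref{lem:hammerequiderivability} is stated only for free-predicate heads, while both $N$ and $\tren_N(N)$ may contain clauses with head $\bot$. I would handle this by adding an exactly parallel induction: a clause $(\Lambda\parallel\Delta\rightarrow\bot)\in N$ has a well-typed grounding $\sigma$ over $\beta$ with $\Lambda\sigma$ true and $\Delta\sigma$ derivable from $\mGnd_{\beta}(N)$ if and only if the corresponding clause $(\Delta_T,\Delta_S,\Delta\rightarrow\bot)\in\tren_N(N)$ has a grounding $\sigma$ with $\Delta_T\sigma\subseteq\tfacts(N,\beta)$, $\Delta_S\sigma\subseteq\sfacts(N,\beta)$, and $\Delta\sigma$ derivable from $N_H$. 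The proof is identical to the two induction steps of Lemma~\ref{lem:hammerequiderivability}, just replacing $Q(\bar t)$ by $\bot$. This equivalence of ``$\bot$-derivability'' combined with the equivalence of free-predicate minimal Herbrand models finishes both directions of the equisatisfiability, and hence the lemma.
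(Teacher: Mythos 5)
Your proposal is correct and follows essentially the same route as the paper: both directions construct interpretations from derivable facts and use Lemma~\ref{lem:hammerequiderivability} to transport satisfaction between $\mGnd_{\beta}(N)$ and $N_H$. You make explicit two points the paper's proof leaves implicit, namely the reduction through $\mGnd_{\beta}(N)$ via Lemma~\ref{lem:satgrounding} and the treatment of clauses with head $\bot$ (which Lemma~\ref{lem:hammerequiderivability} does not directly cover), and both refinements are sound and welcome.
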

\begin{proof}
Let $N_H := \tren_N(N) \cup \tfacts(N,\beta) \cup \sfacts(N,\beta)$.
We split the proof into two parts:
\begin{enumerate}
\item Assume $\mGnd_{\beta}(N)$ is satisfiable. 
Then the interpretation $\inta$ defined by the following rules satisfies $N_H$:
$P^{\inta} := \left\{\bar{a} \mid P(\bar{a}) \in \dfacts(P,\mGnd_{\beta}(N)) \right\}$ if $P$ is a predicate that appears in $N$,
$P^{\inta} := \left\{\bar{a} \mid P(\bar{a}) \in \tfacts(N,\beta) \right\}$ if $P$ is a predicate that appears in $\tfacts(N,\beta)$, and 
$P^{\inta} := \left\{\bar{a} \mid P(\bar{a}) \in \sfacts(N,\beta) \right\}$ if $P$ is a predicate that appears in $\sfacts(N,\beta)$.
We prove this by contradiction: 
Suppose $\inta$ is an interpretation that satisfies $\mGnd_{\beta}(N)$ but not $N_H$.
Naturally, $\inta$ satisfies any of the facts in $\tfacts(N,\beta)$ and in $\sfacts(N,\beta)$.
This means that there must exist a clause $(\Lambda \parallel \Delta \rightarrow H) \in N$, a 
clause $(\Delta_T, \Delta_S, \Delta \rightarrow H) \in \tren_N(N)$ with $(\Delta_T, \Delta_S, \Delta \rightarrow H) = \tren_N(\Lambda \parallel \Delta \rightarrow H)$ and a grounding $\sigma$ 
such that $\inta$ does not satisfy $(\Delta_T, \Delta_S, \Delta \rightarrow H) \sigma$.
Formally, $\inta$ does not satisfy $(\Delta_T, \Delta_S, \Delta \rightarrow H) \sigma$ means $\inta$ satisfies $\Delta_T \sigma$, 
$\inta$ satisfies $\Delta_S \sigma$, $\inta$ satisfies $\Delta \sigma$, but $\inta$ does not satisfy $H \sigma$.
Since $\inta$ satisfies $\Delta_T \sigma$, we know by definition of $\inta$ and $\tfacts(N,\beta)$ that all atoms in $\Delta_T \sigma$ must be derivable from $\tfacts(N,\beta)$.
Since $\inta$ satisfies $\Delta_S \sigma$, we know by definition of $\inta$ and $\sfacts(N,\beta)$ that all atoms in $\Delta_S \sigma$ must be derivable from $\sfacts(N,\beta)$.
Since $\inta$ satisfies $\Delta \sigma$, we know by definition of $\inta$ that all facts $\Delta \sigma$ can be derived from $\mGnd_{\beta}(N)$.
By Lemma~\ref{lem:hammerequiderivability}, this means that all facts $\Delta \sigma$ can be derived from $N_H$, too.
Thus, $H$ must be derivable from $N_H$ and, by Lemma~\ref{lem:hammerequiderivability}, $H$ must be derivable by $\mGnd_{\beta}(N)$.
Hence, $H$ would need to be interpreted as true by $\inta$, which is the contradiction we were looking for.
Thus, $\inta$ satisfies $N_H$.
\item Assume $N_H$ is satisfiable.
Then the interpretation $\inta$ defined by the following rules satisfies $\mGnd_{\beta}(N)$:
$P^{\inta} := \left\{\bar{a} \mid P(\bar{a}) \in \dfacts(P,N_H) \right\}$ if $P$ is a predicate that appears in $N$,
$P^{\inta} := \left\{\bar{a} \mid P(\bar{a}) \in \tfacts(N,\beta) \right\}$ if $P$ is a predicate that appears in $\tfacts(N,\beta)$, and 
$P^{\inta} := \left\{\bar{a} \mid P(\bar{a}) \in \sfacts(N,\beta) \right\}$ if $P$ is a predicate that appears in $\sfacts(N,\beta)$.
We prove this by contradiction: 
Suppose $\inta$ is an interpretation that satisfies $N_H$ but not $\mGnd_{\beta}(N)$.
This means that there must exist a clause $(\Lambda \parallel \Delta \rightarrow H) \in N$, a 
clause $(\Delta_T, \Delta_S, \Delta \rightarrow H) \in \tren_N(N)$ with $(\Delta_T, \Delta_S, \Delta \rightarrow H) = \tren_N(\Lambda \parallel \Delta \rightarrow H)$ and a well-typed grounding $\sigma$ over $\beta$ such that $\inta$ does not satisfy $(\Lambda \parallel \Delta \rightarrow H) \sigma$. 
Formally, $\inta$ does not satisfy $(\Lambda \parallel \Delta \rightarrow H) \sigma$ means $\Lambda \sigma$ evaluates to true, 
$\inta$ satisfies $\Delta \sigma$, but not $H \sigma$.
Since $\inta$ satisfies $\Delta \sigma$, we know by definition of $\inta$ that all facts $\Delta \sigma$ can be derived from $N_H$.
By Lemma~\ref{lem:hammerequiderivability}, this means that all facts $\Delta \sigma$ can be derived from $\mGnd_{\beta}(N)$, too.
Thus, $H$ must be derivable from $\mGnd_{\beta}(N)$ and, by Lemma~\ref{lem:hammerequiderivability}, $H$ must be derivable by $N_H$.
Hence, $H$ would need to be interpreted as true by $\inta$, which is the contradiction we were looking for.
Thus, $\inta$ satisfies $N_H$.
\end{enumerate}
\end{proof}

\begin{lemma}\label{lem:hammeringexisequivalence} 
The conjecture $N \models \exists \bar{y}. Q(\bar{y})$ is false iff $N_D = \tren_N'(N') \cup \tfacts(N',\beta) \cup \sfacts(N',\beta)$ is satisfiable with $N' = N \cup \{Q(\bar{y}) \rightarrow \bot\}$. 
\end{lemma}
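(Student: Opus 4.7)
\begin{prooftoggle}
The plan is to reduce the existential case to the already-established no-conjecture case (Lemma~\ref{lem:hammeringnoconjequivalence}) by treating the negated conjecture as an additional clause. First I would observe the standard equivalence: $N \models \exists \bar{y}.\, Q(\bar{y})$ is false if and only if $N' = N \cup \{Q(\bar{y}) \rightarrow \bot\}$ is satisfiable, since the grounding $Q(\bar{y})\rightarrow\bot$ excludes precisely the models witnessing the existential. Thus the claim reduces to showing that $N'$ is equisatisfiable to $N_D = \tren_{N'}(N') \cup \tfacts(N',\beta) \cup \sfacts(N',\beta)$.

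Next I would apply Lemma~\ref{lem:hammeringnoconjequivalence} directly to $N'$ rather than to $N$. For this to be valid, $\beta$ must be a finite covering $\tpfunction$ for $N'$. The construction of $\beta$ in Section~\ref{sec:typing} uses $\connectedArgs$, $\connectedIneqs$, $\aval$, and $\iPart$ of the input clause set, so strictly speaking one should recompute these on $N'$. However, the added clause $(Q(\bar{y})\rightarrow\bot)$ has no head atom and contributes no new theory atom, so it neither extends $\aval$ nor introduces new bounds. In particular $\connectedIneqs(P,i,N') = \connectedIneqs(P,i,N)$ and $\aval(P,i,N') = \aval(P,i,N)$ for every argument position $(P,i)$, hence $\beta$ and $\eta$ as computed for $N$ also cover $N'$, and $\beta$ is still finite. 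This is the only subtle point of the argument, and I would record it as a small auxiliary observation before invoking the lemma.

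Given that $\beta$ covers $N'$, Lemma~\ref{lem:hammeringnoconjequivalence} yields that $N'$ is equisatisfiable to $\tren_{N'}(N') \cup \tfacts(N',\beta) \cup \sfacts(N',\beta) = N_D$. Chaining the two equivalences gives: $N \models \exists \bar{y}.\, Q(\bar{y})$ is false $\iff$ $N'$ is satisfiable $\iff$ $N_D$ is satisfiable, which is exactly the statement.

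The main obstacle, and really the only non-routine step, is the verification that the test-point function $\beta$ and extrapolation function $\eta$ computed from $N$ still correctly cover $N'$ in the sense required by Lemma~\ref{lem:hammeringnoconjequivalence}. Once that observation is in place, the rest is a direct two-step chain of equisatisfiabilities, and no new constructions or inductions over derivations are needed beyond those already supplied in the proofs of Lemmas~\ref{lem:hammerequiderivability} and~\ref{lem:hammeringnoconjequivalence}.
\end{prooftoggle}
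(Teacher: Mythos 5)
Your proposal matches the paper's proof exactly: both reduce the existential case to Lemma~\ref{lem:hammeringnoconjequivalence} via the standard observation that $N \models \exists\bar{y}.Q(\bar{y})$ is false iff $N' = N \cup \{Q(\bar{y})\rightarrow\bot\}$ is satisfiable, then invoke the no-conjecture equisatisfiability on $N'$. Your additional remark about why $\beta$ computed from $N$ still covers $N'$ is sound but arguably moot, since the paper's hammer (as described just before the lemma) takes $N'$ itself as input for existential conjectures and thus computes $\beta$ directly from $N'$.
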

\begin{proof}
$N \models \exists \bar{x}. Q(\bar{x})$ if $N \cup \{Q(\bar{x}) \rightarrow \bot\}$ is unsatisfiable. 
Hence, Lemma~\ref{lem:hammeringnoconjequivalence} shows that $N \models \exists \bar{y}. Q(\bar{y})$ is false iff $N_D = \tren_N'(N') \cup \tfacts(N',\beta) \cup \sfacts(N',\beta)$ is satisfiable.
\end{proof}

\begin{lemma}\label{lem:hammeringunivequivalence} 
The conjecture $N \models \forall \bar{y}. Q(\bar{y})$ is false iff $N_D = \tren_N(N) \cup \tfacts(N,\beta) \cup \sfacts(N,\beta) \cup N_C$ is satisfiable. 
\end{lemma}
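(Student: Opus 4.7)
The plan is to chain Lemma~\ref{lem:univgrounding} with the equisatisfiability of the hammered encoding established in Lemma~\ref{lem:hammeringnoconjequivalence}. By Lemma~\ref{lem:univgrounding} the conjecture $N \models \forall \bar{y}. Q(\bar{y})$ is false iff $\mGnd_{\beta}(N) \cup N_C$ is satisfiable, so it suffices to show that $\mGnd_{\beta}(N) \cup N_C$ is satisfiable iff $N_D = \tren_N(N) \cup \tfacts(N,\beta) \cup \sfacts(N,\beta) \cup N_C$ is satisfiable.

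For both directions I would re-run the two interpretation constructions from the proof of Lemma~\ref{lem:hammeringnoconjequivalence}. Given a satisfying interpretation of one side, I build the candidate interpretation for the other side by setting $P^{\inta}$ to be the set of derivable facts for $P$ from the corresponding clause set (for predicates $P$ occurring in $N$), and by interpreting the fresh theory and sort predicates introduced by $\tren_N$ exactly as $\tfacts(N,\beta)$ and $\sfacts(N,\beta)$. The argument in Lemma~\ref{lem:hammeringnoconjequivalence} already shows that this construction satisfies the hammered clauses on one side and $\mGnd_{\beta}(N)$ on the other; the only new thing to verify is the additional clause in $N_C$.

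This is where Lemma~\ref{lem:hammerequiderivability} does the work: it guarantees that a ground atom $Q(\bar{a})$ is derivable from $\mGnd_{\beta}(N)$ iff it is derivable from $\tren_N(N) \cup \tfacts(N,\beta) \cup \sfacts(N,\beta)$. Since $N_C$ uses only the predicate $Q$ of $N$ and refers only to ground atoms from $\mGnd_{\beta}(Q(\bar{y}))$, its truth value under either of the two constructed models is determined entirely by which atoms in $\mGnd_{\beta}(Q(\bar{y}))$ are derivable, and these coincide on both sides. Hence $N_C$ is satisfied in the lifted model iff it was satisfied in the original one, and the full conjunction is equisatisfiable in both directions. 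The degenerate case where $\eta$ is not complete for $Q$ is immediate since then $N_C = \emptyset$ and the claim reduces to Lemma~\ref{lem:hammeringnoconjequivalence} composed with Lemma~\ref{lem:univgrounding}.

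The main obstacle I anticipate is bookkeeping rather than a new idea: the interpretation transferred from $\mGnd_{\beta}(N)$ does not natively assign the fresh theory and sort predicates, and conversely the model of $N_D$ may make arbitrary additional atoms true beyond what is derivable. One has to be careful to pick the minimal interpretation on $N$'s predicates (namely the derivable facts) so that Lemma~\ref{lem:hammerequiderivability} applies cleanly; but precisely that canonical choice was already used in Lemma~\ref{lem:hammeringnoconjequivalence}, so the real content of the present lemma is simply to verify that $N_C$ is invariant under that choice.
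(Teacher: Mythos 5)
Your proposal is correct and follows essentially the same route as the paper: reduce via Lemma~\ref{lem:univgrounding} (the universal case of Theorem~\ref{thm:coveringdefinition}) to the equisatisfiability of $\mGnd_{\beta}(N)\cup N_C$ and $N_D$, reuse the minimal-model construction from Lemma~\ref{lem:hammeringnoconjequivalence} for the $\tren_N$/$\tfacts$/$\sfacts$ parts, and invoke Lemma~\ref{lem:hammerequiderivability} to argue that exactly the same ground atoms in $\mGnd_{\beta}(Q(\bar y))$ are derivable on both sides, so $N_C$ holds or fails simultaneously. The paper just spells this out as four cases (including the $\eta$-incomplete and $N_H$-unsatisfiable corner cases) that your sketch subsumes.
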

\begin{proof}
Let $N_H := \tren_N(N) \cup \tfacts(N,\beta) \cup \sfacts(N,\beta)$.
We will prove that $\psi = \mGnd_{\beta}(N) \cup N_C$ is equisatisfiable to $N_H \cup N_C$. Then we get from Theorem~\ref{thm:coveringdefinition} that $N \models \forall \bar{y}. Q(\bar{y})$ is false if and only if the hammered version $N_H \cup N_C$ is satisfiable. (The case for $N$ without conjecture follows from Lemma~\ref{lem:hammeringnoconjequivalence}.)\newline
We split the proof into four parts:
\begin{enumerate}
\item Assume $N_H \cup N_C$ is satisfiable and $\eta$ does not cover $Q$.
      This means that our over-approximation detected that we cannot derive all instances for $Q$.
      The only thing left to show is that $N_H$ is satisfiable implies that
      $\mGnd_{\beta}(N)$ is satisfiable.
      This follows from Lemma~\ref{lem:hammeringnoconjequivalence}.
\item Assume $N_H \cup N_C$ is satisfiable and $\eta$ is complete for $Q$.
      This means $N_H$ alone is also satisfiable and by 
      Lemma~\ref{lem:hammeringnoconjequivalence} this means that $\mGnd_{\beta}(N)$ is, too.
      Since $N_C$ is also satisfiable, 
      there must exist an instance $Q(\bar{a}) \in \mGnd_{\beta}(Q(\bar{x}))$ 
      that is not derivable from $N_H$.
      By Lemma~\ref{lem:hammerequiderivability}, 
      we know this means that $Q(\bar{a})$ is also not derivable for $\mGnd_{\beta}(N)$.
\item Assume $N_H$ is unsatisfiable.
      Then due to Lemma~\ref{lem:hammeringnoconjequivalence}, $\mGnd_{\beta}(N)$ is also unsatisfiable and hence $N_H \cup N_C$ and $\mGnd_{\beta}(N) \cup N_C$ are both unsatisfiable.
\item Assume $N_H $ is satisfiable, but $N_H \cup N_C$ is not.
      This means $\eta$ is complete for $Q$ and by Lemma~\ref{lem:hammeringnoconjequivalence} this means that $\mGnd_{\beta}(N)$ is satisfiable, too.      
      Moreover, it means that all facts in $\mGnd_{\beta}(Q(\bar{x}))$ are derivable from $N_H$.
      By Lemma~\ref{lem:hammerequiderivability}, this implies that all facts in $\mGnd_{\beta}(Q(\bar{x}))$ are also derivable from $\mGnd_{\beta}(N)$.
      Hence, $\mGnd_{\beta}(N) \cup N_C$ is also not satisfiable.
\end{enumerate}
\end{proof}

\subsubsection{Proof of Lemma~\ref{lem:hammeringequivalence}}

\begin{proof}
See Lemmas~\ref{lem:hammeringnoconjequivalence}, \ref{lem:hammeringexisequivalence}, and \ref{lem:hammeringunivequivalence}.
\end{proof}

$\connectedArgs(P,i,N)$ the \emph{set of connected argument positions} and 
by $\connectedIneqs(P,i,N)$

\subsection{Pseudo-Code Algorithms for the steps of the Sorted Datalog Hammer}

\subsubsection{Checking whether a set of clauses is reducible to $\HBS(\SB)$}

\begin{center}
\begin{tabular}{l}
  $\text{IsReducible}(N,\aval)$\\
  for all Horn clauses $\Lambda \parallel \Delta \rightarrow H \in N$\\
  $\quad$ for all variables $x$ appearing in $\Lambda \parallel \Delta \rightarrow H$\\
  $\quad$ $\quad$ $\text{is\_finite(x)} := \texttt{false}$;\\
  $\quad$ for all atoms $P(t_1,\ldots,t_n) \in \Delta$ and $1 \leq i \leq n$\\  
  $\quad$ $\quad$ if [$t_i$ is a variable and $\aval(P,i,N)$ is finite] then\\
  $\quad$ $\quad$ $\quad$ $\text{is\_finite(x)} := \texttt{true}$;\\   
  $\quad$ for all inequalities $\lambda$ in $\Lambda$\\
  $\quad$ $\quad$ $\text{inftvars} = 0$;\\
  $\quad$ $\quad$ for all variables $x$ appearing in $\lambda$\\ 
  $\quad$ $\quad$ $\quad$ if [$\text{is\_finite(x)} = \texttt{false}$] then\\
  $\quad$ $\quad$ $\quad$ $\quad$ $\text{inftvars}++$;\\
  $\quad$ $\quad$ if [$\text{inftvars} > 1$] then\\
  $\quad$ $\quad$ $\quad$ return \texttt{false};\\
  return \texttt{true};
\end{tabular}
\end{center}

\subsubsection{Computing $\connectedArgs(P,i,N)$ and $\connectedIneqs(P,i,N)$}

\begin{center}
\begin{tabular}{l}
  $\text{FindConnections}(N)$\\
  for all predicates $P$ and argument positions $i$ for $P$\\
  $\quad$ $\connectedArgs(P,i,N) := \{(P,i)\}$; $\connectedIneqs(P,i,N) := \emptyset$\\
  for all Horn clauses $\Lambda \parallel \Delta \rightarrow H \in N$\\
  $\quad$ for all atoms $P(t_1, \ldots, t_n)$, $Q(s_1, \ldots, s_m)$ in $\Delta \rightarrow H$\\
  $\quad$ $\quad$ for all positions $1 \leq i \leq n$ and $1 \leq j \leq m$\\
  $\quad$ $\quad$ $\quad$ if [$t_i = s_j$ and $t_i $is a variable] then\\
  $\quad$ $\quad$ $\quad$ $\quad$ $\connectedArgs(P,i,N) := \connectedArgs(P,i,N) \cup \connectedArgs(Q,j,N)$;\\
  $\quad$ $\quad$ $\quad$ $\quad$ for all $(R,k) \in \connectedArgs(P,i,N)$\\
  $\quad$ $\quad$ $\quad$ $\quad$  $\quad$ $\connectedArgs(R,k,N) := \connectedArgs(P,i,N)$;\\
  for all Horn clauses $\Lambda \parallel \Delta \rightarrow H \in N$\\
  $\quad$ for all atoms $P(t_1, \ldots, t_n)$ in $\Delta \rightarrow H$\\
  $\quad$ $\quad$ if [$(t_i = c)$] then\\
  $\quad$ $\quad$ $\quad$ $\connectedIneqs(P,i,N) := \connectedIneqs(P,i,N) \cup \{(x = c)\}$;\\
  $\quad$ $\quad$ else if [$t_i$ is a variable an $(P,i)$ is infinite] then\\
  $\quad$ $\quad$ $\quad$ for all inequalities $\lambda$ in $\Lambda$ that contain $t_i$\\
  $\quad$ $\quad$ $\quad$ $\quad$ let $x_1, \ldots, x_n$ be the other variables appearing in $\lambda$; \\ 
  $\quad$ $\quad$ $\quad$ $\quad$ let $(Q_{j,1},i_{j,1}), \ldots, (Q_{j,m},i_{j,m})$ be connected to $x_j$ in $\Delta$;\\
  $\quad$ $\quad$ $\quad$ $\quad$ let $S_j := \bigcap_k \aval(Q_{j,k},i_{j,k},N)$;\\
  $\quad$ $\quad$ $\quad$ $\quad$ for all $\sigma$ with $\sigma(x_j) \in S_j$ for $1 \leq j \leq n$\\
  $\quad$ $\quad$ $\quad$ $\quad$ $\quad$ $\lambda'$ is $\lambda \sigma$ simplified to a variable bound;\\
  $\quad$ $\quad$ $\quad$ $\quad$ $\quad$ $\connectedIneqs(P,i,N) := \connectedIneqs(P,i,N) \cup \{\lambda'\}$;\\
  $\quad$ $\quad$ for all $(R,k) \in \connectedArgs(P,i,N)$\\
  $\quad$ $\quad$ $\quad$ $\connectedIneqs(R,k,N) := \connectedIneqs(P,i,N)$;\\
\end{tabular}
\end{center}

\subsubsection{Picking test points from the interval partitions}

\begin{center}
\begin{tabular}{l}
  $\text{PickTestpoints}(N)$\\
  for all predicates $P$ and argument positions $i$ for $P$\\
  $\quad$ $\testpoints(P,i,N) := \emptyset$;\\
  $\quad$ if [$|\aval(P,i,N)| \in \mathbb{N}$] then\\
  $\quad$ $\quad$ $\testpoints(P,i,N) := \aval(P,i,N)$;\\
  for all predicates $P$ and argument positions $i$ for $P$\\
  $\quad$ if [$|\aval(P,i,N)| = \infty$ and $\testpoints(P,i,N) = \emptyset$] then \\
  $\quad$ $\quad$ for all $I \in \iPart(P,i,N)$;\\
  $\quad$ $\quad$ $\quad$ if [$I \cap \Int \neq \emptyset$] then \\
  $\quad$ $\quad$ $\quad$ $\quad$ pick $a \in I \cap \Int$;\\
  $\quad$ $\quad$ $\quad$ $\quad$ $\testpoints(P,i,N) := \testpoints(P,i,N) \cap \{a\}$;\\
  $\quad$ $\quad$ $\quad$ if [$I \setminus \Int \neq \emptyset$] then \\
  $\quad$ $\quad$ $\quad$ $\quad$ pick $a \in I \setminus \Int$;\\
  $\quad$ $\quad$ $\quad$ $\quad$ $\testpoints(P,i,N) := \testpoints(P,i,N) \cap \{a\}$;\\
  $\quad$ $\quad$ for all $(R,k) \in \connectedArgs(P,i,N)$ with $|\aval(P,i,N)| = \infty$\\
  $\quad$ $\quad$ $\quad$ $\testpoints(R,k,N) := \testpoints(P,i,N)$;\\
  for all predicates $P$ and argument positions $i$ for $P$\\
  $\quad$ $\beta(P,i) := \testpoints(P,i,N) \cap \sort(P,i)^{\inta}$;
\end{tabular}
\end{center}

\subsubsection{Hammering a $\HBS(\SB)$ clause set into an $\HBS$ clause set}

\begin{center}
\begin{tabular}{l}
  $\text{HammerClauses}(N,\beta)$\\
  $\Pi_S := \emptyset$;\\
  $\tren_N(N) := \emptyset$; $\tfacts(N,\beta) := \emptyset$; $\sfacts(N,\beta) := \emptyset$;\\
  for all Horn clauses $\Lambda \parallel \Delta \rightarrow H \in N$\\
  $\quad$ $\Delta_T := \emptyset$; $\Delta_S := \emptyset$;\\ 
  $\quad$ for all variables $x$ in $\Lambda \parallel \Delta \rightarrow H$\\
  $\quad$ $\quad$ let $(Q_1,j_1), \ldots, (Q_1,j_m)$ be connected to $x$ in $\Delta \rightarrow H$;\\  
  $\quad$ $\quad$ $S(x) := \sort(x) \cap \bigcap_k \beta(Q_k,j_k)$;\\
  $\quad$ for all $\lambda \in \Lambda$\\
  $\quad$ $\quad$ let $x_1,\ldots,x_m$ be the variables in $\lambda$;\\
  $\quad$ $\quad$ let $P_{\lambda}$ be a fresh predicate symbol of arity $m$;\\
  $\quad$ $\quad$ $\Delta_T := \Delta_T \cup \{P_{\lambda}(x_1,\ldots,x_m)\}$;\\
  $\quad$ $\quad$ for all $(a_1,\ldots,a_m) \in S(x_1) \times \ldots \times S(x_m)$\\
  $\quad$ $\quad$ $\quad$ if[$\lambda \cdot \{x_1 \mapsto a_1, \ldots, x_m \mapsto a_m\}$ evaluates to \texttt{true}] \\
  $\quad$ $\quad$ $\quad$ $\quad$ $\tfacts(N,\beta) := \tfacts(N,\beta) \cup \{P_{\lambda}(a_1,\ldots,a_m)\}$;\\  
  $\quad$ if[$H = P(t_1,\ldots,t_n)$] then \\
  $\quad$ $\quad$ for all $t_i$ that are variables\\
  $\quad$ $\quad$ $\quad$ $S := \sort(P,i);$\\
  $\quad$ $\quad$ $\quad$ if [$x$ does not appear in $\Lambda$ and $\Delta$ or\\
  $\quad$ $\quad$ $\quad$ \hspace*{3ex} ($x$ has sort integer and only appears in argument positions\\
  $\quad$ $\quad$ $\quad$ \hspace*{5ex} $(Q,j)$ in $\Delta$ with sort real)] then \\
  $\quad$ $\quad$ $\quad$ $\quad$ $\Delta_S := \Delta_S \cup \{Q_{(P,i,S)}(t_i)\}$;\\
  $\quad$ $\quad$ $\quad$ $\quad$ if [$Q_{(P,i,S)} \not \in \Pi_S$] then\\
  $\quad$ $\quad$ $\quad$ $\quad$ $\quad$ $\Pi_S := \Pi_S \cup \{Q_{(P,i,S)} \}$;\\
  $\quad$ $\quad$ $\quad$ $\quad$ $\quad$ for all $a \in \beta(P,i)$ that belong to sort $S$\\
  $\quad$ $\quad$ $\quad$ $\quad$ $\quad$ $\quad$ $\sfacts(N,\beta) := \sfacts(N,\beta) \cup \{Q_{(P,i,S)}(a)\}$;\\
  $\quad$ $\tren_N(N) := \tren_N(N) \cup \{(\Delta_T,\Delta_S,\Delta \rightarrow H)\}$;\\
  return ($\tren_N(N)$; $\tfacts(N,\beta)$; $\sfacts(N,\beta)$);
\end{tabular}
\end{center}

\subsubsection{Hammering a universal conjecture into an $\HBS$ clause set}

\begin{center}
\begin{tabular}{l}
  $\text{HammerUnivConjecture}(\forall x_1,\ldots,x_n.Q(x_1,\ldots,x_n),\beta,\eta)$\\
  $\Delta_{\phi} := \emptyset$;\\
  for all variables $x_i$\\
  $\quad$ $S(x_i) := \beta(Q,i)$;\\
  $\quad$ if [$\sort(Q,i)^{\inta} \neq \bigcup_{a \in \beta(Q,i)} \eta(Q,i,a)$] then \\
  $\quad$ $\quad$ return $\emptyset$;\\
  for all $(a_1,\ldots,a_n) \in S(x_1) \times \ldots \times S(x_n)$\\
  $\quad$ $\Delta_{\phi} := \Delta_{\phi} \cup \{Q(a_1,\ldots,a_n)\}$;\\  
  return $\{\Delta_{\phi} \rightarrow \bot\}$;
\end{tabular}
\end{center}


\else
\vfill

{\small\medskip\noindent{\bf Open Access} This chapter is licensed under the terms of the Creative Commons\break Attribution 4.0 International License (\url{http://creativecommons.org/licenses/by/4.0/}), which permits use, sharing, adaptation, distribution and reproduction in any medium or format, as long as you give appropriate credit to the original author(s) and the source, provide a link to the Creative Commons license and indicate if changes were made.}

{\small \spaceskip .28em plus .1em minus .1em The images or other third party material in this chapter are included in the chapter's Creative Commons license, unless indicated otherwise in a credit line to the material.~If material is not included in the chapter's Creative Commons license and your intended\break use is not permitted by statutory regulation or exceeds the permitted use, you will need to obtain permission directly from the copyright holder.}

\medskip\noindent\includegraphics{cc_by_4-0.eps}

\fi

\end{document}